\newtheorem{theorem}{Theorem}[section]
\newtheorem{cor}[theorem]{Corollary}
\newtheorem{lem}[theorem]{Lemma}
\newtheorem{prop}[theorem]{Proposition}
\newtheorem{thm}[theorem]{Theorem}
\theoremstyle{definition}
\newtheorem{nott}[theorem]{Notation}
\theoremstyle{remark}
\newtheorem{rem}{Remark}[section]
\newcommand{\nc}{\newcommand}
\nc\pa{{\partial}}
\nc\beq{\begin{equation}}
\nc\eeq{\end{equation}}
\nc\ii{{\rm i}}
\nc\ol{\overline} \nc\ul{\underline} \nc\wt{\widetilde}
\nc{\z}{\zeta}
\nc\ve{{\varepsilon}}
\nc\mbn{{\mathsf{m}}_{\bn}}
\nc\tmbn{{\tilde{\mathsf{m}}}_{\bn}}
\nc{\BZ}{{\mathbb Z}} 
\nc{\NN}{{\mathbb N}} 
\nc{\BC}{{\mathbb C}}
\nc{\BR}{{\mathbb R}}
\nc{\BP}{{\mathbb P}}
\nc{\TT}{{\mathbb T}} 
\nc{\CP}{{\mathbb {CP}}}
\nc{\bh}{{\mathfrak h}} 
\nc{\er}{{\mathfrak e}} 
\nc{\fr}{{\mathfrak f}} 
\nc{\rz}{{\rm z}}
\nc{\tz}{{\tilde{\rm z}}}
\nc{\A}{{\mathcal A}} 
\nc{\bA}{{\mathbb A}} 
\nc{\bL}{{\mathbb L}} 
\nc{\be}{\mathbf{e}}
\nc{\bP}{{\mathbb P}}
\nc\U{{\mathfrak U}}  
\nc\FF{{\mathfrak F}}
\nc\D{{\mathfrak D}}  
\nc\dd{{\mathfrak d}}
\nc{\F}{{\mathcal F}} \nc{\N}{{\mathcal N}} \nc{\Aa}{{\mathcal A}}
\nc{\E}{{\mathcal E}} \nc{\sS}{{\mathbb S}} \nc{\K}{{\mathcal K}}
\nc{\Oo}{{\mathcal O}} \nc{\M}{{\mathcal M}} \nc{\PP}{{\mathcal P}}
\nc{\X}{{\Xi}}
\newcommand{\ep}{\ve}
\newcommand{\bJ}{\bf J}
\newcommand{\mf}{\rm{m}}
\newcommand\n{\mathfrak n}
\newcommand{\gl}{\mathfrak{gl}}
\newcommand{\ssl}{\mathfrak{sl}}
\newcommand{\g}{\mathfrak{g}}
\newcommand{\Hom}{\mathrm{Hom}}
\newcommand{\Res}{\mathrm{Res}}
\newcommand{\Inst}{\mathrm{inst}}
\newcommand{\inst}{\mathrm{inst}}
\newcommand{\KZ}{\mathrm{KZ}}
\newcommand{\Cas}{\mathrm{Cas}}
\newcommand {\CalA} {\mathcal A}
\newcommand {\CalC} {\mathcal C}
\newcommand {\sF} {\tt F}
\newcommand {\se} {\tt e}
\newcommand {\CalD} {\mathcal D}
\newcommand {\CalG} {\mathcal G}
\newcommand {\CalH} {\mathcal H}
\newcommand {\CalI} {\mathcal I}
\newcommand {\CalL} {\mathcal L}
\newcommand {\CalM} {\mathcal M}
\newcommand {\CalN} {\mathcal N}
\newcommand {\CalO} {\mathcal O}
\newcommand {\CalS} {\mathcal S}
\newcommand {\CalU} {\mathcal U}
\newcommand {\CalV} {\mathcal V}
\newcommand {\CalX} {\mathcal X}
\newcommand {\qe} {\mathfrak q}
\nc{\bp}{{\mathbf{p}}} 
\nc{\bq}{{\mathbf{q}}} 
\nc{\ba}{{\mathbf{a}}}
\nc{\bm}{{\mathbf{m}}}
\nc{\bw}{{\mathbf{w}}}
\nc{\bx}{{\mathbf{x}}}
\nc{\bb}{{\mathbf{b}}} 
\nc{\bc}{{\mathbf{c}}} 
\nc{\bC}{{\mathbf{C}}}
\nc{\bR}{{\mathbf{R}}}
\nc{\hh}{{\mathbf{h}}} 
\nc{\xx}{{\mathbf{x}}} 
\nc{\yy}{{\mathbf{y}}}
\nc{\bj}{{\mathbf{j}}} 
\nc{\bU}{{\mathbf{U}}}
\nc{\sd}{\mathsf{d}}
\nc{\bs}{\mathbf{s}}
\nc{\bz}{\mathbf{z}}
\nc{\bn}{\mathbf{n}}
\nc{\iso}{{\stackrel{\sim}{\longrightarrow}}}
\nc{\bnu}{\boldsymbol{\nu}}
\begin{document}

\title[Surface defects in gauge theory and KZ]{Surface defects in gauge theory and KZ equation}

\author{Nikita Nekrasov}

\address{Simons Center for Geometry and Physics, 
Stony Brook University, Stony Brook NY 11794-3636, USA, 
also, on leave of absence from:
CAS Skoltech and IITP RAS, Moscow, Russia}
\email{nikitastring@gmail.com}

\author{Alexander Tsymbaliuk}

\address{Department of Mathematics, 
 Purdue University, West Lafayette, IN 47907-2067, USA}
\email{sashikts@gmail.com}


\begin{abstract}
We study the regular surface defect in the $\Omega$-deformed four-dimensional supersymmetric 
gauge theory with gauge group $SU(N)$ with $2N$ hypermultiplets in fundamental representation.  
We prove its vacuum expectation value obeys the Knizhnik-Zamolodchikov equation for 
the $4$-point conformal block of the $\widehat{{\ssl}}_{N}$-current algebra, originally introduced 
in the context of two-dimensional conformal field theory. The level and the vertex operators 
are determined by the parameters of the $\Omega$-background and the masses of the hypermultiplets; 
the cross-ratio of the $4$ points is determined by the complexified gauge coupling. 
We clarify that in a somewhat subtle way the branching rule is parametrized by the Coulomb moduli. 
This is an example of the BPS/CFT relation. 
\end{abstract}

\maketitle


\section{Introduction}

The rich mathematics of quantum field theory has a remarkable feature of admitting, to
some extent, an analytic continuation in various parameters, such as momenta, spins etc.
This feature is best studied in the examples of two-dimensional conformal field theories,
where one can observe almost with a naked eye that the building blocks of the correlation
functions are analytic in the parameters, such as the central charges, conformal
dimensions, weights, spins and so on, cf.~\cite{Zamolodchikov:1995aa}. 
Some formulae admit analytic continuation in the level $k$ of the current algebra, 
cf.~\cite{KZ}. The analytic continuation offers some glimpses of 
the Langlands duality~\cite{AG} $(k+h^{\vee})(k^{\vee}+h) = 1$, which suggests an
identification of the quantum group parameter $q$ with the modular parameter 
${\exp} (2\pi \ii \tau)$ of some elliptic curve~\cite{EN}. These observations
solidified as soon as the connection between the $S$-duality of four-dimensional 
supersymmetric theories and the modular invariance of two-dimensional conformal field theories
was observed~\cite{Vafa:1994tf}. Localization computations in supersymmetric gauge 
theories~\cite{Nekrasov:2002qd,Losev:2003py,Nekrasov:2003rj,NekBPSCFT} showed that
the correlation functions of selected observables coincide with conformal blocks of some
two-dimensional conformal field theories, or, more generally, are given by the matrix
elements of representations of some infinite-dimensional algebras, such as Kac-Moody,
Virasoro, or their $q$-deformations, albeit extended to the complex domain of parameters,
typically quantized in the two-dimensional setup. In~\cite{Nekrasov:2002qd}, this
phenomenon was attributed to the chiral nature of the tensor field propagating on 
the worldvolume of the fivebranes. The fivebranes ($M5$ branes in $M$-theory and 
$NS5$ branes in the $IIA$ string theory) were used in~\cite{Klemm:1996bj,Witten5} to
engineer, in string theory setup, the supersymmetric systems whose low energy 
is described by ${\CalN}=2$ supersymmetric gauge theories in four dimensions. 
This construction was extended and generalized in~\cite{Gaiotto:2009we}.  
This correspondence, named \emph{the BPS/CFT correspondence} in~\cite{NN2004},  
has been supported by a large class of very detailed examples 
in~\cite{NekBPSCFT,Alday:2009aq, Alday:2010vg}, and more recently 
in~\cite{KT, Kimura:2015rgi, Haouzi:2019jzk, Haouzi:2020yxy}.

Finally, in~\cite{Witten6, Witten:2011zz}, the relation of the quantum group parameter $q$ 
with the elliptic curves has been brought into the familiar context of the relation 
of the ${\CalN}=4$ super-Yang-Mills theory to elliptic curves. Hopefully, 
with this understanding of the analytically continued Chern-Simons theory, 
the (quasi)-modularity conjectures of~\cite{Zth} could be tested. 

In this paper, we shall be studying a particular corner of that theoretical landscape: 
the $SU(N)$ gauge theory with $2N$ fundamental hypermultiplets. 
In the BPS/CFT correspondence, it is associated with a zoo of two-dimensional 
conformal theories living on a $4$-punctured sphere, all related to the
$\widehat{\ssl}_{N}$ current algebra, either directly, or through the
Drinfeld-Sokolov reduction, producing the $W_{N}$-algebra~\cite{ZW}, depending on 
the supersymmetric observables one uses to probe the four-dimensional theory. 
Two observables are of interest for us. First, the supersymmetric partition function 
${\bf Z} = {\bf Z}( {\ba}, {\bm}, {\ep}_{1}, {\ep}_{2}; {\qe})$ on ${\BR}^4$, which is 
a function of the vacuum expectation value 
${\ba}= {\rm diag} \left( a_{1}, \ldots , a_{N} \right)$ of the scalar in the vector multiplet, 
the masses ${\bm} = \left\{ m_{1}, m_{2}, \ldots , m_{2N-1}, m_{2N} \right\}$,
the exponentiated complexified gauge coupling ${\qe} = {\exp} (2\pi \ii \tau)$, 
\begin{equation*}
  {\tau} = \, \frac{\vartheta}{2\pi} +  \frac{4{\pi}{\ii}}{e^{2}}
\end{equation*}
and the parameters ${\bf\ep}=\left( {\ep}_{1}, {\ep}_{2} \right)$ of 
the $\Omega$-deformation. The latter are the equivariant parameters of 
the maximal torus $U(1) \times U(1)$ of the Euclidean rotation group $Spin(4)$. 
In the complex coordinates $(z_{1}, z_{2})$ on ${\BC}^2 \approx {\BR}^4$, 
the rotational symmetry acts by 
  $(z_{1}, z_{2}) \mapsto 
   ( e^{{\ii}t{\ep}_{1}} \, z_{1}\, , \,  e^{{\ii}t{\ep}_{2}} \, z_{2} )$. 
Exchanging ${\ep}_{1} \leftrightarrow {\ep}_{2}$ is part of the $Spin(4)$ Weyl group,
hence it is a symmetry of $\bf Z$. The second observable is the partition function
${\bf\Psi}$ of the regular surface defect which breaks the gauge group down to its 
maximal torus $U(1)^{N-1}$ along the surface, which we shall take to be the $z_2=0$ plane. 
This partition function depends on all the parameters ${\ba}, {\bm}, {\bf\ep}, {\qe}$ 
that the bulk partition function $\bf Z$ depends on and, in addition, it depends on the parameters 
\begin{equation*}
  {\bw} =  (w_{0}:w_{1}: \dots : w_{N-1}) \in {\mathbb{CP}}^{N-1}
\end{equation*}
of a two-dimensional theory the defect supports. 
The physics and mathematics setup of the problem are explained in the Parts IV, V 
of~\cite{NekBPSCFT}, which the reader may consult for motivations and orientation.
However, our exposition is self-contained as a well-posed mathematical problem, 
which we introduce presently. 

Our main result is the proof of a particular case of the BPS/CFT conjecture~\cite{NN2004}: 
the vacuum expectation value $\langle\, {\CalS} \, \rangle$ of the surface defect obeys 
the Knizhnik-Zamolodchikov equation~\cite{KZ}, specifically the equation obeyed by the 
$\left( \widehat{\ssl}_{N} \right)_{k}$ current algebra conformal block
\beq
  {\Phi} =  \Big\langle \, 
    {\bf V}_{1} (0) {\bf V}_{2} ({\qe})
    {\bf V}_{3} (1) {\bf V}_{4} ({\infty}) \, 
  \Big\rangle^{\bf a}
\label{eq:wzwconfblock}
\eeq
with the vertex operators corresponding to irreducible infinite-dimensional representations of $\ssl_{N}$. 
More specifically, the vertex operators at $0$ and $\infty$ correspond to the generic lowest weight 
${\CalV}_{\vec\nu}$ and highest weight ${\tilde{\CalV}}_{\vec{\tilde\nu}}$ Verma modules, 
while the vertex operators at $\qe$ and~$1$ correspond to the so-called \emph{twisted HW-modules} 
${\CalH}_{\mf}^{\vec \mu}, {\tilde{\CalH}}_{\tilde{\mf}}^{\vec{\tilde\mu}}$. The subscripts 
${\vec\nu},{\vec{\tilde\nu}} \in {\BC}^{N-1}$ and ${\mf}, {\tilde \mf} \in {\BC}$ determine 
the values of the Casimir operators, in correspondence with the $2N$ masses $\bm$ and one of the $\Omega$-background 
parameters $\ve_1$.  The superscripts ${\vec\mu}, \vec{\tilde\mu} \in {\BC}^{N-1}$ determine the so-called twists of 
the HW-modules, all defined below, which we express via $\bm, \ve_1$, and the Coulomb parameters~$\ba$. In other words, 
the Coulomb parameters determine the analogue of the ``intermediate spin'', which we indicate by placing a superscript 
$\ba$ in \eqref{eq:wzwconfblock} to label the specific fusion channel. 
We define these representations and the Knizhnik-Zamolodchikov equation~\cite{KZ} below. 

The appearance of the twisted representations is a curious fact 
not visible in the rational conformal field theories. 

   
\subsection*{Acknowledgments}

N.N.\ is grateful to A.~Okounkov, A.~Rosly, and A.~Zamolodchikov for discussions. 
A.T.\ gratefully acknowledges support from the Simons Center for Geometry and Physics 
and is extremely grateful to IHES (Bures-sur-Yvette, France) for the hospitality and wonderful 
working conditions, where some parts of the research for this paper were performed. 
The work of A.T.\ was partially supported by the NSF Grants DMS-$1502497$ and DMS-$2037602$.


\section{Basic setup in four dimensions}\label{sec Setup}

First we introduce the setup of the four-dimensional gauge theory calculation. 


\subsection{Notations}\label{ssec Notations}

We start by reviewing our notations. The reader is invited to consult~\cite{NekBPSCFT} 
for the general orientation. 

\begin{enumerate}

\item[${\bullet}$] 
The parameters of the $\Omega$-deformation: $\ve_1, \ve_2$ -- two complex parameters, 
generating the equivariant cohomology 
  $\mathrm{H}^{\bullet}_{\BC^\times\times \BC^\times}(\mathrm{pt})$.
The \emph{twist} part of the $\Omega$-deformation is $\ve=\ve_1+\ve_2$. 
The torus ${\BC^\times\times \BC^\times}$ is the complexification of 
the maximal torus of the spin cover of the rotation group $Spin(4)$.
We also define
\beq
  {\kappa} = \frac{{\ve}_{2}}{{\ve}_{1}} \, .
\label{eq:kappar}
\eeq

\item[${\bullet}$] 
The Coulomb moduli:
\beq
  \ba =  ( a_{b} )_{b =1}^{N} \equiv \left( a_1,\ldots,a_N \right) \in {\BC}^{N}
\label{eq:coulmod}
\eeq
--  the equivariant parameters of the \emph{color group}, 
in other words these are the generators of $\mathrm{H}^{\bullet}_{(\BC^\times)^N}(\mathrm{pt})$, 
on which the symmetric group $S(N)$ acts by permutations. 

\item[${\bullet}$] 
The masses: 
\beq
  \bm = (m_f)_{f=1}^{2N} \equiv \left( m_1,\ldots,m_{2N} \right) \in {\BC}^{2N}
\label{eq:masses}
\eeq
-- the equivariant parameters of the \emph{flavor group}. 
The symmetric group $S(2N)$ acts on them by permutations. 
The $S(2N)$-invariants are encoded via the polynomial
\begin{equation}\label{polynomial P} 
  P(x)=\prod\limits_{f=1}^{2N} (x-m_{f})\, .
\end{equation}

\item[${\bullet}$]
The splitting of the set of masses into the $N$ ``fundamental'' and $N$ ``anti-fundamental'' ones:
\begin{equation}
  P(x) = P^{+}(x) P^{-}(x) \, , \qquad P^{\pm}(x) = \prod_{f=1}^{N} (x-m_{f}^{\pm})\, .
\end{equation}

\item[${\bullet}$]
\emph{The lattice of equivariant weights} ${\Lambda} \subset {\BC}$ is defined by:
\begin{equation}\label{lattice}
  {\Lambda} := \BZ\cdot \ve_1\oplus \BZ\cdot \ve_2\oplus \bigoplus_{{b}=1}^N \BZ\cdot a_{b} 
               \oplus \bigoplus_{f=1}^{2N} {\BZ}\cdot m_{f}\, . 
\end{equation}
We assume that all the parameters ${\ve}_{1,2}, {\ba}, {\bm}$ are generic, 
up to the overall translation $a_{b} \mapsto a_{b} + s,\ m_{f} \mapsto m_{f} + s$, for $s \in {\BC}$. 
Thus, the rank of $\Lambda$ is at least $3N +1$. 

\medskip
Recall that the bulk theory (subject to noncommutative deformation, 
leading to instanton moduli space being partially compactified to the moduli space
${\mathfrak{M}}_{k,N}$ of charge $k$ rank $N$ framed torsion-free sheaves on ${\mathbb{CP}}^2$)
is invariant under the nonabelian symmetry group $U(2)$ of rotations, preserving the
complex structure of ${\BC}^2 \approx {\BR}^4$. The group $U(N)$ of constant
gauge transformations acts on ${\mathfrak{M}}_{k,N}$ by changing the asymptotics of
instantons at infinity. The Coulomb parameters $\ba$ represent the maximal torus of
$U(N)$; they can be viewed as local coordinates on the 
  ${\rm Spec}\, \mathrm{H}^{\bullet}_{U(N)} (\mathrm{pt}) = {\BC} [{\ba}]^{S(N)}$, 
with $S(N)$ being the Weyl group. 
Likewise, the parameters $({\ve}_{1}, {\ve}_{2})$ are acted by the Weyl group ${\BZ}_{2}$
which acts by permuting ${\ve}_{1} \leftrightarrow {\ve}_{2}$. 
The physical theory has a larger rotation symmetry group $Spin(4)$, whose Weyl group is 
${\BZ}_{2} \times {\BZ}_{2}$ but we don't see the full symmetry in the ${\bf Z}$-function.
The full symmetry is present once ${\bf Z}$ is divided by the so-called $U(1)$-factor,
having to do with decoupling of the $U(1)$-part of gauge group~\cite{Alday:2009aq}. 

Finally, the masses represent the equivariant parameters of the flavor group $SU(2N)$ 
(the physical theory has a larger flavor symmetry group, which we don't see either), hence, 
the Weyl group  $S(2N)$ symmetry making the polynomial $P(x)$ of~(\ref{polynomial P}) the good parameter. 

The surface defect we are going to study in this paper breaks both the gauge group 
$U(N)$ to its maximal torus $U(1)^N$ and the flavor group to its maximal torus $U(1)^{2N}$. 
The group $S(N) \times S(2N)$ acts, therefore, on the space of surface defects. 
In describing the specific bases in the vector space of surface defects, we keep track 
of the ordering of the Coulomb and mass parameters. 

\medskip
\item[${\bullet}$] 
The set of vertices of the Young graph $\PP$ -- the set of all Young diagrams 
($=$ partitions of nonnegative integers) $\{\lambda\}$. Then  
\begin{equation*}
  {\PP}^N = 
  \left\{\, \overline{\lambda} = (\lambda^{(1)},\ldots,\lambda^{(N)})\, \Big| \,
         {\lambda}^{(b)} \in \PP \, \ \mathrm{for}\, \ 1\leq b \leq N \, \right\}\, .
\end{equation*}

\item[${\bullet}$]
For a box $\square=(i,j)$, define its \emph{content} $c(\square)$ by:
\beq
  c(\square):=(i-1)\ve_1+(j-1)\ve_2\, .
\label{eq:cont}
\eeq

\noindent
\item[${\bullet}$]
For $\lambda\in \PP$, define:
\begin{equation}
  \chi_\lambda:=\sum_{\square\in \lambda}  e^{c(\square)}
  \qquad \mathrm{and} \qquad
  \chi^*_\lambda:=\sum_{\square\in \lambda} e^{-c(\square)}\, .
\end{equation}

\noindent
\item[${\bullet}$]
For $\overline{\lambda}\in \PP^N$, define the multiset, i.e., its elements may have
multiplicities, \emph{of tangent weights}, 
  $\{ {\rm w}_t\}_{t\in T_{\overline{\lambda}}} \subset \Lambda$ 
by the character
\beq
  \sum_{t\in T_{\overline{\lambda}}}e^{{\rm w}_{t}} :=
  \sum_{b,c=1}^N \ e^{a_{b} - a_{c}}\, 
  \Big( \, \chi^{*}_{{\lambda}^{({c})}}+e^{\ve}\cdot \chi_{\lambda^{({b})}}-
        (1-e^{\ve_1})(1-e^{\ve_2})\cdot\chi_{\lambda^{({b})}}
        {\chi}^{*}_{{\lambda}^{({c})}}\, \Big)\, .
\label{eq:tangweights}
\eeq
\begin{rem}
The \emph{duality}: 
  $\{ {\rm w}_{t} \}_{t \in T_{\overline{\lambda}}} = \{ \ve - {\rm w}_{t} \}_{t\in T_{\overline{\lambda}}}$ 
is related to the symplectic structure on the instanton moduli space and its completion ${\mathfrak{M}}_{k,N}$. 
\end{rem}

\item[${\bullet}$]
The \emph{pseudo-measure}
  $\mu={\mu}({\ba}, {\bm}, {\ve}_{1}, {\ve}_{2}; {\qe})\colon {\PP}^{N} \to {\BC}$ 
on $\PP^N$ is defined via:
\begin{equation}\label{eq:bulkmu}
\begin{split}
  & {\mu}({\ba}, {\bm}, {\ve}_{1}, {\ve}_{2}; {\qe})\, \vert_{\overline{\lambda}} \, := \, 
  \frac{1}{{\bf Z}^{\inst}}\cdot \left((-1)^N {\qe}\right)^{|{\overline\lambda}|}\cdot
  \frac{\prod\limits_{f=1}^{2N}\prod\limits_{b=1}^N\prod\limits_{\square\in \lambda^{({b})}}\,
    \left( a_{b} + c(\square) -m_f \right)}
   {\prod\limits_{t\in T_{\overline{\lambda}}} {\rm w}_{t}}
     \, =\\
  & \ \ \ \ \ \ \ \ \ \ \ \ \ \ \ \ \ \ \ \ \ \ \ \ \ \ \ \ \ \ \
  \frac{1}{{\bf Z}^{\inst}}\cdot {\qe}^{|{\overline\lambda}|}\cdot
  \frac{\prod\limits_{b=1}^N\prod\limits_{\square\in \lambda^{({b})}}\,
     \, 
    \Big( - P^{-}\left( a_{b} + c({\square}) \right) P^{+}\left( a_{b}+c(\square) \right) \Big)}
   {\prod\limits_{t\in T_{\overline{\lambda}}} {\rm w}_{t}}\, ,
\end{split}   
\end{equation}
where
  $|{\overline\lambda}| = \sum\limits_{b=1}^N |\lambda^{({b})}|$
with
  \[ |\lambda^{({b})}| = \sum_{i} {\lambda}^{({b})}_{i} \] 
denoting the total number of boxes in $\lambda^{({b})}$, and 
  ${\bf Z}^{\inst}={\bf Z}^{\inst}( {\ba},{\bm}, {\ve}_{1},{\ve}_{2}; {\qe})$ 
is the Taylor series in $\qe$ uniquely determined by the normalization
\begin{equation*}
  \sum_{\overline{\lambda}\in \PP^N} \mu\, \vert_{\overline{\lambda}} \, = \, 1\, .
\end{equation*}

\begin{rem}
The restriction ${\rm deg}\, P(x) = 2N$ comes from the convergence of ${\bf Z}^{\inst}$ for generic 
${\ba}, {\bm}, {\ve}_{1,2}$, cf.~\cite{Felder}. When working over the ring ${\BC}[[{\qe}]]$ of formal 
power series in~$\qe$, the restriction on the degree of $P(x)$, i.e., the number of masses, can be dropped.  
\end{rem}

\item[${\bullet}$]
For $\lambda\in \PP$, we call $\blacksquare\in \lambda$ a \emph{corner box}
if $\lambda\backslash \blacksquare \in \PP$ and we call $\square\notin \lambda$
a \emph{growth box} if $\lambda\sqcup \square \in \PP$. We denote by
${\partial}_{+}{\lambda}$ the set of all growth boxes of $\lambda$, and 
by ${\partial}_{-}{\lambda}$ the set of all corner boxes of ${\lambda}$. 
It is easy to check that:
\begin{equation*}
  {\#} {\partial}_{+}{\lambda} \, -\,  {\#}{\partial}_{-}{\lambda} \, = \, 1\, .
\end{equation*}

\item[${\bullet}$]
For $x\in {\BC}$, we define the function $Y(x)$ on ${\PP}^{N}$ as follows: 
its value $Y(x)\, \vert_{\overline{\lambda}}$ on ${\bar\lambda} \in {\PP}^{N}$ is equal to
\begin{equation} \label{eq:yobs}
\begin{split}
  & Y(x)\, \vert_{\overline{\lambda}}\, :=\, \prod_{b=1}^N
  \left((x-a_{b})\prod_{\square\in \lambda^{(b)}}
        \frac{(x-a_{b}-c(\square)-\ve_1)(x-a_{b}-c(\square)-\ve_2)}
             {(x-a_{b}-c(\square))(x-a_{b}-c(\square)-\ve)}\right) = \\
  & \ \ \ \ \ \ \ \ \ \ \ \ \ \ \ \ \ \ \ \ \ \ \ \ \ \ \ \ \ \ \ \ \ \ \ \ \ \
   \ \ \ \ \ \ \ \ \ \ \ \ \ \ \ \ \ \ \ \ \  \, \prod_{b=1}^N
   \frac{\prod\limits_{{\square} \in {\partial}_{+}{\lambda}^{({b})}} 
      \left( x - a_{b} - c ({\square}) \right)}
        {\prod\limits_{{\blacksquare} \in {\partial}_{-}{\lambda}^{({b})}} 
        \left( x - a_{b} - {\ve} - c ({\blacksquare}) \right)}\, ,
\end{split}   
\end{equation}
the second line being obtained from the first one by the simple inspection of the cancelling common factors. 

\item[${\bullet}$]
For $x \in {\BC}$, we define the function ${\CalX}(x)$ on ${\PP}^N$, 
called the \emph{fundamental $qq$-character}, by specifying its value 
${\CalX}(x)\, \vert_{\overline{\lambda}}$ on ${\bar\lambda} \in {\PP}^{N}$ as follows: 
\begin{equation}\label{fundamental qq-char}
  {\CalX} (x)\, \vert_{\overline{\lambda}}\, :=\ 
  Y (x+{\ve})\, \vert_{\overline{\lambda}}\, +\, {\qe}\, \frac{P(x)}{Y(x)\, \vert_{\overline{\lambda}}}\, .
\end{equation}

\item[${\bullet}$]
For a pseudo-measure $\widetilde{\mu}\colon \PP^N\to \BC$ and a function $g\colon \PP^N\to \BC(x)$, 
the \emph{average} $\langle\, g(x)\, \rangle_{\widetilde{\mu}}$ is defined via:
\begin{equation}\label{average}
  \Big\langle g(x) \Big\rangle_{\widetilde{\mu}}\, :=
  \sum_{\overline{\lambda}\in \PP^N} 
  \widetilde{\mu}\,\vert_{\overline{\lambda}}\cdot g(x)\,\vert_{\overline{\lambda}}\, .
\end{equation}

\end{enumerate}


\subsection{Dyson-Schwinger equation}\label{ssec Regularity}

{}The following is the key property of ${\CalX} \colon {\PP}^{N}\to {\BC}(x)$ 
of~(\ref{fundamental qq-char}):

\begin{prop}\label{regularity 1}
The average $\langle\, {\CalX}(x)\, \rangle_{\mu}$ is a regular function of $x$.
\end{prop}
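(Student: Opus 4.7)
The plan is to show that the sum defining $\langle \CalX(x)\rangle_\mu$ has no poles in $x$ by a partition-by-partition pole analysis combined with a box-adding/removing involution on $\PP^N$. For a fixed $\overline{\lambda}\in \PP^N$, the second line of \eqref{eq:yobs} shows that $Y(x+\ve)|_{\overline{\lambda}}$ has simple poles at $x=a_b+c(\blacksquare)$ for $\blacksquare\in \partial_-\lambda^{(b)}$, while $\qe P(x)/Y(x)|_{\overline{\lambda}}$ has simple poles at $x=a_b+c(\square)$ for $\square\in \partial_+\lambda^{(b)}$. By the genericity assumption on $\ba,\ve_1,\ve_2$, the lattice $\Lambda$ separates these points across different pairs $(b,\square)$. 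So I only need to show, for each $(b_0,s_0)$, that
\[
  \sum_{\overline{\lambda}:\, s_0\in \partial_+\lambda^{(b_0)}}
   \mu|_{\overline{\lambda}}\cdot
   \mathop{\rm Res}_{x=a_{b_0}+c(s_0)}\frac{\qe P(x)}{Y(x)|_{\overline{\lambda}}}
  \ +\
  \sum_{\overline{\lambda}:\, s_0\in \partial_-\lambda^{(b_0)}}
   \mu|_{\overline{\lambda}}\cdot
   \mathop{\rm Res}_{x=a_{b_0}+c(s_0)} Y(x+\ve)|_{\overline{\lambda}}\, =\, 0.
\]

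Next I would set up the canonical bijection between the two summation sets: a partition $\overline{\lambda}$ with $s_0\in \partial_+\lambda^{(b_0)}$ corresponds to $\overline{\lambda}':=\overline{\lambda}\sqcup s_0$ (adding $s_0$ to the $b_0$th component), and then $s_0\in \partial_-(\lambda')^{(b_0)}$. So pairwise cancellation of residues will suffice. For such a matched pair the two residues are, by the second line of \eqref{eq:yobs}, elementary products over the growth/corner sets of $\lambda^{(b)}$ and $(\lambda')^{(b)}$, and they differ in a controlled way — essentially only through the boxes adjacent to $s_0$.

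The technical heart is then the identity $\mu|_{\overline{\lambda}'}/\mu|_{\overline{\lambda}}$ matches the residue ratio. From \eqref{eq:bulkmu} this ratio is $-\qe\cdot P(a_{b_0}+c(s_0))$ times the ratio of tangent-weight products. Writing
\[
  \chi_{(\lambda')^{(b_0)}}-\chi_{\lambda^{(b_0)}}=e^{c(s_0)},
  \qquad
  \chi^*_{(\lambda')^{(b_0)}}-\chi^*_{\lambda^{(b_0)}}=e^{-c(s_0)},
\]
and plugging into \eqref{eq:tangweights}, the change in $\sum e^{{\rm w}_t}$ is a finite $\BZ[e^{\pm \ve_1},e^{\pm \ve_2},e^{a_b-a_c}]$-combination that, after using the telescoping identity $(1-e^{\ve_1})(1-e^{\ve_2})\chi_{\mu}=e^\ve + \sum_{\blacksquare\in \partial_-\mu}e^{c(\blacksquare)+\ve}-\sum_{\square\in \partial_+\mu}e^{c(\square)}-1$ applied to $\mu=\lambda^{(b)}$ and $\mu=(\lambda')^{(b)}$, collapses to a product over growth and corner boxes of the partitions $\lambda^{(b)}$ for $1\le b\le N$ evaluated at $a_b + c(\bullet)-a_{b_0}-c(s_0)$. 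This product is precisely what $Y(a_{b_0}+c(s_0)+\ve)|_{\overline{\lambda}'}/Y'(a_{b_0}+c(s_0))|_{\overline{\lambda}}$ equals, up to a sign accounting for the $(-1)^N$ and the duality ${\rm w}_t \leftrightarrow \ve - {\rm w}_t$ of \eqref{eq:tangweights}.

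The main obstacle is this combinatorial identity, which is the classical ``cross-cancellation'' behind $qq$-characters; I would keep careful track of signs using the duality remark following \eqref{eq:tangweights}, which exchanges $a_b-a_c\leftrightarrow -(a_b-a_c)+\ve$ and is exactly what is needed to convert the denominator of the residue of $Y(x+\ve)|_{\overline{\lambda}'}$ into the numerator of the residue of $\qe P(x)/Y(x)|_{\overline{\lambda}}$. Once this identity is verified, the pairwise cancellation is immediate and regularity of $\langle \CalX(x)\rangle_\mu$ follows.
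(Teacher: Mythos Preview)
Your proposal is correct and follows essentially the same approach as the paper's proof: identify the simple poles of the two summands of $\CalX(x)|_{\overline\lambda}$, set up the box-adding/removing bijection on $\PP^N$, and verify pairwise cancellation of residues. The paper simply asserts the key residue identity \eqref{equality of residues} as ``a straightforward computation'', whereas you outline how to carry it out via the tangent-weight character and the telescoping identity for $(1-e^{\ve_1})(1-e^{\ve_2})\chi_\mu$; this extra detail is welcome but does not change the strategy.
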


This is the simplest case of the general result on the $qq$-characters as 
established in~\cite{NekBPSCFT}. For completeness of our exposition, an elementary proof 
is presented in Appendix~\ref{Appendix A0}.

   
\subsection{An orbifold version}\label{ssec Orbifold version}

As explained in Part~III of~\cite{NekBPSCFT}, there is a very important 
$\BZ_N$-equivariant counterpart of the above story. It is defined in several steps. 

First, we change the notations: 
\begin{equation*}
  a_{b} \mapsto {\tilde a}_{b}\, ,\quad 
  m_{f}^{\pm} \mapsto {\tilde m}_{f}^{\pm}\, ,\quad
  ({\ve}_{1}, {\ve}_{2}) \mapsto ({\ve}_{1}, {\tilde\ve}_{2})\, ,\quad
  \mathrm{so\ that}\quad 
  \ve\mapsto {\tilde \ve}:=\ve_1+{\tilde \ve}_2\, .
\end{equation*}
Next, we introduce the $\BZ_N$-grading ${\lambda}\mapsto {\mathfrak S}_{\lambda}\in {\BZ}_{N}$ 
of the lattice $\Lambda$ via: 
\begin{equation}
\begin{split}
  & {\lambda} = k_{1} {\ve}_{1} + k_{2} {\tilde\ve}_{2} + 
    \sum_{b} k^{a}_{b}\, {\tilde a}_{b} + 
    \sum_{f} k^{m^{+}}_{f} {\tilde m}_{f}^{+} + \sum_{f} k^{m^{-}}_{f} {\tilde m}_{f}^{-} \, \mapsto \\
  & \ \ \ \ \ \ \ \ \ \  
    {\mathfrak S}_{\lambda} :=  k_{2} + 
                     \sum_{\omega \in {\BZ}_{N}} {\omega} \left( \sum_{b \in A_{\omega}} k^{a}_{b} \, +
                     \sum_{f \in F_{\omega}^{+}} k^{m^{+}}_{f}  +
                     \sum_{f \in F_{\omega}^{-}} k^{m^{-}}_{f}  \right) \, {\rm mod}\, N\, , 
\end{split}
\end{equation}
for some partitions 
\begin{equation*}
  \Big\{ 1, \ldots , N \Big\} \, = 
  \bigsqcup\limits_{\omega\in \BZ_N}  A_{\omega} \, = 
  \bigsqcup\limits_{\omega\in \BZ_N}  F_{\omega}^{\pm}
\end{equation*}
of the sets of the Coulomb moduli and the fundamental/anti-fundamental masses. 
Such $\BZ_N$-grading is also often called an \emph{$N$-coloring}. 
We define: 
\begin{equation*}
  P_{\omega}^{\pm}(x) \, = \prod_{f\in F_{\omega}^{\pm}} (x - {\tilde m}_{f}^{\pm})\, .
\end{equation*}
The following depends on a choice of a section ${\BZ}_{N}\to {\BZ}$. We send 
\beq
  {\BZ}_{N}\ni \omega \, \mapsto\, 0 \leq {\omega} < N\, ,
\label{eq:lift}
\eeq
thus, identifying $\BZ_N$ with $\{0,\ldots,N-1\}$, as a set.{}
An $N$-coloring is called \emph{regular} iff 
\begin{equation*}
  \# A_{\omega} \, =\, \# F_{\omega}^{+}\, =\, \# F_{\omega}^{-} \, =\, 1
  \, \ \mathrm{for\ all}\, \ \omega\in \BZ_N\, .
\end{equation*}
For a regular $N$-coloring, the $\omega$-colored masses are packaged into a degree two polynomial
\beq
  P_{\omega}(x) = P_{\omega}^{+}(x) P_{\omega}^{-} (x) =: 
  (x - {\ve}_{1}{\mu}_{\omega}-{\omega} {\tilde\ve}_{2})^{2} - {\ve}_{1}^{2} {\delta\mu}_{\omega}^{2}\, . 
\label{eq:massparam}
\eeq

\noindent
Also, for a regular $N$-coloring, assuming \eqref{eq:lift}, we set: 
\beq
  {\alpha}_{\omega} \, := - {\omega}{\tilde\kappa} + \frac{1}{\ve_1} \sum_{b \in A_{\omega}} {\tilde a}_{b} \, ,
\label{eq:coulpar}
\eeq
where
\beq
  {\tilde\kappa} = \frac{\kappa}{N} \, .
\label{eq:tkap}
\eeq
We shall also need a few more new notations. 
\begin{enumerate}
\item[${\bullet}$] 
For every ${\omega}\in {\BZ}_N$, define the observable $k_{\omega}\colon {\PP}^{N} \to \BZ_{\geq 0}$ by:
\begin{equation}
  k_{\omega}\,\vert_{\overline{\lambda}} \, :=
  \sum_{b=1}^N \sum_{\square\in \lambda^{(b)}} 
  \delta_{{\mathfrak S}_{{\tilde a}_{b}+ {\tilde c}(\square)}}^{\omega}\, ,
\end{equation}
where ${\tilde c}(i,j) := (i-1){\ve}_{1} + (j-1){\tilde\ve}_{2}$, cf.~\eqref{eq:cont}, 
and $\delta_i^j \equiv {\delta}_{i,j}$ is the Kronecker delta. 

\item[${\bullet}$] 
The \emph{fractional couplings}:
\begin{equation}\label{fractional couplings}
  \overline{\qe} = (\qe_{\omega})_{\omega\in \BZ_N} \equiv (\qe_0,\qe_1,\ldots,\qe_{N-1})\in {\BC}^{N}\, .
\end{equation}

\item[${\bullet}$] 
Given $\overline{\qe}$ of~(\ref{fractional couplings}), define the  observable 
${\underline{\qe}}\colon {\PP}^{N} \to {\BC}$, called the \emph{fractional instanton factor}, 
as follows:
\beq
  {\underline{\qe}}\,\vert_{\overline{\lambda}} \, := 
  \prod\limits_{b=1}^N \prod\limits_{\square\in \lambda^{(b)}} 
  {\qe}_{{\mathfrak S}_{{\tilde a}_{b}+ {\tilde c}(\square)}} = 
  \prod\limits_{\omega\in \BZ_N} {\qe}_{\omega}^{k_{\omega}\vert_{\overline{\lambda}}}\, . 
\label{eq:frcoup}
\eeq

\item[${\bullet}$] 
The pseudo-measure 
  $\mu^{\rm orb} = 
   \mu^{\rm orb}({\tilde\ba},{\tilde\bm},{\ve}_{1},{\tilde\ve}_{2};{\overline{\qe}}) 
   \colon {\PP}^{N} \to {\BC}$ 
on ${\PP}^{N}$ is defined via:
\begin{equation}
\begin{split}
\label{orbifold measure}
  & \mu^{\rm orb} ( {\tilde\ba}, {\tilde\bm} , {\ve}_{1}, {\tilde\ve}_{2}; {\overline{\qe}}) 
  \,\vert_{\overline{\lambda}}\, := \\
  & \frac{{\underline{\qe}}\,\vert_{\overline{\lambda}}}
       {{\bf\Psi}^{\inst}} \cdot
  \frac{\prod\limits_{f=1}^{N} \prod\limits_{b=1}^N \prod\limits_{\square\in \lambda^{({b})}}
        \left( {\tilde a}_{b}+{\tilde c}({\square})-{\tilde m}_{f}^{+}\right)^
             {{\delta}_{{\mathfrak S}_{{\tilde a}_{b}+{\tilde c}({\square})-{\tilde m}_{f}^{+}}}^{0}}
             \left( -{\tilde a}_{b}-{\tilde c}({\square})+{\tilde m}_{f}^{-}\right)^
             {{\delta}_{{\mathfrak S}_{{\tilde a}_{b}+{\tilde c}({\square})-{\tilde m}_{f}^{-}}}^{0}}}
       {\prod\limits_{t\in T_{\overline{\lambda}}} {\rm w}_{t}^{\delta_{{\mathfrak S}_{{\rm w}_{t}}}^{0}}}\, ,
\end{split}
\end{equation}
where the tangent weights $\{{\rm w}_{t}\}_{t\in T_{\overline{\lambda}}}$ 
are defined via~\eqref{eq:tangweights} with the substitution 
  $a_{b} \mapsto {\tilde a}_{b}$, $\ve_2 \mapsto {\tilde \ve}_{2}$, 
and the partition function
  ${\bf\Psi}^{\inst} = 
   {\bf\Psi}^{\inst} ( {\tilde\ba}, {\tilde\bm}, {\ep}_{1}, {\tilde\ep}_{2}; \overline{\qe})$ 
is the formal power series\footnote{One can show that this power series converges 
when all $|{\qe}_{\omega}|< 1$, uniformly on compact sets in the complex domain 
${\tilde a}_{b} - {\tilde a}_{c} + i{\ve}_{1}  + j{\tilde \ve}_{2}  \neq 0$ for all $i,j \geq 1$.} 
in $\left( {\qe}_{0}, \ldots , {\qe}_{N-1} \right)$ uniquely determined by the normalization
\begin{equation}\label{orbifold normalization}
  \sum_{\overline{\lambda}\in \PP^N}\, \mu^{\rm orb}\, \vert_{\overline{\lambda}} \, = \, 1\, .
\end{equation}

\item[${\bullet}$] 
For every ${\omega}\in \BZ_N$, define the ${\BC}(x)$-valued observable 
$Y_{\omega}\colon {\PP}^{N} \to {\BC}(x)$ via:
\begin{equation}\label{colored Y-observable}
\begin{split}
  & Y_{\omega}(x)\, \vert_{\overline{\lambda}}\, := 
  \prod\limits_{b=1}^N \, 
  \left( \left( x - {\tilde a}_{b} \right)^{{\delta}_{{\mathfrak S}_{{\tilde a}_{b}}}^{\omega}} 
  \times \right. \\
  &  \ \ \ \ \ \ \ \ \ \ \ \ \ 
  \left. \prod\limits_{\square\in \lambda^{(b)}} 
    \left(\frac{x-{\tilde a}_{b}-{\tilde c}({\square})-{\ep}_{1}}
               {x-{\tilde a}_b-{\tilde c}({\square})}\right)^
        {\delta_{{\mathfrak S}_{{\tilde a}_{b}+{\tilde c}({\square})}}^{{\omega}}}
   \left(\frac{x-{\tilde a}_{b}-{\tilde c}({\square})-{\tilde\ve}_{2}}
              {x-{\tilde a}_b-{\tilde c}({\square})-{\tilde\ve}}\right)^
              {\delta_{{\mathfrak S}_{{\tilde a}_{b}+{\tilde c}({\square})}}^{{\omega}-1}}\right).
\end{split}
\end{equation}

\item[${\bullet}$] 
For every  ${\omega} \in {\BZ}_{N}$, define the ${\BC}(x)$-valued observable
${\CalX}_{\omega}\colon {\PP}^N \to {\BC}(x)$ via:
\begin{equation}\label{orbifold qq-character}
  {\CalX}_{\omega}(x)\,\vert_{\overline{\lambda}}\, := \,
  Y_{{\omega}+1}(x+{\tilde\ve})\,\vert_{\overline{\lambda}}\, +\, 
  {\qe}_{\omega}\, \frac{P_{\omega}(x)}{Y_{{\omega}}(x)\,\vert_{\overline{\lambda}}}\, .
\end{equation}

\end{enumerate}

   
\subsection{Surface defects}\label{ssec Surface defects}

Consider a map 
\begin{equation}\label{map pi}
  {\pi}_{N}\colon {\PP}^{N} \longrightarrow {\PP}^{N}
\end{equation} 
defined via 
\beq
  \overline{\lambda} = \left( {\lambda}^{(1)}, \ldots , {\lambda}^{(N)} \right) \, \mapsto \, 
  \overline{\Lambda}=\left( {\Lambda}^{(1)}, \ldots , {\Lambda}^{(N)} \right)
\label{eq:pilalam}
\eeq
with
\beq
  {\Lambda}^{({b})}_{i} \ =\ \left[ \frac{{\lambda}_{i}^{({b})} + {b}-1}{N} \right] 
  \, , \qquad b = 1, \ldots , N \, . 
\label{eq:pilalam2}
\eeq
The geometric origin of $\pi_N$ is explained in~\cite{NekBPSCFT}. 
Note that ${\pi}_{1} = \mathrm{Id}_{\PP}$.

Following~\cite{NekBPSCFT}, let us now pass from $\overline{\qe} = ({\qe}_{0}, \dots , {\qe}_{N-1})$ 
of~\eqref{fractional couplings} to another set of variables, namely ${\bw}=(w_{0} : w_{1} : \dots : w_{N-1})$ 
and ${\qe}$ via:
\begin{equation}
  {\qe}_{0} = w_{1}/w_{0}\, , \, {\qe}_{1} = w_{2}/w_{1}\, , \, \ldots\, , \, 
  {\qe}_{N-2} = w_{N-1}/w_{N-2}\, , \, {\qe}_{N-1} = {\qe} w_{0}/w_{N-1} \, , 
\label{eq:qez}
\end{equation}
where the \emph{bulk} coupling $\qe$ is recovered by:
\beq
  {\qe} = {\qe}_{0} {\qe}_{1} \ldots {\qe}_{N-1}\, .
\label{eq:frco}
\eeq
The variables $\bw$ are redundant, in the sense that correlation functions are invariant 
under the simultaneous rescaling of all $w$'s. However, just as the bulk coupling $\qe$ is 
identified below with the cross-ratio of four points on a sphere, thus revealing a connection to 
the $4$-point function in conformal field theory, the variables $w$'s are identified with 
the coordinates of $N$ particles, whose dynamics is described by the partition function ${\bf\Psi}^{\inst}$. 

{}In terms of the $({\bw}, {\qe})$-variables, the instanton factor 
looks as follows (recall that $k_{-1} = k_{N-1}$):
\beq
  \prod_{\omega \in {\BZ}_{N}} {\qe}_{\omega}^{k_{\omega}}\, = \, 
  {\qe}^{k_{N-1}}\, \prod_{\omega=0}^{N-1}\, w_{\omega}^{k_{\omega-1} - k_{\omega}}\, . 
  \label{eq:qfromw}
\eeq
Evoking~\eqref{eq:lift}, we also have an obvious equality
\beq
  \sum_{\omega\in \BZ_N} k_{\omega} \, = \, 
  N k_{N-1} + \sum_{i=1}^{N-1} i (k_{i -1} - k_{i})\, .
\label{eq:kbulk}
\eeq

Using the aforementioned map $\pi_N$, we define the \emph{Surface defect observable}
${\CalS}({\ba},{\bm},{\ve}_{1},{\ve}_{2};{\bw},{\qe})$ in the statistical model 
defined by the pseudo-measure $\mu$ of~\eqref{eq:bulkmu} via:
\beq
  {\CalS}({\ba}, {\bm} , {\ve}_{1}, {\ve}_{2}; {\bw},{\qe})\, \vert_{\overline{\Lambda}} \ := 
  \sum_{{\overline{\lambda}} \in {\pi}^{-1}_{N}({\overline{\Lambda}})} \, 
  \prod_{\omega=0}^{N-1}\, w_{\omega}^{\left(k_{\omega-1}-k_{\omega}\right)\,\vert_{\overline{\lambda}}}\ 
  \frac{{\mu}^{\rm orb} \left({\tilde\ba},{\tilde\bm},{\ve}_{1},{\tilde\ve}_{2};\overline{\qe}\right)
  \vert_{\overline{\lambda}}}{{\mu}({\ba},{\bm},{\ve}_{1},{\ve}_{2};\qe)\,\vert_{\overline{\Lambda}}}\, ,
  \label{eq:surfdef}
\eeq
where, again with \eqref{eq:lift} understood,
\begin{equation}
   \ve_2=N{\tilde\ve}_{2}\, ,\qquad
   a_{b}={\tilde a}_{b}-{\mathfrak S}_{{\tilde a}_{b}}\cdot {\tilde\ve}_{2}\, ,\qquad
   m_{f}^{\pm}={\tilde m}_{f}^{\pm} - {\mathfrak S}_{{\tilde m}_{f}^{\pm}}\cdot {\tilde\ve}_{2} 
    \, \qquad \mathrm{for}\  1\leq b,f\leq N \, .
\label{eq:bulkamve}
\end{equation}
Note that 
\beq
  m_{f}^{\pm} = {\ve}_{1} \left( {\mu}_{f-1} \pm {\delta\mu}_{f-1} \right) \, 
\label{eq:massdeltamass}
\eeq
evoking the notations of~\eqref{eq:massparam}. 
The shifts \eqref{eq:bulkamve} are motivated by the relation between the sheaves on the orbifold 
${\BC} \times {\BC}/{\BZ}_{N}$ and the covering space ${\BC} \times {\BC}$, see~\cite{NikBlowup, Lee:2020hfu}.  
In what follows, we shall not be using the observable~\eqref{eq:surfdef}. 
Instead, we shall work directly with the pseudo-measure ${\mu}^{\rm orb}$.

   
\subsection{The key property of $\CalX_{{\omega}}$}\label{ssec Regularity of colored}

{}The following result \cite{NekBPSCFT} (whose proof is presented in Appendix~\ref{Appendix A0} for
completeness of our exposition) is a simple consequence of Proposition~\ref{regularity 1}:

\begin{prop}\label{regularity 3}
The average 
  $\left\langle\, \CalX_{\omega}(x)\, \right\rangle_{\mu^{\rm orb}}$ 
is a regular function of $x$ for every ${\omega}\in \BZ_N$.
\end{prop}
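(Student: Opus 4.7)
My plan is a direct residue calculation in the spirit of the proof of Proposition \ref{regularity 1}, but organized by the $\BZ_N$-coloring. First, I would locate the potential poles of $\CalX_\omega(x)\,\vert_{\overline{\lambda}}$ as a rational function of $x$. From the definition \eqref{orbifold qq-character} these arise either from zeros of $Y_\omega(x)\,\vert_{\overline{\lambda}}$ in the denominator of the second summand, or from poles of $Y_{\omega+1}(x+\tilde{\ve})\,\vert_{\overline{\lambda}}$ in the first summand. Inspecting the product formula \eqref{colored Y-observable}, these are located at $x_0 = \tilde{a}_b + \tilde{c}(\square)$ (plus shifts by $\ve_1$ or $\tilde{\ve}$) for $\square$ a growth or corner box of $\lambda^{(b)}$ whose content satisfies the $\omega$-coloring condition ${\mathfrak S}_{\tilde{a}_b + \tilde c(\square)} \in \{\omega-1,\omega\} \pmod N$.

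Next I would pair configurations and their residues. For each potential pole $x_0$ arising from a growth box $\square \in \partial_{+}\lambda^{(b)}$ of $\overline{\lambda}$, there is a \emph{partner} configuration $\overline{\lambda}' = \overline{\lambda} \sqcup \{\square\}$ for which $\square$ is now a corner box and the pole at $x_0$ reappears from the other summand of $\CalX_\omega$. This is precisely the pairing used in the proof of Proposition \ref{regularity 1}. To establish regularity of $\langle \CalX_\omega(x)\rangle_{\mu^{\rm orb}}$ it then suffices to verify
\begin{equation*}
  \mu^{\rm orb}\,\vert_{\overline{\lambda}}\cdot \Res_{x=x_0}\CalX_\omega(x)\,\vert_{\overline{\lambda}} \, + \, \mu^{\rm orb}\,\vert_{\overline{\lambda}'}\cdot \Res_{x=x_0}\CalX_\omega(x)\,\vert_{\overline{\lambda}'} \, = \, 0
\end{equation*}
for every such pair, together with the absence of spurious poles at infinity (which is clear by degree count).

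I would verify this cancellation by reducing to the bulk identity implicit in Proposition \ref{regularity 1}. The ratio $\mu^{\rm orb}\,\vert_{\overline{\lambda}'}/\mu^{\rm orb}\,\vert_{\overline{\lambda}}$ differs from the bulk ratio $\mu\,\vert_{\overline{\lambda}'}/\mu\,\vert_{\overline{\lambda}}$ (with the substituted parameters $\tilde\ba,\tilde\bm,\ve_1,\tilde\ve_2$) by discarding all factors whose $\BZ_N$-coloring is nonzero, i.e.\ the factors marked by $\delta^{0}_{{\mathfrak S}_\cdot}$ in \eqref{orbifold measure}. On the other hand, by \eqref{colored Y-observable} the observable $Y_\omega$ retains precisely the $\omega$-colored factors of the bulk $Y$, while $Y_{\omega+1}(x+\tilde\ve)$ retains, after the shift, the $\omega$-colored factors as well; hence the ratio of residues for $\CalX_\omega$ picks up exactly the $\omega$-colored subset of the factors appearing for the bulk $\CalX$. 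Matching the two sides, the discarded nonzero-colored factors divide out cleanly, and the sought cancellation becomes the $\omega$-graded slice of the bulk cancellation furnished by Proposition \ref{regularity 1}.

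The main obstacle will be the careful bookkeeping of the various gradings. The tangent weights \eqref{eq:tangweights} mix contributions from different pairs $(b,c)$, and I must verify that the $\delta^{0}_{{\mathfrak S}_{{\rm w}_t}}$ projector in $\mu^{\rm orb}$ retains exactly the tangent weight factors that survive in the residue ratio for $\CalX_\omega$, once the substitutions $\ve_2 = N\tilde\ve_2$ and \eqref{eq:bulkamve} are accounted for. Once this coloring-by-coloring matching is made explicit, the cancellation is automatic and Proposition \ref{regularity 3} follows.
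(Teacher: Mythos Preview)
Your proposal is correct and follows essentially the same approach as the paper's proof: locate the poles, pair $\overline{\lambda}$ with $\overline{\lambda}'=\overline{\lambda}\sqcup\{\square\}$ via the growth/corner box bijection, and observe that both the colored residues of $\CalX_\omega$ and the orbifold measure $\mu^{\rm orb}$ are obtained from their bulk counterparts by discarding all lattice factors of nonzero $\BZ_N$-grading, so that the cancellation~\eqref{equality of residues} survives the projection. One small sharpening: after the cancellations implicit in~\eqref{colored Y-observable}, the actual poles of $\CalX_\omega(x)\vert_{\overline\lambda}$ occur only at points $x_0$ with $\mathfrak{S}_{x_0}=\omega$ (not $\omega-1$), which is exactly what makes the matching with the $\delta^0_{\mathfrak{S}}$-projected measure clean.
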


For a power series $F(x)=\sum_{\ell=-\infty}^\infty F_{\ell} x^{-\ell}$ and $k\in \BZ$, 
let $\left[x^{-k}\right]F(x)$ denote the coefficient~$F_k$. The regularity property 
of Proposition~\ref{regularity 3} implies the following result:
\begin{equation}\label{eq:residue_k}
   \Big\langle\, \left[x^{-k}\right]\CalX_{\omega}(x)\, \Big\rangle_{\mu^{\rm orb}} \, = \, 
   \left[x^{-k}\right] \Big\langle\, \CalX_{\omega}(x)\, \Big\rangle_{\mu^{\rm orb}} \, = \, 0
   \qquad \mathrm{for\ any}\ k>0\ \mathrm{and\ every}\ \omega\in \BZ_N \, .
\end{equation}
The \underline{main point to take home} is that the $k=1$ case of the equation~\eqref{eq:residue_k} 
implies a second-order differential equation on the partition function 
$\bf\Psi^{\inst}$, viewed as a function of ${\qe}_0,\ldots,{\qe}_{N-1}$. 
This differential equation is the subject of the following subsection.


\subsection{The differential operator ${\CalD}^{\rm BPS}$}\label{ssec qq-operator}

To apply~\eqref{eq:residue_k} for $k=1$, we shall first explicitly compute
  $[x^{-1}]\CalX_{\omega}(x)\,\vert_{\overline{\lambda}}$.
For every ${\omega}\in {\BZ}_{N}$, define the observable $c_{\omega, \ba}\colon {\PP}^{N} \to {\BC}$ via:
\begin{equation}\label{c-observable}
  c_{\omega, \ba} \, \vert_{\overline{\lambda}} \, :=\, 
  \frac{ {\ve}_{1} }{2} k_{\omega} \vert_{\overline{\lambda}} \ +\,
  \sum_{b=1}^N \sum_{\square\in \lambda^{(b)}}
  {\delta}_{{\mathfrak S}_{{\tilde a}_{b}+{\tilde c}(\square)}}^{\omega}\cdot 
  \left( {\tilde a}_{b}+{\tilde c}(\square) \right)\, .
\end{equation}
Recalling~(\ref{eq:coulpar},~\ref{eq:bulkamve}), so that in particular 
$a_b=\ve_1 \alpha_{{\mathfrak S}_{{\tilde a}_{b}}}$ and ${\tilde \kappa}={\tilde \ve}_2 / {\ve_1}$, 
we get:  
\begin{multline*}
  Y_{\omega}(x)\,\vert_{\overline{\lambda}} \, = \,
  \left( x - {\ve}_{1} {\alpha}_{\omega} - {\omega}{\tilde\ve}_{2}\right) \, \times\, 
  \prod_{b=1}^N \prod_{\square\in \lambda^{(b)}}
  \left\{\left(1 - \frac{\ve_1}{x} - 
                \frac{\ve_1({\tilde a}_b + {\tilde c} (\square))}{x^2} + O(x^{-3})\right)^
         {\delta_{{\mathfrak S}_{{\tilde a}_{b}+{\tilde c}(\square)}}^{\omega}}\right.\\
  \ \ \ \ \ \ \times \left.\left(1+\frac{\ve_1}{x}+
       \frac{\ve_1({\tilde a}_b+{\tilde c}(\square)+{\tilde \ve})}{x^2}+O(x^{-3})\right)^
       {\delta_{{\mathfrak S}_{{\tilde a}_{b}+{\tilde c}(\square)}}^{\omega-1}}\right\}\, ,
\end{multline*}
which implies:

\begin{lem}\label{laurent 1}
The large $x$ expansion of the observable $Y_{\omega}(x)$ has $x$ as a leading term,
while the next two coefficients are the observables ${\PP}^{N}\to {\BC}$ given explicitly by:
\begin{equation*}
\begin{split}
  & {\ve}_{1}^{-1} \left[x^{0}\right] Y_{\omega}(x) \,  =\, {\rm d}_{\omega} := 
      k_{{\omega}-1}-k_{\omega}  - {\alpha}_{\omega} -{\omega}{\tilde\kappa}  \, , \\
  & {\ve}_{1}^{-2} \left[x^{-1}\right] Y_{\omega}(x) \, =\, 
    \frac{{\rm d}_{\omega}^{2} - ({\alpha}_{\omega} + \omega{\tilde\kappa})^{2}}{2} + 
    {\tilde \kappa} k_{\omega-1} + \frac{c_{{\omega}-1, \ba} - c_{\omega, \ba}}{\ve_1}\, .
\end{split}
\end{equation*}
\end{lem}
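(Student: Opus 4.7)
The statement continues the Laurent expansion of $Y_\omega(x)$ already initiated in the display preceding the lemma; the plan is to take the logarithm of the box-product, sum over boxes using the observables $k_\omega$ and $c_{\omega,\ba}$, exponentiate, and multiply by the prefactor.

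Set $A_\omega := \ve_1 \alpha_\omega + \omega \tilde\ve_2$, so the prefactor reads $x - A_\omega$ (note that for a regular $N$-coloring $A_\omega$ is the unique $\tilde a_b$ with $\mathfrak{S}_{\tilde a_b} = \omega$). Let $L(x)$ denote the logarithm of the product over $(b,\square)$ from the displayed expansion. Applying $\log(1+u) = u - u^2/2 + O(u^3)$ factor by factor, I collect the $x^{-1}$- and $x^{-2}$-coefficients of $L(x)$ in terms of $k_\omega$ and of the sums $\Sigma_\omega := \sum_{b,\square} \delta^{\omega}_{\mathfrak{S}_{y}} \cdot y$ with $y := \tilde a_b + \tilde c(\square)$, together with the Gaussian shift from $-u^2/2$. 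Using the identity $\Sigma_\omega = c_{\omega,\ba} - \tfrac{\ve_1}{2} k_\omega$ (from the definition of $c_{\omega,\ba}$), this yields
\begin{equation*}
L(x) = \frac{\ve_1 \Delta}{x} + \frac{\ve_1(c_{\omega-1,\ba} - c_{\omega,\ba}) - \tfrac{\ve_1^2}{2}\Delta + \ve_1 \tilde\ve\, k_{\omega-1} - \tfrac{\ve_1^2}{2}(k_{\omega-1} + k_\omega)}{x^2} + O(x^{-3}),
\end{equation*}
where $\Delta := k_{\omega-1} - k_\omega$.

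Exponentiating via $e^{L} = 1 + L + L^2/2 + \cdots$ (only $L^2$ contributes at order $x^{-2}$) and multiplying by $x - A_\omega$, the $[x^0]$-coefficient equals $\ve_1 \Delta - A_\omega = \ve_1(\Delta - \alpha_\omega - \omega\tilde\kappa) = \ve_1 {\rm d}_\omega$, proving the first formula. For $[x^{-1}]$, the three contributions are the $x^{-2}$-piece of $L$, the cross-term $\tfrac{\ve_1^2 \Delta^2}{2}$ from $L^2/2$, and $-A_\omega \cdot \ve_1 \Delta$ from the prefactor. After dividing by $\ve_1^2$ and using $\tilde\ve/\ve_1 = 1 + \tilde\kappa$ together with $A_\omega/\ve_1 = \alpha_\omega + \omega\tilde\kappa$, the combinations of $k_{\omega-1}, k_\omega$ not proportional to $\tilde\kappa$ collapse, leaving exactly $\tilde\kappa\, k_{\omega-1}$. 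The remaining $\Delta$-dependent piece is $\tfrac{\Delta^2}{2} - (\alpha_\omega + \omega\tilde\kappa)\Delta$, which by completing the square rewrites as $\tfrac{1}{2}\bigl[{\rm d}_\omega^2 - (\alpha_\omega + \omega\tilde\kappa)^2\bigr]$. Together with the Coulomb term $(c_{\omega-1,\ba} - c_{\omega,\ba})/\ve_1$ this produces the second formula. The only nonroutine move is the final square-completion; the rest is careful bookkeeping, the main trap being the $\ve_1/2$ offset hidden in the definition of $c_{\omega,\ba}$ and its cancellation against the Gaussian shift coming from $-u^2/2$.
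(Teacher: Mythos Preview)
Your proof is correct and is precisely the computation the paper leaves implicit when it writes the large-$x$ expansion of the box product and then says ``which implies'' the lemma. Your device of passing to $L(x)=\log(\text{box product})$ is the natural way to sum over boxes and convert the product into the observables $k_\omega$ and $c_{\omega,\ba}$; the paper's display already records the factor-by-factor expansion to order $x^{-2}$, and your bookkeeping of the $-u^2/2$ Gaussian shift, the $\tilde\ve/\ve_1 = 1+\tilde\kappa$ split, and the final square-completion ${\tfrac{\Delta^2}{2} - (\alpha_\omega+\omega\tilde\kappa)\Delta = \tfrac12\bigl[{\rm d}_\omega^2-(\alpha_\omega+\omega\tilde\kappa)^2\bigr]}$ are exactly what is needed to land on the stated formulas.
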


\noindent
As an immediate corollary, using the notations of~(\ref{eq:kappar},~\ref{eq:massparam},~\ref{eq:coulpar}), we obtain:

\begin{prop}\label{laurent 2} 
The observable $\left[x^{-1}\right] \CalX_{\omega}(x) \colon {\PP}^{N} \to {\BC}$ is explicitly given by: 
\begin{equation}\label{key coefficient}
\begin{split}
  {\ve}_{1}^{-2} \left[x^{-1}\right]\CalX_{\omega}(x) \, = & \,
  \frac{\left(c_{\omega,\ba}-c_{{\omega}+1,\ba} \right) - {\qe}_{\omega} \left(c_{{\omega}-1,\ba}-c_{\omega,\ba}\right)}
       {\ve_{1}} \, + \\
  &  {\tilde\kappa} \left(k_{\omega} - {\qe}_{\omega} k_{{\omega}-1}\right)
     + {\qe}_{\omega} \left( \left( {\rm d}_{\omega}  +  {\mu}_{\omega} + {\omega} {\tilde\kappa}\right)^2 - 
                            {\delta\mu}_{\omega}^{2} - {\rm d}_{\omega}^2 \right)\, + \\
  & \frac{1}{2} \left({\rm d}_{\omega+1}^{2}+{\qe}_{\omega}{\rm d}_{\omega}^{2} +
     {\qe}_{\omega} \left( {\alpha}_{\omega} + {\omega}{\tilde\kappa} \right)^{2} - 
     \left( {\alpha}_{\omega+1} + ({\omega+1}){\tilde\kappa} \right)^{2} \right) \, .
\end{split}
\end{equation}
\end{prop}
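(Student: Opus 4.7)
The plan is to combine the two terms in the definition~\eqref{orbifold qq-character} of $\CalX_{\omega}(x)$ by extracting the $x^{-1}$ coefficient of each separately and then adding. Lemma~\ref{laurent 1} already records the first three Laurent coefficients of $Y_{\omega}(x)$, so the calculation is a direct (if somewhat involved) manipulation.

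First I would treat the contribution $Y_{\omega+1}(x+\tilde\ve)$. Writing
\[
  Y_{\omega+1}(y) = y + {\ve}_1 {\rm d}_{\omega+1} + \ve_1^2\, E_{\omega+1}\cdot y^{-1} + O(y^{-2}),
\]
where $E_{\omega+1}$ denotes the bracketed expression in the second line of Lemma~\ref{laurent 1} with index $\omega+1$, the substitution $y = x+\tilde\ve$ and the expansion $(x+\tilde\ve)^{-1} = x^{-1} - \tilde\ve\, x^{-2} + O(x^{-3})$ produce
\[
  [x^{-1}]\,Y_{\omega+1}(x+\tilde\ve) \, = \, \ve_1^2\, E_{\omega+1} .
\]

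Next I would compute the $x^{-1}$ coefficient of $\qe_\omega\, P_\omega(x)/Y_\omega(x)$. Since $P_\omega(x)$ is quadratic, this requires the expansion of $1/Y_\omega(x)$ up to order $x^{-3}$, which is obtained from Lemma~\ref{laurent 1} by the standard geometric series inversion:
\[
  \frac{1}{Y_\omega(x)} \, = \, \frac{1}{x} \, - \, \frac{\ve_1 {\rm d}_\omega}{x^2} \, + \, \frac{\ve_1^2\,({\rm d}_\omega^2 - E_\omega)}{x^3} \, + \, O(x^{-4}) .
\]
Expanding $P_\omega(x) = (x - \ve_1\mu_\omega - \omega\tilde\ve_2)^2 - \ve_1^2\delta\mu_\omega^2$ and collecting the three contributions (from $x^2$, from the linear term, and from the constant term in $P_\omega$) gives, after using $\omega\tilde\ve_2 = \ve_1\omega\tilde\kappa$, the compact expression
\[
  \ve_1^{-2}\,[x^{-1}]\,\frac{P_\omega(x)}{Y_\omega(x)} \, = \,
  ({\rm d}_\omega + \mu_\omega + \omega\tilde\kappa)^2 \, - \, \delta\mu_\omega^2 \, - \, E_\omega .
\]

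Finally I would add the two contributions and regroup. The linear-in-$c_{\cdot,\ba}$ pieces from $E_{\omega+1}$ and $-\qe_\omega E_\omega$ combine into the first line of~\eqref{key coefficient}; the $\tilde\kappa k_\omega$ and $-\qe_\omega \tilde\kappa k_{\omega-1}$ terms give the factor $\tilde\kappa(k_\omega - \qe_\omega k_{\omega-1})$ in the middle line; and after adding and subtracting $\qe_\omega {\rm d}_\omega^2$, the remaining quadratic terms split cleanly into the $\qe_\omega((\cdot)^2 - \delta\mu_\omega^2 - {\rm d}_\omega^2)$ piece and the final $\tfrac{1}{2}(\cdots)$ bracket of~\eqref{key coefficient}. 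This is purely mechanical bookkeeping; the only place where care is needed is the sign tracking in the $(\alpha_\omega + \omega\tilde\kappa)^2$ contributions, where the minus sign in $E_\omega$ flips upon multiplication by $-\qe_\omega$ to match the $+\qe_\omega(\alpha_\omega + \omega\tilde\kappa)^2$ appearing in the final bracket. No step presents a conceptual obstacle: the entire proposition is a Laurent-expansion exercise built directly on Lemma~\ref{laurent 1}.
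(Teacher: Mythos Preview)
Your proposal is correct and matches the paper's approach: the paper simply records the proposition as ``an immediate corollary'' of Lemma~\ref{laurent 1}, and your sketch carries out precisely that Laurent-expansion bookkeeping in the expected way. The intermediate expressions you obtain for $[x^{-1}]Y_{\omega+1}(x+\tilde\ve)$ and $\ve_1^{-2}[x^{-1}]P_\omega(x)/Y_\omega(x)$ are correct, and the final regrouping (including the add-and-subtract of $\qe_\omega {\rm d}_\omega^2$) is exactly what is needed.
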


To get rid of the observables $c_{\omega, \ba}$'s \eqref{c-observable} in the 
right-hand side of~\eqref{key coefficient}, we introduce, following~\cite{NekBPSCFT}, 
the functions $\{ U_{\omega}\}_{\omega \in \BZ_N}$ via:
\beq
  U_{\omega} = 1+\qe_{\omega+1}+\qe_{\omega+1}\qe_{\omega+2}+\ldots+\qe_{\omega+1}\cdots\qe_{\omega-1} \, ,
\label{eq:u-parameters}
\eeq
with the conventions $U_{\omega+ N} = U_{\omega}$ being used. 
They provide a (unique up to a common factor) solution of the following linear system:
\begin{equation}\label{nu-equation}
  (1+{\qe}_{\omega})\cdot U_{\omega} - U_{\omega-1} - {\qe}_{\omega+1}\cdot U_{\omega+1} \, =\, 0 
   \qquad \mathrm{for\ any}\, \ \omega\in \BZ_N\, .
\end{equation}
We also note that 
\begin{equation*}
  U_{\omega} - {\qe}_{\omega +1}\cdot U_{\omega +1}\, =\, 1 - {\qe}
   \qquad \mathrm{for\ any}\, \ \omega\in \BZ_N\, .
\end{equation*}

Due to the key property~(\ref{nu-equation}) of $U_\omega$'s, the coefficient of $x^{-1}$ 
in the observable
  $\sum_{\omega\in \BZ_N} U_{\omega}\, \CalX_{\omega}(x)$ 
is a degree two polynomial in the \emph{instanton charges} $\{k_{\omega}\}_{\omega\in \BZ_N}$.
Therefore,
\begin{equation*}
  \left\langle 
    \left[x^{-1}\right] \left(\sum_{\omega\in \BZ_N} U_{\omega}\, \CalX_{\omega}(x)\right)
  \right\rangle_{{\mu}^{\rm orb}}\, = \ D^{\Inst} \left({\bf\Psi}^{\Inst}\right)
\end{equation*}
with $D^{\Inst}$, a second-order differential operator in ${\qe}_{\omega}$'s, 
naturally arising from the equality
\begin{equation}\label{deriving diff. eq-n}
  \left\langle \prod_{{\omega} \in {\BZ}_{N}} k_{\omega}^{r_{\omega}} \right\rangle_{{\mu}^{\rm orb}}  =\ 
  \frac{\prod_{{\omega} \in {\BZ}_{N}} 
    \left( {\qe}_{\omega}\frac{{\partial}}{{\partial}{\qe}_{\omega}}\right)^{r_{\omega}}
    {\bf\Psi}^{\inst} ({\tilde \ba}, {\tilde \bm}, {\ep}_{1}, {\tilde \ep}_{2}; \overline{\qe})}
       {{\bf\Psi}^{\inst} ({\tilde \ba}, {\tilde \bm}, {\ep}_{1}, {\tilde \ep}_{2}; \overline{\qe})}\, , 
\end{equation}
due to~(\ref{orbifold measure},~\ref{orbifold normalization}).
We can further express $D^\Inst$ as a differential operator in $\qe$ and $w_\omega$'s by using
\begin{equation}\label{diff q-to-w}
  {\qe}\frac{{\partial}}{{\partial}{\qe}}\, =\, {\qe}_{N-1}\frac{{\partial}}{{\partial}{\qe}_{N-1}}\, , \quad 
  {w}_{\omega}\frac{{\partial}}{{\partial}w_{\omega}}\, =\, 
  {\qe}_{\omega-1}\frac{{\partial}}{{\partial}{\qe}_{\omega-1}} - {\qe}_{\omega}\frac{{\partial}}{{\partial}{\qe}_{\omega}}
  \quad \mathrm{for\ any}\ \ \omega\in \BZ_N\, .
\end{equation}
It is convenient to introduce the normalized partition function ${\bf \Psi}$ via:
\beq
\label{renormalized partition function} 
  {\bf\Psi} \, = \,   {\bf\Psi}^{\rm tree} \,  \cdot\, {\bf\Psi}^{\inst}\, , 
\eeq
where
\beq
  {\bf\Psi}^{\rm tree} \, := \, 
  {\qe}^{-\frac{1}{2{\kappa}} \sum_{\omega = 0}^{N-1} {\alpha}_{\omega}^{2}} \cdot 
  \prod_{\omega = 0}^{N-1} \, w_{\omega}^{{\mu}_{\omega}-{\alpha}_{\omega}} \, .
\label{eq:psitree}
\eeq
Combining Propositions~\ref{regularity 3},~\ref{laurent 2} with formulae~\eqref{nu-equation} 
and~\eqref{deriving diff. eq-n}, we get (cf.~Parts I, V of~\cite{NekBPSCFT}):

\begin{thm}\label{Differential eq-n 1}
The normalized partition function 
  ${\bf\Psi} = {\bf\Psi}({\tilde \ba}, {\tilde \bm}, \ve_1, {\tilde \ve}_2; \bw,\qe)$ 
of~(\ref{renormalized partition function}) satisfies the equation 
$$
  {\CalD}^{\rm BPS} ({\bf\Psi}) \, =\, 0
$$  
with the second-order differential operator ${\CalD}^{\rm BPS}$ explicitly given by (cf.~\eqref{eq:kappar})
\beq\label{eq:dbps}
 {\CalD}^{\rm BPS} \, = \, 
    \kappa \frac{{\partial}}{{\partial}{\qe}}  + \frac{{\hat H}_{0}}{\qe} + \frac{{\hat H}_{1}}{{\qe}-1} \, , 
\eeq
where ${\hat H}_{0}, {\hat H}_{1}$ are the second-order differential operators in $w_\omega$'s,
$\underline{independent \ of}$ $\qe$ and $\alpha_\omega$'s:
\begin{equation}\label{inst-d-op}
\begin{split}
  & {\hat H}_{0} = \sum_{\omega = 0}^{N-1} \, \Biggl\{ 
    \sum_{\omega' = \omega + 1}^{N-1} \frac{w_{\omega'}}{w_{\omega}} 
    \left( D_{\omega}^{2} - {\delta\mu}_{\omega}^{2} \right) + 
    \frac {1}{2} \left( D_{\omega} - {\mu}_{\omega} \right)^{2} \Biggr\} \, , \\
  & {\hat H}_{1} \, = \, - \sum_{\omega', \omega=0}^{N-1} \, 
    \frac{w_{\omega'}}{w_{\omega}} \, \left( D_{\omega}^{2} - {\delta\mu}_{\omega}^{2} \right)\, , 
\end{split}     
\end{equation}
with
\beq
  D_{\omega} = w_{\omega} \frac{{\partial}}{{\partial} w_{\omega}}\, .
\eeq
\end{thm}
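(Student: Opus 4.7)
The plan is to extract the differential equation from the $k=1$ case of~\eqref{eq:residue_k} applied to the particular linear combination $\sum_{\omega \in \BZ_N} U_{\omega}\, \CalX_{\omega}(x)$, with the coefficients $U_{\omega}$ of~\eqref{eq:u-parameters}. Inspection of Proposition~\ref{laurent 2} shows that the only obstruction to interpreting $\langle [x^{-1}] \CalX_{\omega}(x) \rangle_{\mu^{\rm orb}}$ as a value of a differential operator on ${\bf\Psi}^{\inst}$ is the first line of~\eqref{key coefficient}, which involves the observables $c_{\omega,\ba}$ that are not diagonal under $\qe_{\omega}\frac{\partial}{\partial \qe_{\omega}}$; the role of $U_{\omega}$ is precisely to kill these terms.

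First I would verify the cancellation explicitly: collecting the coefficient of each $c_{\omega,\ba}$ in the weighted sum yields, up to the overall factor $\ve_{1}^{-1}$, exactly $(1+\qe_{\omega})U_{\omega} - U_{\omega-1} - \qe_{\omega+1}U_{\omega+1}$, which vanishes by~\eqref{nu-equation}. What remains is polynomial of degree at most two in the instanton charges $\{k_{\omega}\}$ (entering also through ${\rm d}_{\omega} = k_{\omega-1} - k_{\omega} - \alpha_{\omega} - \omega\tilde\kappa$), with coefficients depending only on $\qe_{\omega}$, $\alpha_{\omega}$, $\mu_{\omega}$, $\delta\mu_{\omega}$, and $\tilde\kappa$.

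Next, using~\eqref{deriving diff. eq-n} I would convert the average of this quadratic polynomial into a second-order differential operator $D^{\inst}$ in the fractional couplings $\qe_{\omega}$ acting on ${\bf\Psi}^{\inst}$, so that~\eqref{eq:residue_k} for $k=1$ reads $D^{\inst}({\bf\Psi}^{\inst}) = 0$. Rewriting $D^{\inst}$ in the $(\bw,\qe)$ variables via~\eqref{diff q-to-w} and then conjugating by the tree-level prefactor ${\bf\Psi}^{\rm tree}$ of~\eqref{eq:psitree} (whose purpose is precisely to absorb the linear-in-$\log w_{\omega}$ and $\log \qe$ shifts sourced by the $\alpha_{\omega}$- and $\mu_{\omega}$-dependent coefficients), I expect to obtain an operator of the desired form $\kappa \frac{\partial}{\partial \qe} + \qe^{-1}\hat H_{0} + (\qe-1)^{-1}\hat H_{1}$.

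The main obstacle is bookkeeping: tracking how each quadratic monomial in $k_{\omega}$ transforms under the change from fractional to $(\bw,\qe)$ coordinates, and verifying the nontrivial independence of $\hat H_{0}$ and $\hat H_{1}$ from both $\qe$ and the Coulomb variables $\alpha_{\omega}$. The $\alpha_{\omega}$-independence is what dictates the particular shape of the tree-level factor: a priori $\alpha_{\omega}$ appears linearly through ${\rm d}_{\omega}$ and quadratically through the last line of~\eqref{key coefficient}, and one must check that, after conjugation and use of the identity $U_{\omega} - \qe_{\omega+1}U_{\omega+1} = 1 - \qe$, the residual $\alpha_{\omega}$-dependence organizes itself into total derivatives and disappears from the operator coefficients. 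The separation into the two poles at $\qe = 0$ and $\qe = 1$ arises from collecting the $\qe_{\omega}$-prefactors via the identity just quoted, and a direct comparison with~\eqref{inst-d-op} of the resulting coefficients of $D_{\omega}^{2}$, $D_{\omega}$, and the constant term completes the identification.
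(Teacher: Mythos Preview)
Your proposal is correct and follows essentially the same approach as the paper: combine Propositions~\ref{regularity 3} and~\ref{laurent 2}, use the linear system~\eqref{nu-equation} satisfied by the $U_\omega$ to cancel the $c_{\omega,\ba}$ observables, convert the resulting quadratic in the $k_\omega$'s to a differential operator via~\eqref{deriving diff. eq-n}, pass to $(\bw,\qe)$ coordinates via~\eqref{diff q-to-w}, and absorb the $\alpha_\omega$-dependence into the tree-level prefactor~\eqref{eq:psitree}. The paper states this route tersely in one sentence and defers the detailed bookkeeping to~\cite{NekBPSCFT}; your outline spells out exactly the same steps.
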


\begin{rem}
Note that ${\bf\Psi}^{\rm inst}$ is a single-valued homogeneous function of $w_\omega$'s. If we wrote the 
differential equation obeyed by ${\bf\Psi}^{\rm inst}$ in the original variables ${\qe}_{0}, \ldots, {\qe}_{N-2}, {\qe}_{N-1}$, it would not contain any ambiguity due to the redundant nature of the variables $w_0, \ldots, w_{N-1}$.
However, 
the equations written in the invariant variables, such as the variables $v_i$ introduced below, look more complicated.
Conversely, by introducing more degrees of freedom with additional symmetries, modifying accordingly the prefactor
${\bf\Psi}^{\rm tree}$, one arrives at a very simple form of the operators ${\hat H}_{0}, {\hat H}_{1}$, 
cf.~Theorem \ref{BPSCFT} below. This is known as the \emph{projection method} in the theory of many-body systems~\cite{OP}. 
\end{rem}

\begin{rem}
The normalized partition function $\bf\Psi$ obeys:
\beq
  \sum_{\omega=0}^{N-1} D_{\omega}\, ({\bf\Psi}) = 
  \sum_{\omega=0}^{N-1} \left( {\mu}_{\omega} - {\alpha}_{\omega} \right) \cdot \, {\bf\Psi} \, .
\eeq
The operators ${\hat H}_{0}, {\hat H}_{1}$ in \eqref{inst-d-op} are therefore defined up to addition of 
the second-order differential operators of the form
\beq
  {\mathfrak{D}}_{1} \sum_{\omega=0}^{N-1} \left( D_{\omega} + {\alpha}_{\omega} - {\mu}_{\omega} \right)
\eeq
with a first-order differential operator $\mathfrak{D}_{1}$. The choice \eqref{inst-d-op} is uniquely characterized
by its $\alpha_\omega$-independence, for any $\omega$. 
\end{rem}


\subsection{One more coordinate change}\label{ssec v-ccordinates}

For the purpose of the next section, it will be convenient to use the coordinates
\beq
  v_{i} = \frac{w_{i-1}}{w_{0}+w_{1} + \ldots + w_{N-1}}\, , \qquad i = 1, \ldots , N-1 \, ,
\eeq
and the associated quantities
\beq
  u_{i} \, = \sum_{j=i+1}^{N} v_{j} \, , \qquad i = 0, \ldots , N-1 \, ,
\eeq
with 
\beq
   v_{N} \equiv 1 - \sum_{i=1}^{N-1}v_{i} \qquad \mathrm{and} \qquad u_{N} \equiv 0\, . 
\label{eq:uvN}
\eeq
Define the ${\BC}[[v_{1}^{\pm 1}, v_{2}^{\pm 1}, \ldots , v_{N-1}^{\pm 1} ]]$-valued power series in $\qe$ by:
\beq
  {\psi}(v_{1}, v_{2} , \ldots , v_{N-1} ; {\qe} )\,  = \, 
  {\bf\Psi}^{\rm inst} \left( v_{2}/v_{1}, v_{3}/v_{2} , \ldots , v_{N}/v_{N-1} , {\qe} v_{1}/v_{N} \right) ,
\label{eq:chif}
\eeq
where we intentionally omit the parameters ${\tilde \ba}, {\tilde \bm}, \ve_1, {\tilde \ve}_2$ in 
the right-hand side and note that 
  $$v_2/v_1={\qe}_0,\, v_3/v_2={\qe}_1,\,  \ldots\, ,\, {\qe} v_{1}/v_{N}={\qe}_{N-1}\, .$$
The following is a straightforward reformulation of Theorem~\ref{Differential eq-n 1} in the present setting:

\begin{thm}\label{Differential eq-n 1 v-ccordinates}
The function $\psi={\psi}(v_{1}, v_{2} , \ldots , v_{N-1} ; {\qe} )$ satisfies the equation 
\beq
  {\nabla}^{\rm bps} ({\psi}) = 0 
\eeq
with
\beq
\label{eqn:bps-connection}
  {\nabla}^{\rm bps} \, =  \,
  {\kappa} \frac{{\partial}}{{\partial}\qe} + \frac{{\hat h}^{\rm bps}_0}{\qe} + \frac{{\hat h}^{\rm bps}_{1}}{{\qe}-1}
\eeq
with the residues of the meromorphic connection $\nabla^{\rm bps}$ at $\qe = 0$ and $\qe = 1$ having the decomposition:
\beq
\label{eqn:bps-hamiltonians}
    {\hat h}_{0}^{\rm bps} = 
    {\hat h}_{0, \rm kin}^{\rm bps} + {\hat h}_{0, \rm mag}^{\rm bps} + {\hat h}_{0, \rm pot}^{\rm bps}\, , \qquad
    {\hat h}_{1}^{\rm bps} = 
    {\hat h}_{1, \rm kin}^{\rm bps} + {\hat h}_{1, \rm mag}^{\rm bps} + {\hat h}_{1, \rm pot}^{\rm bps}\, ,
\eeq
    with the kinetic, magnetic, and potential terms given by:
\beq
\begin{aligned}
  & {\hat h}_{0, \rm kin}^{\rm bps} = \frac{1}{2}{\rm D}^2 + \sum_{i=1}^{N-1} \,\left( u_{i}  + \frac{v_{i}}{2} \right)
    \left( v_{i}^{-1} {\rm D}_{i}  - 2{\rm D}  \right) {\rm D}_i \, ,  \qquad
    {\hat h}_{1, \rm kin}^{\rm bps} = {\rm D}^{2} - \sum_{i=1}^{N-1} v_{i}^{-1} {\rm D}_{i}^{2} \, , \\
  & {\hat h}_{0, \rm mag}^{\rm bps} =  
    \left(  {\alpha}_{N-1} + 1 - N  + \sum_{i=1}^{N-1} ( N - i - {\alpha}_{N-1}) v_{i} \right) {\rm D} \ + \\
  & \qquad \qquad \qquad \qquad 
    2 \sum_{i=1}^{N-1} \,\left(  {\mu}_{i-1} u_{i} -{\alpha}_{i-1} \left(  u_{i} + \frac{v_{i}}{2} \right) \right)
    \left(v_{i}^{-1} {\rm D}_i - {\rm D} \right) ,  \\
  & {\hat h}_{1, \rm mag}^{\rm bps} = 
    \left(  N - 1  + 2{\mu}_{N-1} - 2{\alpha}_{N-1} \right)  {\rm D} -  2 \sum_{i=1}^{N-1}    \left( {\mu}_{i-1} - {\alpha}_{i-1}
    \right)\left( v_{i}^{-1} {\rm D}_{i}  - {\rm D} \right) \, , \\
  & {\hat h}_{0, \rm pot}^{\rm bps} = 
    \sum_{i=1}^{N-1} u_{i}\frac{\left( {\mu}_{i-1} - {\alpha}_{i-1} \right)^2 -
    {\delta\mu}_{i-1}^{2}}{v_{i}} \, , \qquad 
    {\hat h}_{1, \rm pot}^{\rm bps} = 
    - \sum_{a=1}^{N} \frac{\left( {\mu}_{a-1} - {\alpha}_{a-1} \right)^{2} - {\delta\mu}_{a-1}^{2}}{v_{a}} \, ,
\end{aligned}
\label{eq:bpsvv}
\eeq
where we defined 
\beq
  {\rm D}_{i} = v_{i} \frac{{\partial}}{{\partial} v_{i}} \, , \qquad i = 1, \ldots , N-1 \, ,
\eeq
and
\beq
  {\rm D} \, = \sum_{i=1}^{N-1} {\rm D}_{i}\, .
\eeq
\end{thm}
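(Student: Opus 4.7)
The statement is a direct rewriting of Theorem~\ref{Differential eq-n 1} after two manipulations: first, conjugating by $\Psi^{\rm tree}$ in order to pass from the equation $\CalD^{\rm BPS}(\Psi) = 0$ to an equation on $\Psi^{\rm inst}$, and second, changing variables from the projective coordinates $\bw$ to the affine coordinates $v_1, \ldots , v_{N-1}$ on the slice $\sum_{\omega} w_\omega = 1$. Everything after that is bookkeeping.

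For the conjugation, since $\log \Psi^{\rm tree} = -\tfrac{1}{2\kappa}\log\qe \sum_\omega \alpha_\omega^2 + \sum_\omega (\mu_\omega-\alpha_\omega)\log w_\omega$ and both $\kappa\partial_\qe$ and $D_\omega$ are derivations, the conjugated operator $(\Psi^{\rm tree})^{-1}\, \CalD^{\rm BPS}\, \Psi^{\rm tree}$ is obtained from $\CalD^{\rm BPS}$ by the substitutions
$$
  \kappa\partial_\qe \,\mapsto\, \kappa\partial_\qe - \frac{1}{2\qe}\sum_{\omega=0}^{N-1}\alpha_\omega^2\, ,
  \qquad
  D_\omega \,\mapsto\, D_\omega + \mu_\omega - \alpha_\omega\, .
$$
Expanding the squares $(D_\omega + \mu_\omega - \alpha_\omega)^2$ and $(D_\omega - \alpha_\omega)^2$ in \eqref{inst-d-op} sorts the result into $D_\omega^2$ pieces (kinetic), linear $D_\omega$ pieces (magnetic), and constant pieces (potential, proportional to $(\mu_\omega-\alpha_\omega)^2 - \delta\mu_\omega^2$).

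For the change of variables, the homogeneity $\sum_\omega D_\omega(\Psi^{\rm inst}) = 0$ (which follows from the corresponding identity for $\Psi$ and $\Psi^{\rm tree}$) allows one to restrict $\Psi^{\rm inst}$ to the slice $\sum w_\omega = 1$, on which $w_{i-1} = v_i$ for $i = 1, \ldots, N$ with $v_N := 1 - \sum_{i<N} v_i$. Computing $\partial v_i/\partial w_\omega = (\delta_{i-1,\omega} - v_i)/\sum_{\omega'} w_{\omega'}$ and applying the chain rule then gives, for any scale-invariant $f$,
$$
  D^{\bw}_{i-1} f \, =\, \left(D_i - v_i\, D\right) f \, , \qquad i = 1, \ldots, N \, ,
$$
with the convention $D_N := 0$; and $\qe \partial_\qe$ is unchanged since the $v_i$'s do not depend on $\qe$. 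The ratios $w_{\omega'}/w_\omega$ in \eqref{inst-d-op} become $v_{\omega'+1}/v_{\omega+1}$, and the identity $\sum_{\omega' > \omega} w_{\omega'}/w_\omega = u_{\omega+1}/v_{\omega+1}$ is what produces the quantities $u_i = v_{i+1} + \ldots + v_N$ in the final answer.

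Substituting and collecting terms yields the advertised decomposition \eqref{eqn:bps-hamiltonians}, \eqref{eq:bpsvv}. The main (purely computational) obstacle is in the kinetic sector, where $(D_i - v_iD)^2$ is expanded using the commutators $[D_i, v_i] = v_i$ and $[D, v_i] = v_i$: the coefficients of $v_i^2 D^2$ from all $i$ collapse via $\tfrac{1}{2}\sum_i v_i^2 + \sum_{i<j} v_i v_j = \tfrac{1}{2}(\sum_i v_i)^2 = \tfrac{1}{2}$, producing the scalar $\tfrac{1}{2} D^2$ in $\hat h^{\rm bps}_{0,\rm kin}$, while the subleading $v_i D$ contributions from $(D_i - v_iD)^2$ combine with the $\alpha_{N-1} v_N D$ term generated by the magnetic sector at $i = N$ (where $D_N = 0$) to reproduce the coefficient $\alpha_{N-1} + 1 - N + \sum_i (N - i - \alpha_{N-1}) v_i$ of $D$ in $\hat h^{\rm bps}_{0,\rm mag}$. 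The leftover constant pieces, once the extra $-\tfrac{1}{2\qe}\sum_\omega \alpha_\omega^2$ produced by conjugating $\kappa\partial_\qe$ cancels the $\sum_i \alpha_{i-1}^2/2$ coming from the $i = N$ term (where $u_N = 0$) and from the $\tfrac{1}{2}(D_\omega - \alpha_\omega)^2$ summand, reduce to the stated $\hat h^{\rm bps}_{0,\rm pot}$ and, by an analogous (slightly shorter) analysis applied to $\hat H_1$, to $\hat h^{\rm bps}_{1,\rm pot}$.
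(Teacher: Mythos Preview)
Your proposal is correct and follows exactly the approach the paper intends: the paper itself provides no proof beyond the sentence ``The following is a straightforward reformulation of Theorem~\ref{Differential eq-n 1} in the present setting,'' and what you have written is precisely that reformulation spelled out --- conjugation by ${\bf\Psi}^{\rm tree}$ followed by the change of variables $w_\omega \to v_i$ on the degree-zero slice, with the bookkeeping tracked through the kinetic, magnetic, and potential sectors. Your identification of the key simplifications (the collapse $\tfrac12(\sum_a v_a)^2 = \tfrac12$, the vanishing $u_N=0$, and the cancellation of $\tfrac{1}{2\qe}\sum_\omega\alpha_\omega^2$ against the constant from conjugating $\kappa\partial_\qe$) is accurate.
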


\begin{rem}
The operator ${\nabla}^{\rm bps}$ of~(\ref{eqn:bps-connection}) depends, explicitly, on ${\vec\mu}, {\delta}{\vec\mu}, {\vec\alpha}$.
However, Theorem~\ref{Differential eq-n 1} shows that the ${\vec\alpha}$ dependence is a pure gauge: 
\beq
  Y^{-1}\, {\nabla}^{\rm bps}\, Y \qquad {\rm is} \qquad {\vec\alpha}{\rm -independent} \, , 
\eeq
where (cf.~\eqref{eq:psitree})
\beq
  Y = 
  {\qe}^{\frac{1}{2{\kappa}} \sum_{\omega = 0}^{N-1} {\alpha}_{\omega}^{2}} \cdot 
  \prod\limits_{i=1}^{N} v_{i}^{{\alpha}_{i-1} - {\mu}_{i-1}} \, .
\eeq
\end{rem}


\section{The CFT side, or the projection method}\label{ssec CFT side}

The operator ${\hat h}^{\rm bps}_{0}/{\qe} + {\hat h}^{\rm bps}_{1}/({\qe}-1)$ of~(\ref{eqn:bps-connection}) can be 
viewed as a time-dependent Hamiltonian of a quantum mechanical system with $N-1$ degrees of freedom $v_1, \ldots , v_{N-1}$. 
The parameters $\vec{\mu}=(\mu_0,\dots,\mu_{N-1}), \delta\vec{\mu}=(\delta\mu_0,\dots,\delta\mu_{N-1})$ play the r{\^o}le 
of the coupling constants, while the parameters $\vec{\alpha}=(\alpha_0,\dots,\alpha_{N-1})$ play the r{\^o}le of the
spectral parameters, such as the asymptotic momenta of $N$ particles, in the center-of-mass frame, where the interactions
between the particles can be neglected. 

The BPS/CFT correspondence \cite{NN2004} suggests to look for 
the representation-theoretic realization of the operators ${\hat h}^{\rm bps}_{0}$ and ${\hat h}^{\rm bps}_{1}$. 

We present such a realization below. 


\subsection{Flags, co-flags, lines, and co-lines}

Let $W \approx {\BC}^{N}$ be the complex vector space of dimension $N$, and let $W^*$ denote its dual. 
Let $F(W),\, F(W^{*}),\, {\BP}(W),\, {\BP}(W^*)$  denote the space of complete flags in $W$, the space of complete flags 
in $W^{*}$,  the projective space of lines in $W$, and the projective space of lines in $W^*$, respectively.  
The natural action of the general linear group $GL(W)$ on $W$ and $W^*$ gives rise to canonical actions of $GL(W)$ on 
those four projective varieties. Let $J^{a}_{b},\, {\tilde J}^{a}_{b},\, V^{a}_{b},\, {\tilde V}^{a}_{b}$, with 
$a, b = 1, \ldots , N$, denote the vector fields on $F(W),\, F(W^{*}),\, {\BP}(W),\, {\BP}(W^{*})$, respectively, 
representing those actions. Here, to define those vector fields, we need to choose some basis $\{e_{a}\}_{a = 1}^{N}$ 
in $W$, with the dual basis in $W^*$ denoted by $\{{\tilde e}^{b}\}_{b = 1}^{N}$, so that the operators 
\beq
  T^{a}_{b} = e_{b} \otimes {\tilde e}^{a} \in {\rm End}(W)
\eeq
represent the action of the Lie algebra of $GL(W)$ on $W$. They obey the $\gl_N$ commutation relations:
\beq
  \left[ T^{a}_{b}, T^{a'}_{b'} \right] = {\delta}^a_{b'} T^{a'}_{b} -\, {\delta}^{a'}_{b} T^{a}_{b'}
\label{eq:gln}
\eeq
to which we shall refer in what follows. 

We define the second-order differential operators ${\hat h}_{0}, {\hat h}_{1}$ on the product 
\beq
\label{eqn:X-variety}
  {\CalX} \, = \, F(W)  \times F(W^{*}) \times {\BP}(W) \times {\BP}(W^{*}) 
\eeq
by 
\beq
\label{eqn:hp}
  {\hat h}_{0} \, = \sum_{a,b=1}^{N} J^{a}_{b} V^{b}_{a}  \, , \qquad 
  {\hat h}_{1} \, = \sum_{a,b=1}^{N} V^{a}_{b} {\tilde V}^{b}_{a} \, . 
\eeq
These operators are independent of the choice of the basis in $W$ and are globally well-defined on ${\CalX}$.
Furthermore, they commute with the diagonal action of $GL(W)$ on $\CalX$:
\beq
  \left[ J^{a}_{b} + {\tilde J}^{a}_{b} + V^{a}_{b} + {\tilde V}^{a}_{b} , {\hat h}_{p} \right] = 0 \, , 
  \qquad a, b = 1, \ldots, N\, , \quad p  = 0, 1 \, .
\eeq
Note that the center of $GL(W)$ acts trivially on $\CalX$, hence, a natural action of $PGL(W)$ on $\CalX$.


\subsection{The $v$-coordinates}

Let us now endow $W$ with the volume form $\varpi \in {\Lambda}^{N} W^{*}$. Denote
\beq
  {\tilde\pi}^{N} = {\varpi}\, , \qquad {\pi}_{N} = {\varpi}^{-1} \in {\Lambda}^{N}W \, .
\label{eq:topf}
\eeq
Let $H = SL(W, {\varpi}) \approx SL(N, {\BC})$ denote the group of linear transformations of $W$ preserving~$\varpi$. 
The center $Z(H)\simeq \BZ_N$ of $H\subset GL(W)$ is finite and acts trivially on $\CalX$. There is an $H$-invariant 
open subset ${\CalX}^{\circ}$ (described in~(\ref{eqn:X-null})) of $\CalX$, on which the action of $H/Z(H)$ is free. 
The corresponding quotient ${\CalX}^{\circ}/H$ can be coordinatized by the values of $N-1$ functions
$v_{1}, \ldots, v_{N-1}$,  defined as follows:
\beq
\label{eq:vcor}
  v_{i} \left( w, {\tilde w}, {\rz}, {\tz}  \right) = 
  \frac{\left( {\tz} \wedge {\tilde \pi}^{i-1} \right) ({\pi}_{i}) \cdot {\tilde\pi}^{i} \left( {\rz} \wedge {\pi}_{i-1} \right)}
       {{\tz}({\rz}) \cdot {\tilde\pi}^{i-1}({\pi}_{i-1}) \cdot {\tilde\pi}^{i}({\pi}_{i})} 
  \, , \qquad i = 1, \ldots , N-1 \, ,
\eeq
where 
$$
  \left(  w =  \left( W_i \right)_{i=1}^{N-1} \, , \,  {\tilde w} = \left( {\tilde W}_{i} \right)_{i=1}^{N-1} \, , \,  
  {\rz} \, , \, {\tz}  \right) \in {\CalX}^{\circ}
$$ 
is the collection consisting of a pair 
\beq
\begin{aligned}
  w\colon  & \qquad 
    0 = W_{0} \subset W_{1} \subset W_{2} \subset \ldots  \subset W_{N-1} \subset W_{N} \equiv W \in F(W) \, , \\
  {\tilde w}\colon & \qquad 
    0 = {\tilde W}_{0} \subset {\tilde W}_{1} \subset {\tilde W}_{2} \subset \ldots  \subset {\tilde W}_{N-1} 
    \subset {\tilde W}_{N} \equiv W^{*} \in F(W^{*})\, 
\end{aligned}
\label{eq:ffpp}
\eeq
of flags in $W$ and $W^{*}$, respectively, and another pair
\beq
  {\BC}  {\rz} \subset W\, , \qquad {\BC} {\tz} \subset W^{*}  
\eeq
of lines in $W$ and $W^{*}$; and finally,  
\beq
\label{eq:pis}
  {\pi}_{i} = {\Lambda}^{i}W_{i} \subset {\Lambda}^{i}W\, , \qquad 
  {\tilde\pi}^{i} = {\Lambda}^{i}{\tilde W}_{i} \subset {\Lambda}^{i}W^{*}  
\eeq
are the corresponding $i$-polyvector and the $i$-form on $W$, both defined up to a scalar multiplier. 
Note that these scalar factor ambiguities cancel out in \eqref{eq:vcor}. 

We can also view $v_i$'s as meromorphic functions on ${\CalX}/H$. To this end, we promote $\pi_i, {\tilde\pi}^i, {\rz}, {\tz}$ 
to global objects, the canonical holomorphic sections of the corresponding vector bundles:
\beq
  {\Pi}_{i} \in H^0 \left( F(W), {\Lambda}^{i}W \otimes {\rm det}(W_{i})^{-1} \right)\, , \qquad
  {\tilde\Pi}^{i} \in H^{0} \left( F(W^{*}) , {\Lambda}^{i} W^{*} \otimes {\rm det}({\tilde W}_{i}) \right) ,
\label{eq:pisec}
\eeq
and
\beq
  Z \in H^{0} \left( {\BP}(W), W \otimes {\CalO}(1) \right) \approx W \otimes W^{*}\, , \qquad 
  {\tilde Z} \in H^{0} \left( {\BP}(W^{*}), W^{*} \otimes {\CalO}(1) \right) \approx W^{*} \otimes W\, , 
\eeq
and define
\beq
  v_{i} = 
  \frac{\left( {\tilde Z} \wedge {\tilde\Pi}^{i-1} \right) 
        \left( {\Pi}_{i} \right) \cdot {\tilde\Pi}^{i} \left( Z \wedge {\Pi}_{i-1} \right)}
       {{\tilde Z}(Z) \cdot {\tilde\Pi}^{i}({\Pi}_{i}) \cdot {\tilde\Pi}^{i-1} ({\Pi}_{i-1})}\, , \qquad i=1,\ldots,N-1 \, .
\label{eq:vcorf}
\eeq
We also note that while~(\ref{eq:vcor},~\ref{eq:vcorf}) can be extended to $i = N$, the corresponding quantity $v_N$ satisfies
\beq
\label{eq:sumv}
  \sum_{a=1}^{N} v_{a} = 1 \, ,
\eeq
due to the Desnanot-Jacobi-Dodgson-Sylvester theorem, which states that
\beq
  v_{a+1} = u_{a} - u_{a+1}\, , \qquad 
  u_{a} = \frac{\left( {\tilde Z} \wedge {\tilde \Pi}^{a} \right)  \left( Z \wedge {\Pi}_{a} \right)}
               {{\tilde Z}(Z) \cdot {\tilde\Pi}^{a}({\Pi}_{a})}
  \, , \qquad a=0, \ldots, N-1 \, .
\label{eq:ucor}
\eeq
The open set $\CalX^{\circ}\subset \CalX$ has the following description: there exists a basis $e_a$ in $W$ such that
\beq
\label{eqn:X-null}
\begin{split}
  & W_i = \mathrm{Span} (e_{1}, \ldots , e_{i})\, , \qquad 
    {\tilde W}_{i} = \mathrm{Span} ({\tilde e}^{1}, \ldots, {\tilde e}^{i})\, , \\ 
  & Z = \sum_{a=1}^{N} {\xi}_{a} e_{a}\, , \qquad {\tilde Z} = \sum_{a=1}^{N} {\xi}_{a} {\tilde e}^{a}\, , 
    \qquad {\xi}_{a} \neq 0 \, .
\end{split}
\eeq
We note that the aforementioned equality~\eqref{eq:sumv} is obvious in this basis, since 
\beq
  v_{a} = \frac{{\xi}_{a}^2}{{\xi}_1^{2}+\ldots+{\xi}_N^2} \, , \qquad a = 1, \ldots , N \, .
\eeq

\begin{rem}
\label{rem:flag duality}
The flag varieties $F(W)$ and $F(W^{*})$ are isomorphic. For example, the assignment $W_{i} = {\tilde W}^{\perp}_{N-i}$ 
gives rise to an isomorphism $F(W^{*})\iso F(W)$. Alternatively, fixing the volume form $\varpi \in {\Lambda}^{N}W^{*}$, 
we have an $SL(W)$-equivariant isomorphism  $F(W)\iso F(W^{*})$ given by:
\beq
  {\tilde\pi}^{i} = {\varpi} ( {\pi}_{N-i} )\, , \qquad i = 1, \ldots, N-1 \, .
\eeq
\end{rem}

\begin{rem} 
\label{rem:cross-ratio}
In the $N=2$ case, we have $F(W) \simeq F(W^*) \simeq {\BP}(W) \simeq {\BP}(W^*)$, and the only nontrivial coordinate 
$v_1$ of~\eqref{eq:vcor} is determined by the usual cross-ratio of four points on ${\BC\BP}^1$. More precisely, 
if $z_1, z_2, z_3, z_4 \in W$ are defined (each up to a scalar multiplier) by:
\beq
  z_{1} = {\pi}_{1}\, , \ {\varpi} ( z_2, {\cdot} ) = {\tilde z}\, , \  
  {\varpi} ( z_3, {\cdot} ) = {\tilde\pi}^{1}\, , \  z_4 = z\, , 
\eeq
then
\beq
  v_{1} = \frac{{\varpi}(z_2, z_1) {\varpi}(z_3, z_4)}{{\varpi} (z_3, z_1) {\varpi}( z_2, z_4 )}
\eeq
depends only on the four points ${\BC}z_{i} \in {\BP}(W)$. 
\end{rem}


\subsection{The ${\mathfrak L}$-twist}\label{ssec line bundle}

Let $L_{1}, \ldots , L_{N-1}$ denote the tautological line bundles over $F(W)$, the fiber of $L_i$ over the point 
$0 = W_{0} \subset W_{1} \subset W_{2} \subset \ldots \subset W_{N-1} \subset W_{N} \equiv W$ being
\beq
  L_i = W_{i}/W_{i-1}\, , \qquad i = 1, \ldots , N-1 \, .
\eeq
Similarly, let ${\tilde L}^1, \ldots, {\tilde L}^{N-1}$ denote the tautological line bundles over $F(W^{*})$, and 
\beq
  {\CalL} = {\CalO}_{{\BP}(W)} (-1)\, , \qquad {\tilde\CalL} = {\CalO}_{{\BP}(W^{*})}(-1)
\eeq
be the tautological line bundles over ${\BP}(W),\, {\BP}(W^{*})$, respectively. We note that
$$
  {\rm det}(W_{a}) = {\Lambda}^{a} W_{a}\simeq \bigotimes_{i=1}^a L_i\, , \qquad 
  {\rm det}({\tilde W}_{a}) = {\Lambda}^{a} {\tilde W}_{a}\simeq \bigotimes_{i=1}^a \tilde{L}^i\, , \qquad a=1,\ldots, N-1 \, .
$$
All these line bundles are $GL(W)$-equivariant. By abuse of notation, we shall use the same notations for the 
pull-backs of the aforementioned line bundles to $\CalX$ of~\eqref{eqn:X-variety} under the natural projections. 
The line bundles 
  ${\tilde\CalL}^{-1} \otimes {\Lambda}^{a-1}{\tilde W}_{a-1} \otimes ({\Lambda}^{a} W_{a})^{-1}$, 
  ${\Lambda}^{a}{\tilde W}_{a} \otimes {\CalL}^{-1} \otimes ({\Lambda}^{a-1} W_{a-1})^{-1}$, 
  ${\Lambda}^{i} {\tilde W}_{i} \otimes ({\Lambda}^{i} W_{i})^{-1}$, and ${\tilde\CalL}^{-1} \otimes {\CalL}^{-1}$
on $\CalX$ are $H$-invariant (and those with $a< N$ are actually $GL(W)$-invariant). 
Furthermore, each factor in formula~(\ref{eq:vcorf}) can be viewed as a holomorphic section of one 
of those line bundles. For example, 
\beq
  {\tilde\Pi}^{a} ( Z \wedge {\Pi}_{a-1} )
\eeq
is a holomorphic section of ${\rm det} ( {\tilde W}_{a}) \otimes {\CalL}^{-1} \otimes {\rm det}(W_{a-1})^{-1}$. 
Its zeroes determine the locus in $\CalX$ where the plane $W_{a-1}$, the line ${\BC}z$, and the plane 
${\tilde W}_{a}^{\perp} \subset W$ are not in general position, i.e., their linear span does not coincide 
with the entire $W$. Let $\Sigma \subset {\CalX}^{\circ}$ denote the union of vanishing loci of 
  ${\tilde\Pi}^{a} ( Z \wedge {\Pi}_{a-1} ),\, 
   ({\tilde Z} \wedge {\tilde\Pi}^{a-1}) ({\Pi}_{a}),\, 
   {\tilde \Pi}^{i}({\Pi}_{i})$
for $a = 1, \ldots, N$ and $i = 1, \ldots , N-1$. 

For ${\vec n}, {\vec{\tilde n }} \in {\BC}^{N},\ {\vec\gamma} \in {\BC}^{N-1}$, 
consider the tensor product of ``complex powers of line bundles''
\begin{equation}
\begin{split}
  & {\mathfrak L} = 
  \bigotimes_{i=1}^{N-1}  \Big( {\rm det}(W_{i}) \Big)^{-\nu_i} \otimes \, 
  \bigotimes_{i=1}^{N-1} \Big( {\rm det}({\tilde W}_{i}) \Big)^{{\tilde\nu}_{i}}  \otimes \, 
  {\CalL}^{-\mf} \, \otimes \, {\tilde\CalL}^{-\tilde{\mf}} \, = \\
  & \bigotimes_{a=1}^{N} \, 
    \left( {\tilde\CalL}^{-1} \otimes {\rm det}({\tilde W}_{a-1}) \otimes {\rm det} ( W_{a})^{-1} \right)^{{\tilde n }_{a}} 
    \otimes \,
    \left( {\rm det} ({\tilde W}_{a}) \otimes {\CalL}^{-1} \otimes {\rm det} ( W_{a-1})^{-1} \right)^{{ n }_{a}} \otimes \, \\
  &  \qquad \bigotimes_{i=1}^{N-1} 
      \left( {\rm det}( {\tilde W}_{i} ) \otimes {\rm det} ( W_{i})^{-1} \right)^{\gamma_i - n_i - {\tilde n}_i}
\label{eq:clb}
\end{split}
\end{equation}
defined on any simply-connected open domain ${\CalU} \subset \left({\CalX}^{\circ}\backslash \Sigma \right)/H$. 
Here, the complex numbers ${\mf}, {\tilde{\mf}} \in {\BC}$ and the vectors  ${\vec\nu}, {\vec{\tilde\nu}} \in {\BC}^{N-1}$ are defined via:
\begin{equation}
\label{eqn:mz-parameters}
\begin{split}
  & {\mf} = \sum_{a=1}^{N} n_{a}\, ,\quad  {\tilde{\mf}} = \sum_{a=1}^{N} {\tilde n }_{a}\, ,\\
  & {\nu}_{i} = n_{i+1} - n_{i} + {\gamma}_{i}  \, ,\qquad
  {\tilde\nu}_{i} =  {\tilde n }_{i+1} - {\tilde n}_{i} + {\gamma}_{i}\, , \qquad
  i = 1, \ldots , N-1 \, .
\end{split}
\end{equation}

\noindent
Our main result is:

\begin{thm}\label{BPSCFT}
The operators ${\hat h}^{\rm bps}_{0}, {\hat h}^{\rm bps}_{1}$ of~(\ref{eqn:bps-connection}) coincide with the operators 
${\hat h}^{\rm cft}_{0}, {\hat h}^{\rm cft}_{1}$, which are ${\hat h}_{0}, {\hat h}_{1}$ of~(\ref{eqn:hp}), viewed now as 
the differential operators on ${\CalX}^{\circ}/H$, twisted by the ``line bundle''~${\mathfrak L}$:
\beq
\label{eq:cft-operators}
  {\hat h}^{\rm cft}_{p} \, =\,  {\Upsilon}^{-1} \, {\hat h}_{p} \, {\Upsilon} \, , \qquad p=0,1 \, ,  
\eeq
where
\beq
  {\Upsilon} = 
  \prod\limits_{a=1}^{N}\, 
  \left( \frac{\left( {\tilde Z} \wedge {\tilde\Pi}^{a-1} \right)  ({\Pi}_{a}) }{{\tilde\Pi}^{a}({\Pi}_{a})} \right)^{{\tilde n }_{a}} 
  \cdot\, \left( \frac{{\tilde\Pi}^{a} \left( {Z} \wedge {\Pi}_{a-1} \right)}{{\tilde\Pi}^{a}({\Pi}_{a})} \right)^{n_{a}} \cdot \, 
  \prod\limits_{i=1}^{N-1} \left( {\tilde\Pi}^{i} \left( {\Pi}_{i}\right) \right)^{\gamma_i}
  \label{eq:ups}
\eeq
is the holomorphic section of $\mathfrak L$ on ${\CalU}$. The parameters $\vec{n}, \vec{\tilde n}, \vec{\gamma}$ 
are related to the parameters $\vec{\mu}, \delta{\vec\mu}$ and $\vec{\alpha}$ (which encode the mass parameters $\bm$ 
and the Coulomb parameters $\ba$ via~(\ref{eq:massparam},~\ref{eq:massdeltamass}) and~(\ref{eq:coulpar}), respectively) 
as follows:
\beq
\begin{aligned}
  &  n_{b} = {\mu}_{b-1} + {\delta\mu}_{b-1} - {\alpha}_{b-1}\, , \\
  &  {\tilde n }_{b} = {\mu}_{b-1} - {\delta\mu}_{b-1} - {\alpha}_{b-1}\, , \\ 
  &  {\gamma}_{i} = - 1 - {\alpha}_{i-1} + {\alpha}_{i}\, ,
\end{aligned}
\label{eq:betagammamual}
\eeq
for $b=1, \ldots, N$ and $i = 1, \ldots, N-1$. 
\end{thm}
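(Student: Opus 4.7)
The plan is to express ${\hat h}^{\rm cft}_{0}$ and ${\hat h}^{\rm cft}_{1}$ as explicit second-order differential operators on ${\CalX}^{\circ}/H$ in the coordinates $v_{1}, \ldots, v_{N-1}$, and to match them, term by term, with ${\hat h}^{\rm bps}_{0}$ and ${\hat h}^{\rm bps}_{1}$ of Theorem~\ref{Differential eq-n 1 v-ccordinates}. Since both ${\hat h}_{0} = \sum J^{a}_{b} V^{b}_{a}$ and ${\hat h}_{1} = \sum V^{a}_{b} {\tilde V}^{b}_{a}$ are manifestly $GL(W)$-invariant, and since $\Upsilon$ is a holomorphic section of the $H$-invariant ``line bundle'' ${\mathfrak L}$ on simply-connected ${\CalU} \subset ({\CalX}^{\circ}\setminus \Sigma)/H$, the conjugated operators descend to bona fide differential operators on ${\CalU}$.

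First I would fix a holomorphic local slice of the principal $H$-bundle ${\CalX}^{\circ}\to {\CalX}^{\circ}/H$ using the presentation~\eqref{eqn:X-null}: pick the basis $\{e_{a}\}$ in which the flags $W_{i}, {\tilde W}_{i}$ are the standard ones, while $Z = \sum_{a} {\xi}_{a} e_{a}$ and ${\tilde Z} = \sum_{a} {\xi}_{a} {\tilde e}^{a}$. The residual gauge group (the diagonal torus of $H$, modulo $Z(H)$) acts on the $\xi_{a}$'s by $\xi_{a} \mapsto t_{a}\xi_{a}$ with $\prod_{a} t_{a}^{2} = 1$, so that the $H$-invariants are generated by the ratios $\xi_{a}^{2}/\sum_{b} \xi_{b}^{2}$, which is exactly~$v_{a}$ by the final computation of \S\ref{ssec line bundle}. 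In the same gauge, the sections of~\eqref{eq:ups} become explicit rational monomials in the $\xi_{a}$'s.

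Next I would compute each vector field $J^{a}_{b},\, {\tilde J}^{a}_{b},\, V^{a}_{b},\, {\tilde V}^{a}_{b}$ on this slice. The diagonal generators $T^{a}_{a}$ act by $\xi_{a}\partial/\partial\xi_{a}$ on $Z,{\tilde Z}$ and trivially on the (standard) flags. An off-diagonal generator $T^{a}_{b}$ rotates $e_{b}$ toward $e_{a}$, so it moves the flags off the slice; restoring the slice by an infinitesimal compensating $H$-transformation produces an additional $\xi$-derivative contribution. The resulting first-order operators are bookkeeping-intensive but mechanical. The conjugation by $\Upsilon$ then shifts each $T^{a}_{b}$ by the logarithmic derivative $T^{a}_{b}(\log \Upsilon)$, which is a rational function of the $\xi$'s whose residues at $a=b$ and off-diagonal entries can be read off directly from~\eqref{eq:ups}, producing first- and zeroth-order contributions parametrized by $(\vec n, \vec{\tilde n}, \vec\gamma)$.

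Finally, I would assemble $\sum_{a,b} J^{a}_{b} V^{b}_{a}$ and $\sum_{a,b} V^{a}_{b} {\tilde V}^{b}_{a}$, use the relations~\eqref{eq:gln} and $\sum_{a} T^{a}_{a} \equiv 0$ on $H$-invariants to eliminate the residual-gauge directions, and change variables from $(\xi_{1}^{2},\ldots,\xi_{N}^{2})$ to $(v_{1},\ldots,v_{N-1})$. After substituting~\eqref{eq:betagammamual} to trade $(\vec n, \vec{\tilde n}, \vec\gamma)$ for $(\vec\mu, \delta\vec\mu, \vec\alpha)$, the pure second-order piece of $\hat h_{p}$ should yield $\hat h^{\rm bps}_{p,\rm kin}$, the cross-terms between $T^{a}_{b}$ and $T^{a}_{b}(\log\Upsilon)$ should produce $\hat h^{\rm bps}_{p,\rm mag}$, and the quadratic log-derivative terms should collapse into $\hat h^{\rm bps}_{p,\rm pot}$. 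The main obstacle is precisely this final collapse: the quadratic operators generate many cross-terms that must be reorganized using the $\mathfrak{gl}_{N}$ commutators~\eqref{eq:gln} and the explicit identities among the minors $\tilde{\Pi}^{a}(Z\wedge \Pi_{a-1}),\, (\tilde{Z}\wedge\tilde{\Pi}^{a-1})(\Pi_{a}),\, \tilde{\Pi}^{i}(\Pi_{i})$ (an avatar of the Desnanot-Jacobi identity already used in~\eqref{eq:ucor}) in order to reproduce the clean kinetic/magnetic/potential split of~\eqref{eq:bpsvv} without residual spurious terms.
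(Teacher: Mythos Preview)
Your proposal is correct and follows essentially the same strategy as the paper: set up explicit coordinates on $\CalX^{\circ}$, write the vector fields $J^{a}_{b}, V^{a}_{b}, {\tilde V}^{a}_{b}$ as first-order operators, conjugate by $\Upsilon$, pass to the $v$-coordinates, and match term-by-term against~\eqref{eq:bpsvv}. The paper carries this out using the quiver presentation~\eqref{eqn:flag-as-qtnt},~\eqref{eqn:dualflag-as-qtnt} of the flag factors (so that $J^{a}_{b}$ is given by~\eqref{eq:vabvf} and $V^{a}_{b},{\tilde V}^{a}_{b}$ by~\eqref{eq:glnz}), and records the building-block identities --- the values of $\partial v_i/\partial{\rz}^a$, $J^{a}_{b}(v_i)$, $J^{a}_{b}(\log\Upsilon)$, etc.\ --- in Appendix~\ref{sec:cftcalc}, after which the match is declared straightforward. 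Your route via the slice~\eqref{eqn:X-null} with compensating $H$-transformations is an equivalent bookkeeping choice; the paper instead keeps all of the $U_i$, ${\tilde U}_i$, ${\rz}^a$, ${\tz}_a$ as independent variables and computes $H$-invariant combinations directly, which avoids the slice-restoration step but leads to the same identities.
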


For future use, let us record the relation between the parameters of the gauge theory and the parameters 
${\vec\nu}, {\vec{\tilde\nu}}, {\mf} , {\tilde{\mf}}$ of~\eqref{eqn:mz-parameters}:
\beq
\begin{aligned}
  & {\ve}_{1}{\nu}_{i} = m^{+}_{i+1} - m^{+}_{i} - {\ve}_{1}\, , \qquad 
   {\ve}_{1} {\tilde\nu}_{i} = m^{-}_{i+1} - m^{-}_{i}-{\ve}_{1} \, , \\
  & {\ve}_{1}{\mf} =  \sum_{f=1}^{N} m_{f}^{+} - \sum_{b=1}^{N} a_{b}  \, , \qquad
   {\ve}_{1}{\tilde{\mf}} = \sum_{f=1}^{N} m_{f}^{-} - \sum_{b=1}^{N} a_{b} \, ,
\end{aligned}
\label{eq:zetamass}
\eeq
where we used~(\ref{eq:coulpar}, \ref{eq:massdeltamass}) and the second formula of~(\ref{eq:bulkamve}).


\subsection{Proof of Theorem~\ref{BPSCFT}}\label{ssec bps=cft}

The vector fields $V^{a}_{b}, {\tilde V}^{a}_{b}$ can be explicitly written in the homogeneous coordinates 
$({\rz}^1:{\rz}^2: \cdots :{\rz}^N)$ on ${\BP}(W)$ and $({\tz}_{1}:{\tz}_{2} : \cdots : {\tz}_{N})$ 
on ${\BP}(W^{*})$:
\beq
  V^{b}_{a} = -{\rz}^{b} \frac{{\partial}}{{\partial} {\rz}^{a}}\, , \qquad 
  {\tilde V}^{b}_{a} = {\tz}_{a} \frac{{\partial}}{{\partial} {\tz}_{b}}\, ,
\label{eq:glnz}
\eeq
so that ${\hat h}_{1}$ of~(\ref{eqn:hp}) is explicitly given by:
\beq
  {\hat h}_{1} = - {\tz}({\rz})\cdot \sum_{a=1}^{N} \frac{{\partial}^{2}}{{\partial} {\rz}^{a} {\partial} {\tz}_{a}} \, ,
\label{eq:h1}
\eeq
where
\beq
  {\tz}({\rz}) = \sum_{a=1}^{N} {\tz}_{a} {\rz}^{a} \, .
\eeq
The minus sign in \eqref{eq:glnz} in the formula for $V_{a}^{b}$  does match the commutation relations~\eqref{eq:gln}. 
This minus sign is due to the fact that the vector space of polynomials in $z^a$'s is the symmetric algebra built on $W^*$, 
while that of polynomials in $\tilde z_a$'s is built on $W$. Thus, \eqref{eq:glnz} is the infinitesimal version of 
the group action, where $h \in GL(W)$ acts on $f = f({\rz}), {\tilde f} = {\tilde f}({\tz})$ via 
$f \mapsto f^{h}, {\tilde f} \mapsto {\tilde f}^{h}$:
\beq
  f^{h}({\rz}) = f (h^{-1} \cdot {\rz}) \, , \qquad 
  {\tilde f}^{h}({\tz}) = {\tilde f}({\tz}\cdot  h ) \, .
\eeq

As for $J^{a}_{b}, {\tilde J}^{a}_{b}$, let us first recall the quiver description of the flag varieties $F(W), F(W^{*})$. 
Let $F_{1}, {\tilde F}_{1}, \ldots, F_{N-1}, {\tilde F}_{N-1}$ be the sequence of complex vector spaces 
with ${\rm dim}\, F_{i} = {\rm dim}\, {\tilde F}_{i} = i$. Consider the vector spaces of linear maps: 
\beq
  {\CalA} \, =\, \bigoplus\limits_{i=1}^{N-1} \, {\rm Hom} (F_{i}, F_{i+1})\, , 
\label{eq:cavec}
\eeq
\beq
  {\tilde\CalA} \, =\, \bigoplus\limits_{i=1}^{N-1} {\rm Hom}({\tilde F}_{i+1}, {\tilde F}_{i})\, ,
\label{eq:tcavec}
\eeq
where we set $F_{N}=W$ and ${\tilde F}_{N}=W$. 
Consider the groups
\beq
  {\CalG} \, = \, \prod_{i=1}^{N-1} GL(F_{i})\, , \qquad 
  {\tilde\CalG} \, = \, \prod_{i=1}^{N-1} GL( {\tilde F}_{i})
\eeq
of linear transformations of the respective vector spaces. 
The groups $\CalG$, ${\tilde\CalG}$ act on ${\CalA}, {\tilde\CalA}$, respectively, in the natural way:
\begin{equation}\label{eq:utou}
\begin{split}
 & \left( g_{i} \right)_{i=1}^{N-1} \colon  
   \left( U_{i} \right)_{i=1}^{N-1}\, \in {\CalA} \, \mapsto  \left( g_{i+1} U_{i} g_{i}^{-1} \right)_{i=1}^{N-1}\, \in {\CalA}\, , \\
 & \left( {\tilde g}_{i} \right)_{i=1}^{N-1} \colon 
   \left( {\tilde U}_{i} \right)_{i=1}^{N-1}\, \in {\tilde\CalA} \, \mapsto  
   \left( {\tilde g}_{i} {\tilde U}_{i} {\tilde g}_{i+1}^{-1} \right)_{i=1}^{N-1}\, \in {\tilde\CalA}\, ,
\end{split}
\end{equation}
where 
  $g_{i} \in GL(F_{i}),\, U_{i}\colon F_{i} \to F_{i+1},\, 
   {\tilde g}_{i} \in GL({\tilde F}_{i}),\, {\tilde U}_{i}\colon {\tilde F}_{i+1} \to {\tilde F}_{i}$, 
and $g_N, {\tilde g}_{N}$ are vacuous. 
Then, the flag variety $F(W)$ is the quotient of the open subvariety $\CalA^{s}$ of $\CalA$, consisting of the collections 
$\left( U_{i} \right)_{i=1}^{N-1}$ for which the composition $U_{N-1} U_{N-2} \cdots U_{i}\colon F_{i} \to W$ has no kernel 
for any $i=1,\dots,N-1$, by the free action of ${\CalG}$: 
\beq
\label{eqn:flag-as-qtnt}
  F(W) = {\CalA}^{s}/{\CalG} \, .
\eeq
We can represent the $\pi_i$'s of~\eqref{eq:pis}, in coordinates, as:
\beq
\label{eq:pic}
  {\pi}_{i}  \ = 
  \sum_{1\leq a_1 < a_2 < \ldots < a_i \leq N}\, {\rm Det} 
    \Big{\Vert} \Big[ U_{N-1}U_{N-2}\cdots U_{i} \Big]^{a_k}_{\ell} \Big{\Vert}_{k,\ell=1}^{i} \ 
    e_{a_{1}} \wedge \cdots \wedge e_{a_{i}} \, .
\eeq
Here, $\Big[ U_{N-1}U_{N-2}\cdots U_{i} \Big]^{a_k}_{\ell}$ denote the matrix coefficients of the corresponding 
linear operator with respect to some bases $\{{\ve}_{\ell}^{(i)}\}_{\ell=1}^i$ in $F_i$ and the chosen basis $\{e_a\}_{a=1}^N$ in $W$. 
Note that the group $\CalG$ acts on ${\CalA}^{s}$ by the changes of bases $\{{\ve}_{\ell}^{(i)}\}_{\ell=1}^i$ 
in each $F_i$: ${\ve}_{\ell}^{(i)} \mapsto \sum_{m=1}^{i} g_{i | \ell}^{m} {\ve}_{m}^{(i)}$. This results 
in $U_{N-1}U_{N-2}\cdots U_{i}$ being multiplied on the right by $g_{i}^{-1}$; hence, according to~(\ref{eq:pic}), 
the ${\pi}_{i}$'s are transformed via:
\beq
  {\pi}_{i} \mapsto {\pi}_{i} \cdot {\rm det}(g_{i})^{-1}\, , 
\eeq
thus justifying the ${\rm det}(W_{i})^{-1}$ factor in \eqref{eq:pisec}. 
The group $GL(W)$ acts on $\CalA$ via: 
\beq
\label{eqn:GL-action-quiver}
  h \cdot \left( U_{N-1}, U_{N-2}, \ldots , U_{1} \right) \, =\, \left( h\, U_{N-1}, U_{N-2} , \ldots, U_{1} \right)\, .
\eeq
\noindent
This $GL(W)$-action preserves ${\CalA}^s\subset \CalA$ and also commutes with the $\CalG$-action. The resulting action 
of $GL(W)$ on ${\CalA}^s/{\CalG}$ clearly coincides with the natural action of $GL(W)$ on $F(W)={\CalA}^s/{\CalG}$. 
Accordingly, the $GL(W)$-action on functions on $F(W)$ is given by:
\beq
  h\colon f \mapsto f^{h} \, , \qquad f^{h}[ U_{N-1}, U_{N-2} , \ldots, U_{1} ] = f [ h^{-1} U_{N-1} , U_{N-2} , \ldots, U_{1} ] \, .
\eeq
This means that  the vector field $J_{a}^{b}\in Vect(F(W))$ representing the action of the element 
$T_a^b=e_{a} \otimes {\tilde e}^{b}\in \gl(W)$ on functions on $F(W)$ is given by (cf.\ the first formula of~(\ref{eq:glnz})):
\beq
  J_{a}^{b} \, = - \sum_{m=1}^{N-1} U_{N-1|m}^{b} \frac{{\partial}}{{\partial}U^{a}_{N-1|m}} \, , 
\label{eq:vabvf}
\eeq
where $U_{N-1|m}^{a}$ are the matrix coefficients of $U_{N-1}\colon F_{N-1} \to W$ defined via: 
\beq
  U_{N-1} {\ve}_{m}^{(N-1)} \, =\, \sum_{a=1}^{N} U_{N-1|m}^{a} \, e_{a} \, .
\eeq
Up to a compensating infinitesimal $g_i$-transformation, the vector field $J_{a}^{b}$ acts on ${\pi}_{i}$ 
(more precisely, on functions of $\pi_i$ viewed as functions on  $F(W)$) by:
\beq
\label{eq:action-on-Pi}
  J_{a}^{b} {\pi}_{i} \, = -  e_{a} \wedge {\tilde e}^{b}\, {\pi}_{i} \, .
\eeq
To clarify, the right-hand side of ~\eqref{eq:vabvf} should be viewed as a descent of the 
$\CalG$-equivariant vector field on ${\CalA}^s$, given by the same formula, to the quotient space ${\CalA}^s/{\CalG}=F(W)$.
The attentive reader will be content to see that the minus sign in \eqref{eq:vabvf} is needed to match 
the commutation relations~\eqref{eq:gln}. 

Likewise, the flag variety $F(W^*)$ admits the quotient realization:
\beq
\label{eqn:dualflag-as-qtnt}
  F(W^*) = {\tilde\CalA}^{s}/{\tilde\CalG} \, ,
\eeq
where the open subvariety ${\tilde\CalA}^{s}$ of ${\tilde\CalA}$ consists of the collections 
$\left( {\tilde U}_{i} \right)_{i=1}^{N-1}$ for which the composition
${\tilde U}_{i} {\tilde U}_{i+1} \cdots {\tilde U}_{N-1}\colon W \to {\tilde F}_{i}$ has no cokernel 
(i.e., has the maximal rank) for any $i=1,\dots,N-1$, and the action of ${\tilde\CalG}$ on ${\tilde\CalA}^{s}$ is free. 
We can represent the ${\tilde\pi}^i$'s of~\eqref{eq:pis}, in coordinates, as:
\beq
\label{eq:tpic}
  {\tilde\pi}^{i} \ = \sum_{1\leq a_1 < a_2 < \ldots < a_i \leq N} {\rm Det} 
  \Big{\Vert} \left[ {\tilde U}_{i}{\tilde U}_{i+1}\cdots {\tilde U}_{N-2} {\tilde U}_{N-1} \right]^{\ell}_{a_{k}}
  \Big{\Vert}_{k,\ell=1}^{i} \ {\tilde e}^{a_{1}} \wedge \cdots \wedge {\tilde e}^{a_{i}}\, .
\eeq
Here, $\left[ {\tilde U}_{i}{\tilde U}_{i+1}\cdots {\tilde U}_{N-2} {\tilde U}_{N-1} \right]^{\ell}_{a_{k}}$ denote 
the matrix coefficients of the corresponding linear operator with respect to some bases 
$\{{\tilde\ve}_{\ell}^{(i)}\}_{\ell=1}^i$ in ${\tilde F}_i$ and the bases $\{e_a\}_{a=1}^N$ in $W$ which is dual 
to the chosen basis $\{ {\tilde e}^a\}_{a=1}^N$ in $W^{*}$. 
Note  that the group ${\tilde\CalG}$ acts on ${\tilde\CalA}^{s}$ by the changes of bases
$\{{\tilde \ve}_{\ell}^{(i)}\}_{\ell=1}^i$ in each ${\tilde F}_i$: 
  ${\tilde \ve}_{\ell}^{(i)} \mapsto \sum_{m=1}^{i} {\tilde g}_{i | \ell}^{m} {\tilde\ve}_{m}^{(i)}$. 
This results in ${\tilde U}_{i}{\tilde U}_{i+1}\cdots {\tilde U}_{N-2} {\tilde U}_{N-1}$ being multiplied on the left by 
${\tilde g}_{i}$; hence, according to~(\ref{eq:tpic}), the ${\tilde\pi}^{i}$'s are transformed via:
\beq
  {\tilde\pi}^{i} \mapsto {\tilde\pi}^{i} \cdot {\rm det}({\tilde g}_{i})\, , 
\eeq
thus justifying the ${\rm det}({\tilde W}_{i})$ factor in \eqref{eq:pisec}. 
The group $GL(W)$ acts on ${\tilde\CalA}$ via: 
\beq
  h \cdot \left( {\tilde U}_{N-1}, {\tilde U}_{N-2}, \ldots, {\tilde U}_{1} \right) \, =\, 
  \left( {\tilde U}_{N-1} h^{-1}, {\tilde U}_{N-2} , \ldots, {\tilde U}_{1} \right)\, .
\eeq
This action preserves ${\tilde\CalA}^s\subset {\tilde\CalA}$ and also commutes with the ${\tilde\CalG}$-action. 
The resulting action of $GL(W)$ on ${\tilde\CalA}^s/{\tilde\CalG}$ clearly coincides with the natural action of 
$GL(W)$ on $F(W^*)={\tilde\CalA}^s/{\tilde\CalG}$, see~\eqref{eqn:dualflag-as-qtnt}. 
Therefore, the vector field ${\tilde J}_{a}^{b}\in Vect(F(W^*))$ representing the action of the element 
$T_a^b=e_{a} \otimes {\tilde e}^{b}\in \gl(W)$ on $F(W^*)$ is given by (cf.\ the second formula of~(\ref{eq:glnz})):
\beq
  {\tilde J}_{a}^{b} \, = \sum_{m=1}^{N-1} {\tilde U}_{N-1|a}^{m} \frac{{\partial}}{{\partial}{\tilde U}^{m}_{N-1|b}} \, ,
\label{eq:vtabvf}
\eeq
where ${\tilde U}_{N-1|a}^{m}$ are the matrix coefficients of ${\tilde U}_{N-1}\colon W\to {\tilde F}_{N-1}$ defined via: 
\beq
  {\tilde U}_{N-1} e_{a} = \sum_{m=1}^{N-1} {\tilde U}_{N-1|a}^{m} \, {\ve}_{m}^{(N-1)} \, .
\eeq
To clarify, the right-hand side of ~\eqref{eq:vtabvf} should be viewed as a descent of the ${\tilde\CalG}$-equivariant 
vector field on ${\tilde\CalA}^s$, given by the same formula, to the quotient space 
${\tilde\CalA}^s/{\tilde\CalG}=F(W^*)$. The attentive reader will be content to see that 
the commutation relations \eqref{eq:gln} are obeyed by ${\tilde J}_{a}^{b}$ of~(\ref{eq:vtabvf}). 


\subsection{End of proof of Theorem~\ref{BPSCFT}}\label{ssec end-proof}

It remains to compute the action of the operators ${\Upsilon}^{-1} {\hat h}_{p} {\Upsilon}$ in the coordinates $v_i$, 
and then to compare formulas~(\ref{eq:tc1},~\ref{eq:tc2}) in Appendix~\ref{sec:cftcalc} to  
formulas~(\ref{eqn:bps-hamiltonians},~\ref{eq:bpsvv}). We leave this straightforward computation to the interested reader.


\section{Representation theory}\label{sec repr theory}

Let us now explain the representation-theoretic meaning of the main Theorem~\ref{BPSCFT}. 
Namely, we identify the function $\Phi$, given by
\beq
  {\Phi} = {\Upsilon} \left( U, {\tilde U}, {\rz}, {\tz} \right) \cdot {\psi} (v_{1}, \ldots , v_{N-1};{\qe})\, , 
\label{eq:phiups}
\eeq
for any $\qe$, with the $\ssl_N$-invariant in the completed tensor product
\beq
  {\Phi} \in  \left(  V_{1} {\hat\otimes} V_{2} {\hat\otimes} V_{3} {\hat\otimes} V_{4}  \right)^{\ssl_N}
\eeq
of four irreducible infinite-dimensional representations $\{V_i\}_{i=1}^{4}$ of the Lie algebra $\ssl_N$. 

We shall actually define $V_i$'s as representations of $\gl_N$. Let us denote the generators of $\gl_N$ by ${\bJ}_{a}^{b}$, 
with $a, b = 1, \ldots, N$. These obey the commutation relations \eqref{eq:gln}:
\beq
  \left[ {\bJ}^{a}_{b}, {\bJ}^{a'}_{b'} \right] = {\delta}^{a}_{b'} {\bJ}^{a'}_{b} - {\delta}^{a'}_{b} {\bJ}^{a}_{b'} \, .
\label{eq:gln1}
\eeq

\begin{nott}
For a Lie algebra $\mathfrak{g}$, its element $\xi \in \mathfrak{g}$, and a representation $R$ of $\mathfrak{g}$,
we denote by $T_{R}({\xi}) \in {\rm End}(R)$ the linear operator in $R$, corresponding to $\xi$.
\end{nott}

It is well-known that \eqref{eq:gln1} implies that the \emph{Casimir operators}
\beq
  {\CalC}_{k} \ = \sum_{a_{1}, a_{2}, \ldots , a_{k} = 1}^{N} 
  {\bJ}_{a_{1}}^{a_{2}} {\bJ}_{a_{2}}^{a_{3}} \dots {\bJ}_{a_{k}}^{a_{1}} \, \in \, U({\gl}_{N})
\label{eq:casimirs}
\eeq
commute with all generators ${\bJ}^{a}_{b}$, so that in every irreducible $\gl_N$-representation $R$ the operator ${\CalC}_{k}$
acts via a multiplication by a scalar $c_{k}(R)$, also commonly known as the \emph{$k$-th Casimir of $R$}:
\beq
\label{eq:casimir-values}
  \sum_{a_{1}, a_{2}, \ldots , a_{k} = 1}^{N} 
  T_{R}\left( {\bJ}_{a_{1}}^{a_{2}} \right) T_{R} \left( {\bJ}_{a_{2}}^{a_{3}} \right) \dots 
  T_{R} \left( {\bJ}_{a_{k}}^{a_{1}}\right) = c_{k}(R) \cdot {\bf 1}_{R} \, .
\eeq

\begin{nott}
The Lie algebra $\ssl_N$ is a subalgebra of $\gl_N$ with a basis consisting of ${\bJ}_{a}^{b}$, with $a \neq b$, and
\beq
  {\bh}_{i} = {\bJ}_{i}^{i} - {\bJ}_{i+1}^{i+1}\, , \qquad i = 1, \ldots , N-1 \, .
\label{eq:h-gen}
\eeq
\end{nott}

\begin{nott}
The Chevalley generators of $\ssl_N$ are formed by ${\bh}_i$'s, and 
\beq
  {\fr}_{i} = {\bJ}_{i+1}^{i} \, , \qquad {\er}_{i} = {\bJ}_{i}^{i+1} \, , 
\label{eq:efh-gen}
\eeq
also for $i = 1, \ldots, N-1$. 
\end{nott}

The elements ${\er}_{i}$ generate, via commutators, the Lie subalgebra $\mathfrak{n}_{+}$ of $\ssl_N$.
As a vector space, $\mathfrak{n}_{+}$ has a basis consisting of ${\bJ}_{a}^{b}$ with $b > a$. Likewise, 
the elements ${\fr}_{i}$ generate the Lie subalgebra $\mathfrak{n}_{-}$ which, as a vector space, has a basis 
consisting of ${\bJ}_{a}^{b}$ with $b < a$.

\begin{rem}
With a slight abuse of notation, when this does not lead to a confusion, below we shall also denote by 
${\bh}_{i}, {\fr}_i, {\er}_i$ the corresponding operators 
\beq
  T_{R} ({\bJ}_{i}^{i}) - T_{R}({\bJ}_{i+1}^{i+1}) \, , \ T_{R} ({\bJ}_{i+1}^{i}) \, , \ T_{R}({\bJ}_{i}^{i+1})
\eeq
in a $\gl_N$-module $R$.
\end{rem}


\subsection{Verma modules}


\subsubsection{Lowest weight module}\label{ssec:hwverma}

For a generic ${\vec\nu} \in {\BC}^{N-1}$, the lowest weight Verma $\ssl_N$-module ${\CalV}_{\vec\nu}$ is defined, 
algebraically, as follows. There is a vector ${\Omega}_{\vec\nu} \in {\CalV}_{\vec\nu}$, which obeys:
\beq
  {\bJ}^{a}_{b} {\Omega}_{\vec\nu} = 0\, , \qquad a < b \, ,
\label{eq:hwc}
\eeq
and:
\beq
  {\bh}_{i}\, {\Omega}_{\vec\nu} \, = \,  -{\nu}_{i} \, {\Omega}_{\vec\nu}\, , \qquad i = 1, \ldots , N-1 \, ,
\label{eq:hwvac}
\eeq
and which generates ${\CalV}_{\vec\nu}$, i.e., ${\CalV}_{\vec\nu}$ is spanned by polynomials in ${\bJ}^{a}_{b}$, 
with $a > b$, acting on ${\Omega}_{\vec\nu}$. Geometrically, ${\CalV}_{\vec\nu}$ can be realized as the space of 
analytic functions ${\Psi}$ of $\left( U_{i} \right)_{i=1}^{N-1}$, obeying:
\beq
  {\Psi} \left[ g_{i+1} U_{i} g_{i}^{-1} \right]_{i=1}^{N-1} \prod_{i=1}^{N-1} {\rm det}(g_{i})^{{\nu}_{i}} = 
  {\Psi} \left[ U_{i} \right]_{i=1}^{N-1} \, , \qquad 
  \left( g_{i} \right)_{i=1}^{N-1} \in {\CalG}^{\rm formal} \, ,
\label{eq:psiu}
\eeq
where $g_N$ is vacuous and ${\CalG}^{\rm formal}$ denotes the group of formal exponents $g_{i} = {\exp} \, h {\xi}_{i}$ 
with ${\xi}_{i} \in {\rm End}(F_{i})$ and $h$ being a nilpotent parameter.

\begin{rem}
\label{rem:global quasi-invariants}
For ${\vec\nu} \in {\BZ}^{N-1}$, the equation~\eqref{eq:psiu} makes sense for $(g_{i})_{i=1}^{N-1} \in {\CalG}$.
For ${\vec\nu} \in {\BZ}^{N-1}_{\geq 0}$, the polynomial solutions to the equation~\eqref{eq:psiu} are in one-to-one 
correspondence with the holomorphic sections of the following line bundle on the complete flag variety $F(W)$: 
\beq
  {\bL}_{W, \vec\nu} = \bigotimes\limits_{i=1}^{N-1}\, {\rm det}(W_{i})^{-\nu_i} \, .
\label{eq:lwnu}
\eeq
\end{rem}

For our chosen basis $\{e_{a}\}_{a=1}^N$ of $W$, consider the $i$-form ${\tilde\pi}^{i}_{0}$ defined via:
\beq
\label{eq:fixed form}
  {\tilde\pi}^{i}_{0} = {\tilde e}^{1} \wedge {\tilde e}^{2} \wedge \cdots \wedge {\tilde e}^{i} \, .
\eeq
Then:
\beq
  {\Omega}_{\vec\nu} \, := \prod_{i=1}^{N-1} \left( {\tilde\pi}^{i}_{0} ( {\pi}_{i} ) \right)^{{\nu}_{i}} = \ 
  \prod_{i=1}^{N-1} \, \left( {\rm Det} \left\Vert \Big[ U_{N-1} U_{N-2} \cdots U_{i} \Big]_{a}^{b} \right 
  \Vert_{a,b =1}^{i} \right)^{{\nu}_{i}} 
\label{eq:vacn}
\eeq
(here, the index $b$ runs through the labels of the first $i$ basis vectors $e_b$ in $W$, while the index $a$ 
runs through the labels of a basis ${\ve}_{a}^{(i)}$ in $F_i$) clearly satisfies~\eqref{eq:psiu}. 
Furthermore, using ${\tilde\pi}^{i}_{0} \left( e_{a} \wedge {\tilde e}^{b} {\pi}_{i} \right) = 0$
unless $i\geq a$ and $b>i$ for $a\ne b$, we get \eqref{eq:hwc} and \eqref{eq:hwvac}, due to~\eqref{eq:action-on-Pi}. 

The Lie algebra $\gl_N$ acts on the space of analytic functions $\Psi=\Psi[U_i]$ by vector fields, 
viewed as the first-order differential operators, via~\eqref{eq:vabvf}:
\beq
  T_{{\CalV}_{\vec\nu}} \left( {\bJ}_{a}^{b} \right) {\Psi} = {\rm Lie}_{J_{a}^{b}} \left( {\Psi}\right) \, .
\eeq
We can easily compute the first two Casimirs of ${\CalV}_{\vec\nu}$:
\beq
\begin{aligned}
  & c_{1}({\CalV}_{\vec\nu}) =  -\sum_{i=1}^{N-1} i {\nu}_{i} \, , \\
  & c_{2}({\CalV}_{\vec\nu}) = \sum_{i=1}^{N-1} i {\nu}_{i} \left( N - i + \nu_i + 2\sum_{j = i+1}^{N-1} {\nu}_{j} \right) .
\end{aligned}
\eeq
Now, obviously $\Omega_{\vec\nu}$ is not well-defined for arbitrary $U_i$'s. We need first to impose:
\beq
  {\tilde\pi}^{i}_{0} ({\pi}_{i}) \neq 0\, , \qquad i = 1, \ldots, N-1 \, .
\label{eq:ineq}
\eeq
On the open set of $U_i$'s obeying \eqref{eq:ineq} $\Omega_{\vec\nu}$ is not single-valued. We can, however, view it 
as an analytic function in the neighborhood $F(W)^{\circ}$ of the point where, in some ${\CalG}$-gauge, 
${\pi}_{i} = {\pi}_{i}^{0}$ with the $i$-polyvector ${\pi}_{i}^{0}$ defined via:
\beq
  {\pi}_{i}^{0} = e_{1} \wedge \dots \wedge e_{i} \, .
\label{eq:pi0}
\eeq
To parametrize $F(W)^{\circ}$, we use:
\beq
\label{eq:key coordinate 1}
  u_{k}^{(i)} = 
  \frac{{\tilde\pi}^{i}_{0} \left( e_{k} \wedge {\tilde e}^{i+1} {\pi}_{i} \right)}{{\tilde\pi}^{i}_{0} ({\pi}_{i})} = 
  \frac{{\rm Det}\Vert \left( U_{N-1}\ldots U_{i} \right)^{a_{m}}_{\ell} \Vert_{m,\ell=1}^{i}}
       {{\rm Det}\Vert \left( U_{N-1}\ldots U_{i} \right)^{m}_{\ell} \Vert_{m,\ell=1}^{i}}\, , \qquad 
  1 \leq k \leq i \leq N-1 \, ,
\eeq
where $a_{m} = m$ for $m \neq k$ while $a_{k} = i+1$, so that the vectors 
\beq
  e^{(i)}_{\ell} \, , \qquad 1 \leq \ell \leq i \, ,
\eeq
form the unique basis in $W_i = {\rm Im}\left( U_{N-1}U_{N-2} \ldots U_{i} \right)$, $ i = 1, \ldots , N-1$,  obeying:
\beq
\begin{aligned}
  & {\pi}_{i} = e^{(i)}_{1} \wedge e^{(i)}_{2} \wedge \cdots \wedge e^{(i)}_{i} \, , \\
  & e^{(i)}_{\ell} = e^{(i+1)}_{\ell} + u_{\ell}^{(i)} e^{(i+1)}_{i+1}\, , \qquad 1 \leq \ell \leq i \leq N-1 \, ,
\label{eq:GZbasis}
\end{aligned}
\eeq
with $e^{(N)}_{a} : = e_{a}$. Therefore, we have:
\beq
\begin{aligned}
  & e^{(i)}_{\ell} = e_{\ell} + \sum_{j=1}^{N-i} {\bU}^{i | j}_{\ell} e_{i+j} \, , \\
  & {\bU}^{i | j}_{\ell}  = u_{\ell}^{(i)}  {\delta}_{j}^{1}  +  {\bU}^{i+1 | j-1}_{\ell} + 
    u_{\ell}^{(i)} {\bU}^{i+1 | j-1}_{i+1}  \, , 
\end{aligned}
\label{eq:eilc}
\eeq
with ${\bU}^{i | j}_{\ell}$ polynomial in $u^{(m)}_{k}$, $m \geq i$, nonzero only for $1\leq j \leq N-i, 1 \leq \ell \leq i$. 
Explicitly,  
\beq
\begin{aligned}
  & {\bU}^{i|1}_{\ell} = u^{(i)}_{\ell} \, , \qquad 
    {\bU}^{i|2}_{\ell} = u^{(i+1)}_{\ell} + u_{\ell}^{(i)} u^{(i+1)}_{i+1} \, , \\
  & {\bU}^{i|3}_{\ell} = u^{(i+2)}_{\ell} + u_{\ell}^{(i+1)} u^{(i+2)}_{i+2} + 
   u^{(i)}_{\ell} \left( u^{(i+2)}_{i+1} + u^{(i+1)}_{i+1} u^{(i+2)}_{i+2} \right) \, , \quad \dots
\end{aligned}
\eeq
Invoking~\eqref{eq:GZbasis} and the first equality of~\eqref{eq:eilc}, we obtain the following analogue of~\eqref{eq:key coordinate 1}:
\beq
\label{eq:key higher coordinate}
  {\bU}^{i|a-i}_{b} = 
  \frac{{\tilde\pi}^{i}_{0} \left( e_{b} \wedge {\tilde e}^{a} {\pi}_{i} \right)}{{\tilde\pi}^{i}_{0} ({\pi}_{i})}\, , 
  \qquad 1 \leq b \leq i <a \leq N \, .
\eeq
Since the local coordinates $u_{k}^{(i)}$ are $\CalG$-invariant, the general solution to \eqref{eq:psiu} can be written as:
\beq
  {\Psi} \left[ U_{i} \right] = {\psi}\left[ u_{k}^{(i)} \right] \cdot\, {\Omega}_{\vec\nu}
\label{eq:psiuver}
\eeq
with some analytic functions ${\psi}$. We amend the definition of ${\CalV}_{\vec\nu}$ given prior to 
Remark~\ref{rem:global quasi-invariants} by rather defining ${\CalV}_{\vec\nu}$ as the space of analytic functions $\Psi$,
obeying~\eqref{eq:psiu}, such that the corresponding functions $\psi$~\eqref{eq:psiuver} are polynomials in $u^{(i)}_{k}$'s. 
Using the equality (based on~\eqref{eq:key higher coordinate})
\beq
  {\bJ}_{b}^{a} {\Omega}_{\vec\nu} \, = \, - \left(  {\delta}^{a}_{b} \sum_{i \geq a} {\nu}_{i} \, + 
  \sum_{i=b}^{a-1} {\nu}_{i} {\bU}^{i|a-i}_{b} \right)  \cdot\, {\Omega}_{\vec\nu}  \, ,
\eeq
the generators ${\bJ}_{b}^{a}$ can be expressed as the first-order differential operators in $u_{k}^{(i)}$:
\beq
  {\bJ}_{b}^{a} = 
  - \sum_{1 \leq k \leq i \leq N-1} \left( {\delta}^{a}_{k} + {\bU}^{i|a-i}_{k} \right) 
  \left( {\delta}^{i+1}_{b} - u^{(i)}_{b} \right) \frac{\partial}{\partial u^{(i)}_{k}} - 
  {\delta}^{a}_{b} \sum_{i \geq a} {\nu}_{i}  - \sum_{i=b}^{a-1} {\nu}_{i} {\bU}^{i|a-i}_{b} \, ,
\label{eq:slngen}
\eeq
with polynomial in $u^{(i)}_{k}$'s coefficients. In particular, the Cartan generators of $\gl_N$ act by:
\beq
  {\bJ}_{a}^{a} =  - \sum_{k<a} \left(  u^{(a-1)}_{k} \frac{\partial}{\partial u_{k}^{(a-1)}} \right) + 
  \sum_{k \geq a} \left(  u^{(k)}_{a} \frac{\partial}{\partial u_{a}^{(k)}} - {\nu}_{k} \right) ,
\eeq
hence, the Cartan generators of $\ssl_N$ act by:
\begin{multline}
  {\bh}_{i} = \\ 
  - {\nu}_{i} + 2 u^{(i)}_{i} \frac{\partial}{\partial u_{i}^{(i)}}  -  
  \sum_{k < i} \left( u^{(i-1)}_{k} \frac{\partial}{\partial u_{k}^{(i-1)}} - 
    u^{(i)}_{k} \frac{\partial}{\partial u_{k}^{(i)}} \right) + 
  \sum_{k > i} \left( u^{(k)}_{i} \frac{\partial}{\partial u_{i}^{(k)}} - 
    u^{(k)}_{i+1} \frac{\partial}{\partial u_{i+1}^{(k)}} \right) = \\
  - {\nu}_{i} - {\rm deg}_{u^{(i-1)}_{*}} + {\rm deg}_{u^{(i)}_{*}} + {\rm deg}_{u_{i}^{(*)}} - {\rm deg}_{u^{(*)}_{i+1}}\, . 
\end{multline}
With the natural definition of the order on the weights, it is not difficult to show that the positive degree 
polynomials in $u_{k}^{(i)}$'s have higher weights than the vacuum, the state ${\psi} = 1$. 
According to~\eqref{eq:slngen}, the generators ${\fr}_{i} = {\bJ}_{i+1}^{i}$ act by:
\beq
  {\fr}_{i} = - \frac{\partial}{\partial u^{(i)}_{i}} + \sum_{k > i}  u^{(k)}_{i+1} \frac{\partial}{\partial u^{(k)}_{i}} \, ,
\eeq
thus annihilating the vacuum, the state ${\psi}=1$, as they should. 
Likewise, according to~\eqref{eq:slngen}, the generators ${\er}_{i} = {\bJ}_{i}^{i+1}$ act by:
\beq
  {\er}_{i} = -\sum_{k < i}  u^{(i)}_{k} \frac{\partial}{\partial u^{(i-1)}_{k}} + 
  \sum_{k > i}   u^{(k)}_{i} \frac{\partial}{\partial u^{(k)}_{i+1}} - 
  u^{(i)}_{i} \left( \sum_{k < i}  u_{k}^{(i-1)} \frac{\partial}{\partial u^{(i-1)}_{k}} - 
    \sum_{k \leq i}  u^{(i)}_{k} \frac{\partial}{\partial u^{(i)}_{k}} + {\nu}_{i} \right) \, ,
\eeq
which generate the whole module, as we can see using $[{\er}_{i}, {\er}_{i+1}] = {\bJ}^{i+2}_{i}$, etc. 


\subsubsection{Highest weight  module}\label{ssec:lwverma}

For a generic $\vec{\tilde\nu}\in \BC^{N-1}$, the highest weight Verma $\ssl_N$-module ${\tilde\CalV}_{\vec{\tilde\nu}}$ 
is defined similarly, so we'd be brief. Algebraically, ${\tilde\CalV}_{\vec{\tilde\nu}}$ is generated by a vector
${\tilde\Omega}_{\vec{\tilde\nu}}$, obeying:
\beq
  {\bJ}^{a}_{b} {\tilde\Omega}_{\vec{\tilde\nu}} = 0\, , \qquad a > b \, ,
\label{eq:lwc}
\eeq
and:
\beq
  {\bh}_{i}\,  {\tilde\Omega}_{\vec{\tilde\nu}} \, = \, {\tilde\nu}_{i} \, {\tilde\Omega}_{\vec{\tilde\nu}}\, , \qquad 
  i = 1, \ldots , N-1 \, .
\label{eq:lwvac}
\eeq
Geometrically, ${\tilde\CalV}_{\vec{\tilde\nu}}$ can be realized in the space of analytic functions 
${\tilde\Psi}$ of $\left( {\tilde U}_{i} \right)_{i=1}^{N-1}$, obeying:
\beq
  {\tilde\Psi} \left[ {\tilde g}_{i} {\tilde U}_{i} {\tilde g}_{i+1}^{-1} \right]_{i=1}^{N-1} 
  \prod_{i=1}^{N-1} {\rm det}({\tilde g}_{i})^{-{\tilde\nu}_{i}} = {\tilde\Psi} \left[ {\tilde U}_{i} \right]_{i=1}^{N-1} \, , \qquad 
  \left( {\tilde g}_{i} \right)_{i=1}^{N-1} \in {\tilde\CalG}^{\rm formal} \, ,
\label{eq:psiut}
\eeq
where ${\tilde g}_N$ is vacuous and ${\tilde\CalG}^{\rm formal}$ denotes the group of formal exponents 
${\tilde g}_{i} = {\exp}\, h {\tilde\xi}_{i}$ with ${\tilde\xi}_{i} \in {\rm End}({\tilde F}_{i})$ and $h$ being 
a nilpotent parameter. Again, we take:
\beq
\label{eq:vacua-highest}
  {\tilde\Omega}_{\vec{\tilde\nu}} := \prod_{i=1}^{N-1} \left( {\tilde\pi}^{i}({\pi}_{i}^{0}) \right)^{{\tilde\nu}_{i}} \, ,
\eeq
which clearly satisfies~(\ref{eq:lwc},~\ref{eq:lwvac}). 
Then, ${\tilde\CalV}_{\vec{\tilde\nu}}$ is realized in the space of analytic functions  ${\tilde\Psi}$, obeying~\eqref{eq:psiut}, 
of the form ${\tilde\Psi}[{\tilde U}_i]={\tilde\psi}[{\tilde u}_{(i)}^{k}]\cdot {\tilde\Omega}_{\vec{\tilde\nu}}$ with 
${\tilde\psi}$ polynomial in the $\tilde{\mathcal{G}}$-invariant coordinates 
\beq
\label{eq:coordinates-highest}
  {\tilde u}_{(i)}^{k} = 
  \frac{{\tilde e}^{k} \wedge \iota_{e_{i+1}} {\tilde\pi}^{i} \left(  {\pi}_{i}^{0} \right)}{{\tilde\pi}^{i} ({\pi}_{i}^{0})}  \, ,
  \qquad 1 \leq k \leq i \leq N-1 \, ,
\eeq
on the open domain $F(W^{*})^{\circ}$, where ${\tilde\pi}^{i}({\pi}^{0}_{i}) \neq 0$ for $i=1,\ldots, N-1$. 

\begin{rem}\label{Remark on Verma modules}
The identification of the vector space of representation ${\CalV}_{\vec\nu}$ with the space of polynomials in $u^{(i)}_{k}$'s, 
and similarly for ${\tilde\CalV}_{\vec{\tilde\nu}}$, is known mathematically under the name of the Poincare-Birkhoff-Witt 
theorem~\cite{PBW} (apparently proven in the case of our interest by A.~Capelli).
\end{rem}

\begin{rem}
The genericity assumption on $\vec\nu \in {\BC}^{N-1}$ (resp.\ ${\vec{\tilde\nu}} \in {\BC}^{N-1}$)  guarantees that 
the Verma $\ssl_N$-module ${\CalV}_{\vec\nu}$ (resp.\ ${\tilde\CalV}_{\tilde{\vec\nu}}$) is  irreducible, and thus is the unique 
lowest (resp.\ highest) weight module of the given lowest (resp.\ highest) weight, up to an isomorphism. 
\end{rem}


\subsection{Twisted HW-modules}\label{sssec:hwmod}

For generic ${\bn} = (n_{1}, \ldots , n_{N}) \in {\BC}^{N}$ and 
${\tilde\bn} = ({\tilde n}_{1}, \ldots , {\tilde n}_{N}) \in {\BC}^{N}$, let us define the \emph{HW-modules} 
$H_{\bn}$ and ${\tilde H}_{\tilde\bn}$ of $\gl_N$ (for W.~Heisenberg and H.~Weyl) by making  ${\bJ}_{a}^{b}$
act via the first-order differential operators in $N$ complex variables. In other words, the generators of $GL(N)$ 
in its defining $N$-dimensional representation $W$ or its dual $W^{*}$ act on the space of appropriately twisted 
functions on ${\rm Hom}({\sF}, W)$ or ${\rm Hom}(W, {\tilde\sF})$, where ${\sF} \approx {\BC}$, ${\tilde\sF} \approx {\BC}$ 
denote complex lines. 

Explicitly, let $\left( {\rz}^{a} \right)_{a=1}^{N}$ and $\left( {\tz}_{a} \right)_{a=1}^{N}$ 
denote the coordinates on ${\rm Hom}({\sF}, W)$ and ${\rm Hom}(W, {\tilde\sF})$, respectively, in the dual bases 
$(e_{a})_{a=1}^{N}$, $({\tilde e}^{a})_{a=1}^{N}$ of $W,W^*$ we used in the previous section and in the dual bases 
${\se} \in {\sF}, {\tilde\se} \in {\sF}^*$.  Then, the underlying vector spaces $H_{\bn}$, ${\tilde H}_{\tilde\bn}$ 
of the HW-modules are the spaces of homogeneous (i.e., degree zero) Laurent polynomials in 
$\{{\rz}^{a}\}, \{{\tz}_{a}\}$, respectively:
\beq
  H_{\bn} = {\BC}[{\rz}^{a}, ({\rz}^{a})^{-1}]^{{\BC}^{\times}} \, , \qquad
  {\tilde H}_{\tilde\bn} = {\BC}[{\tz}_{a}, {\tz}_{a}^{-1}]^{{\BC}^{\times}} \, , 
\eeq
while the generators of $\gl_N$ are represented by the following differential operators:
\beq
  T_{H_{\bn}} \left( {\bJ}_{b}^{a} \right) = 
  - {\omega}_{\bn}^{-1}\, \left( {\rz}^{a} {\pa}_{{\rz}^{b}} \right) \, {\omega}_{\bn} 
\label{eq:hwmod1}
\eeq
and
\beq
 T_{{\tilde H}_{\tilde\bn}} \left( {\bJ}_{b}^{a} \right) =  
 {\tilde\omega}_{\tilde\bn}^{-1}\, \left( {\tz}_{b} {\pa}_{{\tz}_{a}} \right) \, {\tilde\omega}_{\tilde\bn} 
\label{eq:hwmod2}
\eeq
with
\beq
\label{eq:omegas}
  {\omega}_{\bn} = \prod_{a=1}^{N} \left( {\rz}^{a} \right)^{n_{a}}\, , \qquad 
  {\tilde\omega}_{\tilde \bn} = \prod_{a=1}^{N} {\tz}_{a}^{{\tilde n}_{a}} \, .
\eeq

\begin{rem}\label{Remark on HW modules}
\noindent
For ${\tilde\bn}=(s,\ldots,s)$, the module ${\tilde H}_{\tilde\bn}$ coincides with $V_s$ of~\cite[\S1]{E1}, 
as $\ssl_N$-modules.
\end{rem}

In general, ${\tilde H}_{\tilde\bn}$ is a \emph{twisted} version of $V_{({\tilde n}_1+\ldots+{\tilde n}_N)/N}$, with 
underlying vector spaces being isomorphic. We thus shall use the following notation:

\begin{nott}
For $\mf \in \BC$ and $\vec\mu \in {\BC}^{N-1}$, define:
\beq
\label{eq:omega included}
  {\CalH}_{\mf}^{\vec\mu} := {\omega}_{\bn} \cdot H_{\bn}
\eeq
with 
\beq
  {\mf} = \sum_{a=1}^{N} n_{a} \, , \qquad {\mu}_{i} = n_{i} - n_{i+1}\, , \qquad i = 1, \ldots , N-1 \, .
  \label{eq:mmun}
\eeq
The action of $\gl_N$ on ${\CalH}_{\mf}^{\vec\mu}$ is represented by the ordinary
vector fields:
\beq
  T_{{\CalH}_{\mf}^{\vec\mu}} ({\bJ}_{a}^{b} ) = - {\rz}^{b} \frac{\pa}{\pa {\rz}^a} \, .
\eeq
\end{nott}

\begin{nott}
For ${\tilde\mf} \in \BC$ and $\vec{\tilde\mu} \in {\BC}^{N-1}$, define:
\beq
\label{eq:tomega included}
  {\tilde\CalH}_{\tilde\mf}^{\vec{\tilde\mu}} := {\tilde\omega}_{\tilde\bn} \cdot {\tilde H}_{\tilde\bn}
\eeq
with 
\beq
  {\tilde\mf} = \sum_{a=1}^{N} {\tilde n}_{a} \, , \qquad 
  {\tilde\mu}_{i} = {\tilde n}_{i} - {\tilde n}_{i+1}\, , \qquad  i = 1, \ldots , N-1 \, .
  \label{eq:mmunt}
\eeq
The action of $\gl_N$ on ${\tilde\CalH}_{\tilde\mf}^{\vec{\tilde\mu}}$ is represented by the ordinary vector fields:
\beq
  T_{{\tilde\CalH}_{\tilde\mf}^{\vec{\tilde\mu}}} ({\bJ}_{a}^{b} ) =  {\tz}_{a} \frac{\pa}{\pa {\tz}_b} \, .
\eeq
\end{nott}

\begin{rem}\label{rem:HW-weights}
(a) It is clear that the Casimirs $c_k \left( {\CalH}_{\mf}^{\vec\mu} \right)$ and 
$c_k \left( {\tilde\CalH}_{\tilde\mf}^{\vec{\tilde\mu}} \right)$, defined by \eqref{eq:casimir-values}, 
depend only on $\mf$ and $\tilde\mf$, respectively. 

\noindent
(b) The $\gl_N$-weight subspaces, i.e., the joint eigenspaces of a commuting family $\{{\bJ}_{a}^{a}\}_{a=1}^N$, 
of ${\CalH}_{\mf}^{\vec\mu}$ and ${\tilde\CalH}_{\tilde\mf}^{\vec{\tilde\mu}}$ are all one-dimensional, the corresponding sets 
of weights being $-{\bn}+{\Lambda}_{0}\subset \BC^N$ and ${\tilde\bn} + {\Lambda}_{0} \subset {\BC}^{N}$, respectively, where
${\Lambda}_{0}$ denotes the lattice ${\Lambda}_{0}=\left\{(r_1, \ldots, r_N)\in \BZ^N \Big| \sum_{i=1}^N r_i = 0\right\}$. 

\noindent
(c) The vectors 
  ${\Omega}_{{\CalH}_{\mf}^{\vec\mu}} := {\omega}_{\bn} \in {\CalH}_{\mf}^{\vec\mu}$, 
  ${\tilde\Omega}_{{\tilde\CalH}_{\tilde\mf}^{\vec{\tilde\mu}}} := 
   {\tilde\omega}_{\tilde\bn} \in {\tilde\CalH}_{\tilde\mf}^{\vec{\tilde\mu}}$ have the 
following $\ssl_N$-weights:
\beq
  {\bh}_{i} \cdot {\Omega}_{{\CalH}_{\mf}^{\vec\mu}} =  - {\mu}_{i} \cdot {\Omega}_{{\CalH}_{\mf}^{\vec\mu}}\, , \qquad 
  {\bh}_{i} \cdot {\tilde\Omega}_{{\tilde\CalH}_{\tilde\mf}^{\vec{\tilde\mu}}} = 
  {\tilde\mu}_{i} \cdot {\tilde\Omega}_{{\tilde\CalH}_{\tilde\mf}^{\vec{\tilde\mu}}} \, , \qquad 
  i=1,\dots,N-1 \, .
\label{eq:hw1w}
\eeq
\end{rem} 


\subsection{Vermas and HW-modules in the $N=2$ case}\label{ssec sl2-digression}

The generators ${\er} \equiv {\er}_{1}, {\fr} \equiv {\fr}_{1}, {\bh} \equiv {\bh}_{1}$ of~$\ssl_2$, 
see~(\ref{eq:h-gen},~\ref{eq:efh-gen}), obey the standard relations:
\beq
  [{\er}, {\fr} ] = {\bh}\, , \quad 
  [{\bh}, {\er} ] = 2{\er}\, , \quad 
  [{\bh}, {\fr}] = -2 {\fr} \, .
\eeq
For $a,s \in {\BC}$ and $i\in \{-1,0,1\}$, consider the differential operators:
\beq
  L_{i} = - z^{i+1} {\pa}_{z} + ( a + (i+1)s) z^{i}\ , 
\label{eq:sl2rep}
\eeq
obeying the commutation relations: 
\beq
  [L_i, L_j] = (i-j) L_{i+j} \, . 
\label{eq:sl2comm}
\eeq
The assignments 
\beq
  {\er} \mapsto -L_{-1}\, ,\quad {\fr} \mapsto L_{1}\, ,\quad  {\bh} \mapsto 2 L_0\, , 
\label{eq:ztrep} 
\eeq 
or 
\beq
  {\er} \mapsto - L_{1}\, ,\quad {\fr} \mapsto L_{-1}\, ,\quad  {\bh} \mapsto - 2 L_0\, , 
\label{eq:zrep}
\eeq
represent $\ssl_2$ by the first-order differential operators on a line. 

The modules we defined in the general $N$ case can be described quite explicitly. 
Specifically, the highest/lowest weight Verma and the twisted HW  $\ssl_2$-modules
are all realized in the spaces of the twisted tensors:
\beq
\label{eq:twisted tensors}
  f(z) z^{-a}dz^{-s}\, ,  
\eeq
with $f(z)$ being a single-valued function of $z \in {\BC}^{\times}$, so that the operators \eqref{eq:sl2rep} 
are the infinitesimal fractional linear transformations:
\beq
  z \mapsto \frac{A z+B}{C z+D} \, , \qquad 
  \left( \begin{matrix} A & B \\ C & D \\ \end{matrix} \right) \in SL(2, {\BC}) \, .
\label{eq:SL2g}
\eeq
To make this relation precise, let us start with the geometric descriptions of the Verma modules. 

\medskip

In the geometric realization of the lowest weight Verma modules, we have a two-component vector 
\beq
  U_1=\left(U^{1}_{1|1}, U^{2}_{1|1}\right)=:\left(u^1,u^2\right) , 
\label{eq:vec1}
\eeq
which is acted upon by the gauge ${\BC}^{\times}$-symmetry via  
$(u^1, u^2) \mapsto (t^{-1} u^{1}, t^{-1} u^{2})$. We look at the space of the locally defined functions $\Psi=\Psi(u^{1}, u^{2})$ 
which transform with weight $-\nu$ under the Lie algebra of the gauge ${\BC}^{\times}$-symmetry. More precisely,
following~\eqref{eq:psiuver} and the succeeding discussion, we look at $\Psi$ of the form:
\beq
\label{eq:Psi-N=2}
  {\Psi} (u^{1}, u^{2}) = \psi(z) \cdot \left( u^{1} \right)^{\nu} ,
\eeq
where $\psi$ is a polynomial and $z = u^2/u^1$ is the only coordinate $u^{(1)}_{1}$~\eqref{eq:key coordinate 1} in the present setting.
One can perceive the right-hand side of~\eqref{eq:Psi-N=2} as the local section of a complex power of a line bundle 
$\mathcal{O}(1)$ over a neighborhood of $z =0$ in ${\BC\BP}^{1}$, defined near the slice $u^1 = 1$.
The generators of $\ssl_2$ act via:
\beq
\begin{aligned}
  & {\er} = - u^{2} \frac{\partial}{\partial u^{1}} = z^{2} {\partial}_{z} - {\nu}z \, , \\
  & {\fr} = - u^{1} \frac{\partial}{\partial u^{2}} = - {\partial}_{z} \, , \\
  & {\bh} = u^{2} \frac{\partial}{\partial u^{2}} - u^{1} \frac{\partial}{\partial u^{1}} = 2 z {\partial}_{z} - {\nu} \, , \\
\end{aligned}
\label{eq:vermlw2}
\eeq
where the differential operators in the middle act on $\Psi$ while the rightmost ones act on $\psi=\psi(z)$. 
The vacuum is:
\beq
\label{eq:omega vs u}
  {\Omega}_{\nu} = ( u^{1} )^{\nu} ,
\eeq
corresponding to $\psi=1$, and the lowest weight Verma module is: 
\beq
  {\CalV}_{\nu} = {\BC}[{\er}] {\Omega}_{\nu} \, .
\eeq
The weight (eigenvalue of $\bh$) of the state $z^{n}$ is $2n-\nu$. Note that the fractional linear transformation~\eqref{eq:SL2g}
transforms $(u^1, u^2) \mapsto (Cu^2 + Du^1, A u^2 +  B u^1)$, hence it maps the vacuum to (again, we are working infinitesimally):
\beq
\label{eq:match1}
  ( C u^{2} + D u^{1} )^{\nu} = ( C z + D)^{\nu} {\Omega}_{\nu} \, .
\eeq
The formula~\eqref{eq:match1} allows us to match:
\beq
\label{eq:omega vs dz}
  {\Omega}_{\nu} \sim dz^{-\frac{\nu}{2}} \, .
\eeq
Thus, the lowest weight Verma module ${\CalV}_{\nu}$ corresponds to the 
realization~(\ref{eq:zrep},~\ref{eq:twisted tensors}) with:
\beq
\label{eq:lVerma via tensors}
  a = 0,\qquad s = \frac{\nu}{2} \, , 
\eeq
and with polynomial $f$ in~\eqref{eq:twisted tensors}. 

\medskip

In the geometric realization of the highest weight Verma modules, we have a two-component covector 
\beq
  {\tilde{U}}_1 = \left({\tilde U}^{1}_{1|1}, {\tilde U}^{1}_{1|2}\right) =: \left(v_{1}, v_{2}\right) , 
\label{eq:cov1}
\eeq
which is acted upon by the gauge $\BC^\times$-symmetry via $(v_1, v_2) \mapsto (t v_1, tv_2 )$. 
We are looking at the space of locally defined functions $\tilde\Psi = \tilde \Psi(v_{1}, v_{2})$, which transform 
with weight $\tilde\nu$ under the Lie algebra of the gauge ${\BC}^{\times}$-symmetry. More precisely,
following~(\ref{eq:vacua-highest},~\ref{eq:coordinates-highest}), we look at $\tilde\Psi$ of the form:
\beq
  {\tilde\Psi} (v_1, v_{2}) = 
  {\tilde\psi} ({\tilde z}) \cdot \left( v_{1} \right)^{\tilde\nu} ,
\eeq 
where ${\tilde\psi}$ is a polynomial and ${\tilde z} = v_{2}/v_{1}$ is the only coordinate 
${\tilde u}^{1}_{(1)}$~(\ref{eq:coordinates-highest}) in the present setting. 
The generators of $\ssl_2$ act via:
\beq
\begin{aligned}
  & {\er} = v_{1} \frac{\partial}{\partial v_{2}} =  {\partial}_{\tilde z}\, , \\
  & {\fr} = v_{2} \frac{\partial}{\partial v_{1}} =  
    - {\tilde z}^2 {\partial}_{\tilde z} + {\tilde\nu} {\tilde z}\, , \\
  & {\bh} = v_{1} \frac{\partial}{\partial v_{1}} - v_{2} \frac{\partial}{\partial v_{2}} = 
    - 2 {\tilde z}{\partial}_{\tilde z} + {\tilde\nu} \, , \\
\end{aligned}
\label{eq:vermlw2t}
\eeq
where the differential operators in the middle act on $\tilde\Psi$ while the rightmost ones act on $\tilde\psi=\tilde\psi({\tilde z})$.
The vacuum is:
\beq
\label{eq:tomega vs tu}
  {\tilde\Omega}_{\tilde\nu} = \left( v_{1} \right)^{\tilde\nu} , 
\eeq 
corresponding to $\tilde\psi=1$, and the highest weight Verma module is: 
\beq
  {\tilde\CalV}_{\tilde\nu} = {\BC} [ {\fr} ] {\tilde\Omega}_{\tilde\nu} \, . 
\eeq
The weight of the state ${\tilde z}^{n}$ is $- 2n + {\tilde\nu}$. Note that under the $SL(2,{\BC})$ fractional linear 
transformation~\eqref{eq:SL2g} the covector $(v_1, v_2)$ transforms via $(v_1, v_2) \mapsto (-B v_2  + A v_1, D v_2 -  C v_1)$ 
with $AD-BC=1$, so that the pairing ${\tilde U}_1\cdot U_1 = v \cdot u \equiv u^1 v_1 + u^2 v_2$ is invariant, leading to:
\beq
  {\tilde z} \mapsto \frac{D {\tilde z} - C}{- B {\tilde z}+A} \, . 
\label{eq:SL2gt}
\eeq
Thus, the vacuum $\tilde\Omega_{\tilde\nu}$ is transformed via:
\beq
  {\tilde\Omega}_{\tilde\nu} \mapsto \left( A v_{1} - B v_{2} \right)^{\tilde\nu} = 
  \left( A - B {\tilde z} \right)^{\tilde\nu} \, {\tilde\Omega}_{\tilde\nu}\, , 
\eeq 
which allows us to match:
\beq
\label{eq:tomega vs dtz}
  {\tilde\Omega}_{\tilde\nu} \sim d{\tilde z}^{-\frac{\tilde\nu}{2}} \, .
\eeq
Hence, the highest weight Verma module ${\tilde\CalV}_{\tilde\nu}$ corresponds to the 
realization~(\ref{eq:ztrep},~\ref{eq:twisted tensors}) with: 
\beq
  a = 0, \qquad s = \frac{\tilde\nu}{2} \, , 
\eeq
and with polynomial $f$ in~\eqref{eq:twisted tensors}. 

We note that the transformations \eqref{eq:SL2g} and \eqref{eq:SL2gt} are related via ${\tilde z}z = -1$, so that we get 
an equivalent representation~(\ref{eq:zrep},~\ref{eq:twisted tensors}) with: 
\beq
\label{eq:hVerma via tensors}
  a = {\tilde\nu}, \qquad s = \frac{\tilde\nu}{2} \, . 
\eeq 

\medskip
 
Finally, to describe the twisted HW-modules $H_{\bn}$, $\tilde{H}_{\tilde\bn}$ with ${\bn}= (n_{1}, n_{2})$, 
${\tilde\bn} = ({\tilde n}_{1}, {\tilde n}_{2})$, we recall the notation of~\eqref{eq:omegas}:
\beq
  {\omega}_{\bn} = \left( {\rz}^1 \right)^{n_1} \left( {\rz}^2 \right)^{n_2}, \qquad
  {\tilde\omega}_{\tilde\bn} = {\tz}_{1}^{{\tilde n}_{1}}{\tz}_{2}^{{\tilde n}_{2}} \, .
\eeq
The vector space underlying $H_{\bn}$ is the space of Laurent polynomials $\psi$ in $z={\rz}^{2}/{\rz}^{1}$. 
Analogously, the vector space underlying ${\tilde H}_{\tilde\bn}$ is the space of Laurent polynomials  $\tilde\psi$ 
in ${\tilde z}={\tz}_{2}/{\tz}_{1}$. 
 
In the first case, the generators of $\ssl_2$ act via: 
\beq
\begin{aligned}
  & {\er} = 
    - {\omega}_{\bn}^{-1} \left( {\rz}^2 \frac{\partial}{\partial {\rz}^1} \right) {\omega}_{\bn} =  z^2 {\partial}_{z}- n_{1}  z  
    \, , \\
  & {\fr} = 
    - {\omega}_{\bn}^{-1} \left( {\rz}^{1} \frac{\partial}{\partial {\rz}^2} \right) {\omega}_{\bn}  =  - {\partial}_{z} - n_{2} z^{-1}
    \, , \\
  & {\bh} = 
    {\omega}_{\bn}^{-1} \left( {\rz}^2 \frac{\partial}{\partial {\rz}^2} - {\rz}^{1} \frac{\partial}{\partial {\rz}^1} \right)
    {\omega}_{\bn} = 2 z {\partial}_{z} + n_{2} - n_{1} \, .
\end{aligned}
\eeq
Thus, the twisted HW-module $H_{\bn} \sim {\CalH}_{2s}^{2(s+a)}$ corresponds to 
the realization~(\ref{eq:zrep},~\ref{eq:twisted tensors}) with:
\beq 
\label{eq:HW1 via tensors}
  a = - n_{2}, \qquad s = \frac{n_{1} + n_{2}}{2} \, . 
\eeq
In the second case, analogously, the generators of $\ssl_2$ act via:
\beq
\begin{aligned}
  & {\er} =  {\tilde\omega}_{\tilde\bn}^{-1} \left( {\tz}_1 \frac{\partial}{\partial {\tz}_2} \right) {\tilde\omega}_{\tilde\bn} = 
    {\partial}_{\tilde z} +  {\tilde n}_{2}  {\tilde z}^{-1}  \, , \\
  & {\fr} = {\tilde\omega}_{\tilde\bn}^{-1} \left( {\tz}_{2} \frac{\partial}{\partial {\tz}_{1}} \right) {\tilde\omega}_{\tilde\bn} =  
    - {\tilde z}^{2}{\partial}_{\tilde z} + {\tilde n}_{1} {\tilde z} \, , \\
  & {\bh} = {\tilde\omega}_{\tilde\bn}^{-1} 
    \left( {\tz}_{1} \frac{\partial}{\partial {\tz}_{1}} - {\tz}_{2} \frac{\partial}{\partial {\tz}_{2}} \right)
    {\tilde\omega}_{\tilde\bn} = - 2 {\tilde z} {\partial}_{\tilde z} + {\tilde n}_{1} - {\tilde n}_{2} \, .
\end{aligned}
\eeq
Thus, the twisted HW-module $\tilde{H}_{\tilde\bn} \sim {\tilde\CalH}_{2s}^{2(s+a)}$ corresponds to the 
realization~(\ref{eq:ztrep},~\ref{eq:twisted tensors})  with: 
\beq
\label{eq:HW2 via tensors}
  a = - {\tilde n}_{2}, \qquad s = \frac{{\tilde n}_{1} + {\tilde n}_{2}}{2}\, . 
\eeq


\subsection{Tensor products and invariants}\label{ssec Invariants}

Let us recall the following $SL(2,{\BC})$-invariants (under the fractional linear action) on the configurations of 
$2$, $3$, and $4$ points on ${\BC\BP}^1$:
\beq
  {\upsilon}(z_1, z_2) = \frac{dz_{1} \otimes dz_{2}}{(z_{1}-z_{2})^2}
\eeq
is an invariant $(1,0) \otimes (1,0)$--form on ${\BC\BP}^1 \times {\BC\BP}^1$,
\beq
  \frac{z_{2}-z_{1}}{(z_{3}-z_{1})(z_{3}-z_{2})} dz_{3} = 
  \left( \frac{{\upsilon}(z_1, z_3) \otimes {\upsilon}(z_2, z_3)}{{{\upsilon}}(z_1, z_2)} \right)^{\frac 12}
\eeq
is an invariant $0 \otimes 0 \otimes (1,0)$--form on ${\BC\BP}^1 \times {\BC\BP}^1 \times {\BC\BP}^1$, and finally, 
the cross-ratio
\beq
  [z_1, z_2; z_3, z_4] := \frac{z_{2}-z_{1}}{z_{3}-z_{1}} \cdot \frac{z_{4}-z_{3}}{z_{4}-z_{2}} = 
  \left( \frac{{\upsilon}(z_1, z_3) \otimes {\upsilon}(z_2, z_4)}{{\upsilon}(z_1, z_2) \otimes {\upsilon}(z_3, z_4)} \right)^{\frac 12}
\label{eq:crr}
\eeq
is an invariant meromorphic function on ${\BC\BP}^1 \times {\BC\BP}^1 \times {\BC\BP}^1 \times {\BC\BP}^1$.

\medskip 

Thus,
\beq
  I_{\nu}^{(2)} = {\upsilon}(z_1, z_2)^{-\frac{\nu}{2}} = 
  (1+ z_{1}{\tilde z}_{2})^{\nu}\, (dz_1)^{-\frac{\nu}{2}}  \otimes (d{\tilde z}_2)^{-\frac{\nu}{2}}
\label{eq:2inv}
\eeq
is an $\ssl_2$-invariant element in the completed tensor product ${\CalV}_{\nu} {\hat\otimes} {\tilde\CalV}_{\nu}$. 
More precisely, we need to view \eqref{eq:2inv} as a power series in $z_1, {\tilde z}_2 = - z_{2}^{-1}$ 
in the domain $z_1\to 0, z_2\to \infty$:
\beq
  I_{\nu}^{(2)} \vert_{|z_{1}|\ll |z_{2}|} \,  \in \, \left( {\CalV}_{\nu} {\hat\otimes} {\tilde\CalV}_{\nu} \right)^{\ssl_2} .
\eeq
For another domain of convergence, e.g., $z_1 \to \infty, z_2 \to 0$, the expression \eqref{eq:2inv} would define an invariant 
in the completed tensor product ${\tilde\CalV}_{\nu} {\hat\otimes} {\CalV}_{\nu}$ instead:
\beq
  I_{\nu}^{(2)} \vert_{|z_{2} | \ll |z_{1}|} \,  \in \, \left( {\tilde\CalV}_{\nu} {\hat\otimes} {\CalV}_{\nu} \right)^{\ssl_2} .
\eeq
Finally, invoking~(\ref{eq:omega vs u},~\ref{eq:omega vs dz},~\ref{eq:tomega vs tu},~\ref{eq:tomega vs dtz}), 
we can express $I_{\nu}^{(2)}$~\eqref{eq:2inv} in terms of $U_1, {\tilde U}_1$~(\ref{eq:vec1},~\ref{eq:cov1}):
\begin{multline}
  I_{\nu}^{(2)} = 
  \left( {\tilde U}_{1} \cdot U_{1} \equiv U^{1}_{1|1} {\tilde U}^{1}_{1|1} + U^{2}_{1|1} {\tilde U}^{1}_{1|2} 
         \equiv u^1 v_1 + u^2 v_2 \right)^{\nu} \, \\
         =
  {\Omega}_{\nu} {\tilde\Omega}_{\nu} \, \times \, 
  \left( {\rm power \ series\ in}\ z=u^{2}/u^{1}, {\tilde z}=v_2/v_1 \right) \, .
\label{eq:2inv2}
\end{multline}
The benefit of formula~(\ref{eq:2inv2}) is that it admits a natural generalization to the general $N$:
\beq
  I_{\vec\nu}^{(2)} = \prod_{i=1}^{N-1}  {\tilde\pi}^{i}({\pi}_{i})^{\nu_i} =
  {\Omega}_{\vec\nu} {\tilde\Omega}_{\vec\nu} \, \times \, 
  \left( {\rm power \ series\ in}\ u^{(i)}_{k}, {\tilde u}_{(i)}^{k} \right)
  \in 
  \left( {\CalV}_{\vec\nu} {\hat\otimes} {\tilde\CalV}_{\vec\nu} \right)^{\gl_N} .
\label{eq:2invN}
\eeq

\begin{rem}
In coordinates, we have: 
\beq
  {\tilde\pi}^{i}({\pi}_{i}) = 
  {\rm Det} \left( {\tilde U}_{i} {\tilde U}_{i+1} \ldots {\tilde U}_{N-1} U_{N-1} \ldots U_{i+1} U_{i} \right).
\eeq
\end{rem}

\begin{rem}\label{vermapair}
The formula \eqref{eq:2invN} determines the unique $\gl_N$-invariant bilinear pairing:
\beq
\label{eq:inv-pairing}
  (\cdot,\cdot)_{\vec\nu}\colon {\CalV}_{\vec\nu} \times {\tilde\CalV}_{\vec\nu} \, \longrightarrow \, {\BC}
\eeq
such that
\beq
  \left( {\Omega}_{\vec\nu}, {\tilde\Omega}_{\vec\nu} \right)_{\vec\nu} = 1 \, .
\eeq
One can present $(\cdot,\cdot)_{\vec\nu}$ as an integral over $F(W)$, 
but the quicker way is the following: the matrix $G_{\vec n, {\vec{\tilde n}}}$ inverse to
\beq
  \left( \prod_{k \leq i} \left( u^{(i)}_{k} \right)^{n^{(i)}_{k}}{\Omega}_{\vec\nu} \, , \,  
         \prod_{k \leq i} \left({\tilde u}_{(i)}^{k} \right)^{{\tilde n}_{(i)}^{k}} {\tilde\Omega}_{\vec\nu} \right)_{\vec\nu} 
\eeq
is given by the coefficients of the expansion 
\beq
  {\CalI}_{\vec\nu} = 
  \prod_{i=1}^{N-1} \left( \frac{{\tilde\pi}^{i}({\pi}_{i})}{{\tilde\pi}^{i}({\pi}_{i}^{0}) \cdot 
  {\tilde\pi}^{i}_{0} ({\pi}_{i})} \right)^{{\nu}_{i}} = 
  \sum_{{\vec n}, {\vec{\tilde n}}} G_{\vec n , {\vec{\tilde n}}} 
  \prod_{1 \leq k \leq i \leq N-1} \left( u^{(i)}_{k} \right)^{n^{(i)}_{k}}\left( {\tilde u}_{(i)}^{k} \right)^{{\tilde n}_{(i)}^{k}} 
  = 1 + \ldots
\eeq
\end{rem}

\medskip

Let us now similarly produce an $\ssl_2$-invariant in the completed tensor product of three $\ssl_2$-representations: 
the lowest weight and the highest weight Vermas, as well as the twisted HW-module. To this end, we consider:
\beq
  I^{(3)}_{\nu_1, \nu_2, \nu_3} = 
  {\upsilon}(z_1, z_2)^{-\frac{\nu_1 + \nu_2 - \nu_3}{4}} {\upsilon}(z_1, z_3)^{-\frac{\nu_1+\nu_3-\nu_2}{4}} 
  {\upsilon}(z_2, z_3)^{-\frac{\nu_2+\nu_3-\nu_1}{4}} \, .
\label{eq:3inv}
\eeq
By invoking~(\ref{eq:lVerma via tensors},~\ref{eq:hVerma via tensors},~\ref{eq:HW1 via tensors}) and expanding~\eqref{eq:3inv} 
in the region $|z_1|\ll |z_2| \ll |z_3|$, we arrive at the following interpretation:
\beq
  I^{(3)}_{{\nu}_1, {\nu}_2, {\nu}_3} \vert_{|z_{1} | \ll |z_{2}| \ll |z_{3}|}\, \in \, 
  \left( {\CalV}_{\nu_1} {\hat\otimes} {\CalH}_{\nu_2}^{{\nu}_{3}-{\nu}_{1}} {\hat\otimes} {\tilde\CalV}_{\nu_3} \right)^{\ssl_2} .
\eeq
Finally, in the $\left(u^1,u^2\right), \left({\rz}^1,{\rz}^2\right), \left(v_1,v_2\right)$-realizations, 
this invariant takes the following form:
\begin{equation}
\begin{split}
  & I^{(3)}_{{\nu}_1, {\nu}_2, {\nu}_3} = 
    \left( u^{1} {\rz}^{2} -  u^{2} {\rz}^{1} \right)^{\frac{\nu_1 + \nu_2 - \nu_3}{2}}
    \left( v_{1} {\rz}^{1} + v_{2} {\rz}^{2} \right)^{\frac{\nu_2+\nu_3-\nu_1}{2}} 
    \left( u^{1} v_{1} + u^{2} v_{2} \right)^{\frac{\nu_1+\nu_3-\nu_2}{2}} = \\
  & \qquad \qquad \quad
    {\Omega}_{\nu_1} {\tilde\Omega}_{\nu_3} ({\rz}^{1})^{n_{1}} ({\rz}^{2})^{n_{2}} \, \times \, 
    \left( {\rm power \ series\ in}\ z=u^{2}/u^{1}, {\tilde z}=v_2/v_1, ({\rz}^2/{\rz}^{1})^{\pm 1} \right)
\end{split}
\label{eq:3in}
\end{equation}
with 
\beq
  n_{1} = \frac{\nu_2+\nu_3-\nu_1}{2}\, , \qquad n_{2} = \frac{\nu_1 + \nu_2 - \nu_3}{2}\, ,  
\eeq
where we matched $z_1 \sim z,\, z_2 \sim {\rz}^2/{\rz}^{1},\, z_3 \sim -1/{\tilde z}$.
We note that the last two factors in~\eqref{eq:3in} are $\gl_2$-invariant, while the first one is only $\ssl_2$-invariant. 

The formula~\eqref{eq:3in} admits a natural generalization to the general $N$, with the triple  ${\nu}_{1}, {\nu}_{2}, {\nu}_{3}$ 
being replaced with ${\vec\nu}_{1}, {\vec\nu}_{3} \in {\BC}^{N-1},\, {\nu}_2 \in {\BC}$. In this case, we have a unique invariant
(cf.~\eqref{eq:topf}):
\begin{equation}
\label{eq:3inntt}
\begin{split}
  & I^{(3)}_{{\vec\nu}_{1}, {\nu}_2, {\vec\nu}_{3}} =  
    \prod_{a=1}^{N} \, {\tilde\pi}^{a} \left( {\pi}_{a-1} \wedge {\bz}\right)^{n_{a}}\, \cdot 
    \prod_{i=1}^{N-1} \, {\tilde\pi}^{i} \left( {\pi}_{i} \right)^{{\nu}_{3,i}-n_{i}} = \\
  & \qquad \qquad \ \ 
    {\Omega}_{{\vec\nu}_1} \left( \prod_{a=1}^{N} ({\rz}^{a})^{n_{a}}  \right) {\tilde\Omega}_{{\vec\nu}_3} \,  \times \, 
    \left( {\rm power\ series \ in} \ u^{(i)}_{k}, \, {\tilde u}^{k}_{(i)}, \, {\rz}^a/{\rz}^{b} \right) \\
  &  \qquad \qquad  \qquad \qquad  \qquad \qquad  \qquad \qquad  \qquad \qquad   
    \in \left( {\CalV}_{\vec\nu_1} {\hat\otimes} {\CalH}_{\nu_2}^{{\vec\nu}_{3}-{\vec\nu}_{1}} {\hat\otimes} 
    {\tilde\CalV}_{\vec\nu_3}    \right)^{\ssl_N} , 
\end{split}
\end{equation}
where the vector ${\bn} = (n_{1}, \ldots , n_{N}) \in {\BC}^{N}$ is determined from 
\beq
  \sum_{a=1}^{N} n_{a} = {\nu}_{2} 
\eeq
and 
\beq
  n_{i+1} - n_i = {\nu}_{1,i}  - {\nu}_{3,i}\, , \qquad i = 1, \ldots , N-1 \, . 
\label{eq:muvec13}
\eeq
Similarly to the $N=2$ case, the factor ${\tilde\pi}^{N} \left( {\pi}_{N-1} \wedge {\bz}\right)^{n_N}$ is 
only $\ssl_N$-invariant, while all other factors in~\eqref{eq:3inntt} are naturally $\gl_N$-invariant.

Another generalization of \eqref{eq:3in} is the invariant
\begin{equation}
\label{eq:3innt}
\begin{split}
  & {\tilde I}^{(3)}_{{\vec\nu}_{1}, {\nu}_2, {\vec\nu}_{3}} =  
    \prod_{a=1}^{N} \, \left( {\tilde\pi}^{a-1} \wedge {\tilde\bz} \left(  {\pi}_{a} \right) \right) ^{{\tilde n}_{a}}\, \cdot
    \prod_{i=1}^{N-1} \, {\tilde\pi}^{i} \left( {\pi}_{i} \right)^{{\nu}_{1,i}-{\tilde n}_{i}} = \\
  & \qquad \qquad \ \ 
    {\Omega}_{{\vec\nu}_1} \left( \prod_{a=1}^{N} {\tz}_{a}^{{\tilde n}_{a}}  \right) {\tilde\Omega}_{{\vec\nu}_3} \, \times \, 
    \left( {\rm power \ series \ in} \ u^{(i)}_{k}, \, {\tilde u}^{k}_{(i)}, \, {\tz}_b/{\tz}_{a} \right) \\
  &  \qquad \qquad  \qquad \qquad  \qquad \qquad  \qquad \qquad  \qquad \qquad     
      \in \left( {\CalV}_{\vec\nu_1} {\hat\otimes} {\tilde\CalH}_{\nu_2}^{{\vec\nu}_{1}-{\vec\nu}_{3}} {\hat\otimes}
      {\tilde\CalV}_{\vec\nu_3} \right)^{\ssl_N} ,
\end{split}
\end{equation}
where the vector ${\tilde\bn} = ({\tilde n}_{1}, \ldots , {\tilde n}_{N}) \in {\BC}^{N}$ is determined from 
\beq
  \sum_{a=1}^{N} {\tilde n}_{a} = {\nu}_{2} 
\eeq
and
\beq
  {\tilde n}_{i+1} - {\tilde n}_i = {\nu}_{3,i}  - {\nu}_{1,i}\, , \qquad i = 1, \ldots , N-1 \, . 
\label{eq:muvec31}
\eeq

\begin{rem} 
The examples~(\ref{eq:3inntt},~\ref{eq:3innt}) demonstrate the need for twists in the definition of the HW-modules 
in Section~\ref{sssec:hwmod}. 
\end{rem}

To prove that $I^{(2)}$ of~\eqref{eq:2invN}, $I^{(3)}$ of~\eqref{eq:3inntt}, and ${\tilde I}^{(3)}$ of~\eqref{eq:3innt} 
are the only invariants in the corresponding (completed) tensor products of $2$ and $3$ modules of $\ssl_N$, see 
Corollary~\ref{cor:inv-uniqueness}, let us recall the realization of the corresponding spaces of invariants as 
the weight subspaces. 

\begin{nott}
For an $\ssl_N$-module $W$ and ${\vec\lambda} \in {\BC}^{N-1}$, we denote by $W [ {\vec\lambda} ]$ the weight $\vec\lambda$ subspace:
\beq
  w \in W[{\vec\lambda}] \, \Leftrightarrow\, {\bh}_{i}(w)  = {\lambda}_{i} \cdot w\, , \qquad i = 1, \ldots , N-1 \, .
\label{eq:wss}
\eeq
\end{nott}

\begin{rem}
We have (cf.~Remark~\ref{rem:HW-weights}):
\beq
  {\CalH}_{\mf}^{\vec\mu} [ - {\vec\mu}  ] = {\BC} \cdot {\omega}_{\bn} \, , \qquad 
  {\tilde\CalH}_{\tilde\mf}^{\vec{\tilde\mu}} [ {\vec{\tilde\mu}}  ] = {\BC} \cdot {\tilde \omega}_{\tilde \bn} \, . 
\eeq
\end{rem}

To Verma modules ${\CalV}_{\vec\nu}, {\tilde\CalV}_{\vec{\tilde\nu}}$ defined in Sections \ref{ssec:hwverma}, \ref{ssec:lwverma}, 
we associate the \emph{restricted dual} modules ${\CalV}_{\vec\nu}^{*}, {\tilde\CalV}_{\vec{\tilde\nu}}^{*}$. These are defined 
as the submodules of ${\rm Hom}_{\BC}({\CalV}_{\vec\nu}\, , \, {\BC})$, 
${\rm Hom}_{\BC}({\tilde\CalV}_{\vec{\tilde\nu}}\, , \, {\BC})$, respectively, whose underlying vector spaces are 
direct sums of the spaces, dual to the $\ssl_N$-weight subspaces of ${\CalV}_{\vec\nu}, {\tilde\CalV}_{\tilde{\vec\nu}}$.
The following is well-known:

\begin{lem}\label{duality of Verma}
If ${\CalV}_{\vec\nu}$ (resp.\ ${\tilde{\CalV}}_{\vec{\tilde\nu}}$) is an irreducible $\ssl_N$-module, 
then ${\CalV}_{\vec\nu}^*\simeq {\tilde\CalV}_{\vec\nu}$ 
(resp.\ ${\tilde\CalV}_{\vec{\tilde\nu}}^{*} \simeq {\CalV}_{\vec{\tilde\nu}}$).
\end{lem}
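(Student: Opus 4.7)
The plan is to realize the isomorphism via the $\gl_N$-invariant bilinear pairing $(\cdot,\cdot)_{\vec\nu}\colon \CalV_{\vec\nu} \times \tilde\CalV_{\vec\nu} \to \BC$ introduced in Remark~\ref{vermapair}, normalized by $(\Omega_{\vec\nu}, \tilde\Omega_{\vec\nu})_{\vec\nu} = 1$. The crucial first step is to establish nondegeneracy of this pairing, and this is exactly where the irreducibility hypothesis enters: by $\gl_N$-invariance, the left and right kernels of the pairing are $\ssl_N$-submodules of $\CalV_{\vec\nu}$ and $\tilde\CalV_{\vec\nu}$ respectively, so by irreducibility each of them is either $0$ or the whole module; since $(\Omega_{\vec\nu}, \tilde\Omega_{\vec\nu})_{\vec\nu} = 1 \neq 0$, both kernels vanish.

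With nondegeneracy in hand, currying produces the map $\Phi\colon \tilde\CalV_{\vec\nu} \to \Hom_{\BC}(\CalV_{\vec\nu}, \BC)$, $w \mapsto (\cdot, w)_{\vec\nu}$. Equipping $\Hom_{\BC}(\CalV_{\vec\nu}, \BC)$ with the standard dual action $(\xi f)(v) := -f(\xi v)$, the invariance identity $(\xi v, w)_{\vec\nu} + (v, \xi w)_{\vec\nu} = 0$ is precisely the statement that $\Phi$ is $\ssl_N$-equivariant. Specializing to Cartan elements $\xi = \bh_i$ forces $(v,w)_{\vec\nu} = 0$ unless the weights of $v$ and $w$ sum to zero, so $\Phi$ is weight-preserving and lands inside the restricted dual $\CalV_{\vec\nu}^* = \bigoplus_{\vec\lambda}\CalV_{\vec\nu}[\vec\lambda]^*$. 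Nondegeneracy then implies that $\Phi$ restricts to an injection $\tilde\CalV_{\vec\nu}[\vec\lambda] \hookrightarrow \CalV_{\vec\nu}^*[\vec\lambda]$ on every weight subspace.

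The final step is a dimension count on each weight subspace. The polynomial descriptions of $\CalV_{\vec\nu}$ and $\tilde\CalV_{\vec\nu}$ in terms of the coordinates $u^{(i)}_k$ and $\tilde u^{k}_{(i)}$ (Sections~\ref{ssec:hwverma} and~\ref{ssec:lwverma}) give both $\dim \CalV_{\vec\nu}[-\vec\lambda]$ and $\dim \tilde\CalV_{\vec\nu}[\vec\lambda]$ the same Verma multiplicity, namely the Kostant partition function of $\vec\nu - \vec\lambda$ in the simple positive roots. Combined with the tautological identity $\dim \CalV_{\vec\nu}^*[\vec\lambda] = \dim \CalV_{\vec\nu}[-\vec\lambda]$, this shows that the weight-preserving injection $\Phi$ is also surjective on each weight subspace, and hence $\tilde\CalV_{\vec\nu} \iso \CalV_{\vec\nu}^*$. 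The companion statement $\tilde\CalV_{\vec{\tilde\nu}}^* \simeq \CalV_{\vec{\tilde\nu}}$ follows by the symmetric argument with the roles of $\CalV$ and $\tilde\CalV$ interchanged. The step I expect to require the most care is the nondegeneracy argument above, since it is the sole place where the irreducibility assumption is genuinely used; everything else amounts to bookkeeping about weight subspaces and Verma characters.
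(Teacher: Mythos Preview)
The paper does not prove this lemma; it simply labels it ``well-known'' and moves on. Your argument supplies what the paper omits, and the overall strategy---pair via $(\cdot,\cdot)_{\vec\nu}$, check invariance and weight-compatibility, then match weight-space dimensions via the Kostant partition function---is the standard one and is correct in substance.

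There is one small slip worth flagging. In your nondegeneracy step you write that ``by irreducibility each of [the left and right kernels] is either $0$ or the whole module,'' but the hypothesis of the lemma only gives you irreducibility of $\CalV_{\vec\nu}$, not of $\tilde\CalV_{\vec\nu}$. So as written you have only established that the \emph{left} kernel vanishes. Two easy patches are available, and you already have the ingredients for either one:
\begin{itemize}
  \item Observe that the Chevalley involution of $\ssl_N$ (swapping $\er_i \leftrightarrow \fr_i$ and negating $\bh_i$) intertwines $\CalV_{\vec\nu}$ and $\tilde\CalV_{\vec\nu}$, so one is irreducible iff the other is; then your kernel argument goes through for both sides.
  \item Alternatively, skip the right-kernel argument entirely and reorder: left-nondegeneracy alone gives an injection $\CalV_{\vec\nu}[-\vec\lambda] \hookrightarrow \tilde\CalV_{\vec\nu}[\vec\lambda]^*$ on each (finite-dimensional) weight space; your Kostant-function dimension count then forces this to be a bijection, whence the pairing is perfect on each weight pair and $\Phi$ is automatically injective as well.
\end{itemize}
Either fix is one sentence, and the rest of your proof stands as is.
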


\noindent
For any $\ssl_N$-module $W$, we define the completed tensor products ${\CalV}_{\vec\nu} \hat{\otimes} W$ and  
${\tilde\CalV}_{\vec{\tilde\nu}} \hat{\otimes} W$ via:
\beq\label{completed tensor product}
  {\CalV}_{\vec\nu} \hat{\otimes} W := \Hom_{\BC} \left( {\CalV}_{\vec\nu}^{*}\, ,\, W\right)\, , \qquad
  {\tilde\CalV}_{\vec{\tilde\nu}} \hat{\otimes} W := \Hom_{\BC} \left( {\tilde\CalV}_{\vec{\tilde\nu}}^{*}\, ,\, W\right) ,
\eeq
both of which have natural structure of $\ssl_N$-modules.

Now we are ready to invoke the standard interpretation of the space of $\ssl_N$-invariants in the tensor product, 
completed in the sense of~(\ref{completed tensor product}), of $\ssl_N$-modules involving both the highest weight 
and the lowest weight Verma modules (cf.~the proof of~\cite[Proposition 1.1]{E1}):

\begin{lem}\label{key property}
If the lowest weight Verma ${\CalV}_{\vec\nu}$ and the highest weight Verma ${\tilde\CalV}_{\vec{\tilde\nu}}$ modules of $\ssl_N$
are irreducible, then the space of $\ssl_N$-invariants in 
${\CalV}_{\vec\nu} \hat{\otimes} W \hat{\otimes} {\tilde\CalV}_{\vec{\tilde\nu}}$ can be described as follows:
\beq
  \left( {\CalV}_{\vec\nu} \hat{\otimes} W \hat{\otimes} {\tilde\CalV}_{\vec{\tilde\nu}} \right)^{\ssl_N} \simeq \, 
  W\left[ \, {\vec\nu} - {\vec{\tilde\nu}} \, \right] \, .
\eeq
\end{lem}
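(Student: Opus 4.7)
The plan is to combine Frobenius reciprocity for Verma modules with the standard $\Hom$--tensor adjunction, iteratively reducing the invariants to a weight subspace in $W$.

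First, I would unpack the completed tensor product. By the very definition of $\hat\otimes$ in~\eqref{completed tensor product}, together with Lemma~\ref{duality of Verma} (which, under the irreducibility hypothesis, identifies ${\CalV}_{\vec\nu}^{*}\simeq {\tilde\CalV}_{\vec\nu}$ and ${\tilde\CalV}_{\vec{\tilde\nu}}^{*}\simeq {\CalV}_{\vec{\tilde\nu}}$), iterated $\Hom$--tensor adjunction produces a canonical isomorphism of $\ssl_N$-modules
$$
  {\CalV}_{\vec\nu} \hat\otimes W \hat\otimes {\tilde\CalV}_{\vec{\tilde\nu}} \ \simeq\
  \Hom_{\BC}\bigl( {\tilde\CalV}_{\vec\nu} \otimes {\CalV}_{\vec{\tilde\nu}}\, , \, W \bigr),
$$
where the diagonal $\ssl_N$-action on the left corresponds to the usual conjugation action on $\Hom$ on the right. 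Passing to $\ssl_N$-invariants,
$$
  \bigl( {\CalV}_{\vec\nu} \hat\otimes W \hat\otimes {\tilde\CalV}_{\vec{\tilde\nu}} \bigr)^{\ssl_N}
  \ \simeq \ \Hom_{\ssl_N}\bigl( {\tilde\CalV}_{\vec\nu} \otimes {\CalV}_{\vec{\tilde\nu}}\, , \, W \bigr).
$$

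Next, I would rewrite this via the adjunction
$\Hom_{\ssl_N}({\tilde\CalV}_{\vec\nu} \otimes {\CalV}_{\vec{\tilde\nu}}, W) \simeq
 \Hom_{\ssl_N}\bigl( {\tilde\CalV}_{\vec\nu}\, ,\, \Hom_{\BC}({\CalV}_{\vec{\tilde\nu}}, W) \bigr)$,
where the inner $\Hom_{\BC}({\CalV}_{\vec{\tilde\nu}}, W)$ carries the standard $\ssl_N$-action $(X\cdot \phi)(v) = X\phi(v) - \phi(Xv)$. Since ${\tilde\CalV}_{\vec\nu}$ is the highest weight Verma with cyclic vector ${\tilde\Omega}_{\vec\nu}$ satisfying ${\mathfrak n}_{+}{\tilde\Omega}_{\vec\nu}=0$ and ${\bh}_{i}{\tilde\Omega}_{\vec\nu} = \nu_{i}{\tilde\Omega}_{\vec\nu}$, and is freely generated over $U({\mathfrak n}_{-})$ by PBW (Remark~\ref{Remark on Verma modules}), the universal property for highest weight Vermas gives, for any $\ssl_N$-module $M$,
$$
  \Hom_{\ssl_N}({\tilde\CalV}_{\vec\nu}\, , \, M) \ \simeq \ M[\vec\nu]^{{\mathfrak n}_{+}}.
$$
Applied to $M = \Hom_{\BC}({\CalV}_{\vec{\tilde\nu}}, W)$, this identifies the invariants with the space of linear maps $\phi\colon {\CalV}_{\vec{\tilde\nu}} \to W$ that are ${\mathfrak n}_{+}$-equivariant and carry the weight subspace ${\CalV}_{\vec{\tilde\nu}}[\lambda]$ into $W[\lambda + \vec\nu]$ for each weight $\lambda$.

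Finally, I would invoke the dual universal property of the lowest weight Verma ${\CalV}_{\vec{\tilde\nu}}$: by PBW it is freely generated over $U({\mathfrak n}_{+})$ by ${\Omega}_{\vec{\tilde\nu}}$, so any such $\phi$ is uniquely determined by ${\Omega}_{\vec{\tilde\nu}} \mapsto \phi({\Omega}_{\vec{\tilde\nu}})$, and conversely any prescribed value $w \in W$ extends uniquely via $\phi(X{\Omega}_{\vec{\tilde\nu}}) := X w$ for $X\in U({\mathfrak n}_{+})$ --- well-definedness and ${\mathfrak n}_{+}$-equivariance follow automatically from PBW freeness. The weight condition then forces $\phi({\Omega}_{\vec{\tilde\nu}}) \in W[\vec\nu - \vec{\tilde\nu}]$, and the resulting evaluation map provides the desired isomorphism
$$
  \bigl( {\CalV}_{\vec\nu} \hat\otimes W \hat\otimes {\tilde\CalV}_{\vec{\tilde\nu}} \bigr)^{\ssl_N} \ \simeq\ W[\vec\nu - \vec{\tilde\nu}]\, .
$$
The only mildly subtle point is checking that the $\Hom$--tensor adjunctions of the first paragraph are compatible with the diagonal $\ssl_N$-action on the completed tensor product; this is a straightforward matter of bookkeeping once one works weight-component-by-weight-component, using that the restricted duals ${\CalV}_{\vec\nu}^{*}$ and ${\tilde\CalV}_{\vec{\tilde\nu}}^{*}$ are by definition the sums of the duals of finite-dimensional weight spaces.
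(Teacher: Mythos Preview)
Your proof is correct and follows essentially the same approach as the paper's. Both arguments unpack the completed tensor products via \eqref{completed tensor product} and Lemma~\ref{duality of Verma}, then apply Frobenius reciprocity twice---once for the highest weight Verma ${\tilde\CalV}_{\vec\nu}$ (reducing to ${\mathfrak n}_{+}$-invariants of weight $\vec\nu$) and once for the lowest weight Verma ${\CalV}_{\vec{\tilde\nu}}$ (evaluating at the cyclic vector). The only cosmetic difference is that you unpack both completions simultaneously into $\Hom_{\BC}({\tilde\CalV}_{\vec\nu}\otimes{\CalV}_{\vec{\tilde\nu}},W)$ before applying adjunction, whereas the paper's chain \eqref{canonical identifications} peels them off one at a time, passing through the intermediate space $(W\hat\otimes{\tilde\CalV}_{\vec{\tilde\nu}})^{\n_+}[\vec\nu]$.
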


\begin{proof}
This follows from the following sequence of canonical identifications:
\begin{equation}\label{canonical identifications}
\begin{split}
  & \left( {\CalV}_{\vec\nu} \hat{\otimes} W  \hat{\otimes} {\tilde\CalV}_{\vec{\tilde\nu}} \right)^{\ssl_N} \, \simeq \ 
    \Hom_{\ssl_N} \left( {\CalV}_{\vec\nu}^*, W\hat{\otimes}{\tilde{\CalV}}_{\vec{\tilde\nu}} \right) \, \simeq \\
  & \qquad \qquad \qquad 
    \Hom_{\ssl_N} \left( {\tilde{\CalV}}_{\vec\nu}, W\hat{\otimes} {\tilde{\CalV}}_{\vec{\tilde\nu}} \right) \, \simeq\,
    \left( W\hat{\otimes} {\tilde{\CalV}}_{\vec{\tilde\nu}} \right)^{\n_+} [{\vec\nu}]  \, \simeq \\
  & \qquad \qquad \qquad 
    \Hom_{\n_+} \left( {\tilde{\CalV}}_{\vec{\tilde\nu}}^* , W \right) [{\vec\nu}] \, \simeq\,
    \Hom_{\n_+} \left( {\CalV}_{\vec{\tilde\nu}} , W \right) [{\vec\nu}] \, \simeq \,
    W\left[{\vec\nu}-{\vec{\tilde\nu}}\right] 
\end{split}
\end{equation}
by using the conventions~(\ref{completed tensor product}), Lemma~\ref{duality of Verma}, and Frobenius reciprocity.
\end{proof}

\begin{rem}
Putting together the identifications~(\ref{canonical identifications}), we see that the resulting vector space isomorphism
\begin{equation}\label{invariants via weight space}
  \X \colon 
  \left( {\CalV}_{\vec\nu} \hat{\otimes} W \hat{\otimes} {\tilde\CalV}_{\vec{\tilde\nu}} \right)^{\ssl_N}
    \,\iso\,  W\left[{\vec\nu}-{\vec{\tilde\nu}}\right]
\end{equation}
is obtained by pairing an element of 
  $\left( {\CalV}_{\vec\nu} \hat{\otimes} W \hat{\otimes} {\tilde\CalV}_{\vec{\tilde\nu}} \right)^{\gl_N}$
with 
  $\tilde{\Omega}_{{\vec\nu}} \otimes {\Omega}_{{\vec{\tilde\nu}}} \in 
   \tilde{\CalV}_{{\vec\nu}} \otimes {\CalV}_{{\vec{\tilde\nu}}}$
with respect to $(\cdot,\cdot)_{\vec{\nu}}$ and $(\cdot,\cdot)_{{\vec{\tilde\nu}}}$ in the first and third tensor factors,
cf.~Remark~\ref{vermapair} and Lemma~\ref{duality of Verma}. 
\end{rem}

Applying Lemma~\ref{key property} to the trivial and the twisted HW-modules of $\ssl_N$, we obtain:

\begin{cor}\label{cor:inv-uniqueness}
(a) For the trivial $\ssl_N$-module $W=\BC$, the space of invariants 
$\left( {\CalV}_{\vec{\nu}_1}  \hat{\otimes} {\tilde\CalV}_{\vec{\nu}_2} \right)^{\ssl_N}$ vanishes if 
$\vec{\nu}_1\ne \vec{\nu}_2$, and is one-dimensional (hence, is spanned by $I^{(2)}_{\vec{\nu}_1}$ of~\eqref{eq:2invN}) 
if $\vec{\nu}_1=\vec{\nu}_2$. 

\noindent
(b) For the twisted HW-modules $W={\CalH}_{\nu_2}^{{\vec\mu}}, {\tilde{\CalH}}_{\nu_2}^{{\vec{\tilde\mu}}}$, the spaces of invariants
  $\left( {\CalV}_{\vec\nu_1} {\hat\otimes} {\CalH}_{\nu_2}^{\vec\mu} {\hat\otimes} {\tilde\CalV}_{\vec\nu_3} \right)^{\ssl_N}$
and 
 $\left( {\CalV}_{\vec\nu_1} {\hat\otimes} {\tilde{\CalH}}_{\nu_2}^{\vec{\tilde\mu}} 
  {\hat\otimes} {\tilde\CalV}_{\vec\nu_3} \right)^{\ssl_N}$  
are at most one-dimensional, and they vanish if $\vec\mu+\vec{\nu}_1-\vec{\nu}_3\notin \BZ^{N-1}$, 
$\vec{\tilde\mu}+\vec{\nu}_3-\vec{\nu}_1\notin \BZ^{N-1}$, respectively. In particular, the invariants
  $I^{(3)}_{{\vec\nu}_{1}, {\nu}_2, {\vec\nu}_{3}}\in 
   \left( {\CalV}_{\vec\nu_1} {\hat\otimes} {\CalH}_{\nu_2}^{{\vec\nu}_{3}-{\vec\nu}_{1}} {\hat\otimes} 
   {\tilde\CalV}_{\vec\nu_3} \right)^{\ssl_N}$ 
and
 ${\tilde I}^{(3)}_{{\vec\nu}_{1}, {\nu}_2, {\vec\nu}_{3}}\in 
   \left( {\CalV}_{\vec\nu_1} {\hat\otimes} {\tilde{\CalH}}_{\nu_2}^{{\vec\nu}_{1}-{\vec\nu}_{3}} {\hat\otimes} 
   {\tilde\CalV}_{\vec\nu_3} \right)^{\ssl_N}$ 
of~\eqref{eq:3inntt} and~\eqref{eq:3innt} are unique, up to scalar multipliers.
\end{cor}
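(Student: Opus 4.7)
Both parts fall out as direct consequences of Lemma~\ref{key property}, which identifies
\[
  \left( {\CalV}_{\vec\nu_1} \hat{\otimes} W \hat{\otimes} {\tilde\CalV}_{\vec\nu_2} \right)^{\ssl_N}
  \iso W\!\left[\, \vec\nu_1 - \vec\nu_2\, \right] ,
\]
reducing everything to a weight-space computation in the middle tensor factor. In each case, nontriviality of the explicit invariant will then be immediate from the leading term of its expansion, which is visibly a nonzero tensor of vacuum vectors.

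For part~(a), take $W = \BC$ and rewrite $(\CalV_{\vec\nu_1} \hat\otimes \tilde\CalV_{\vec\nu_2})^{\ssl_N}$ as the three-factor product $(\CalV_{\vec\nu_1} \hat\otimes \BC \hat\otimes \tilde\CalV_{\vec\nu_2})^{\ssl_N}$. The trivial module carries only the zero weight, so the lemma produces $\BC$ when $\vec\nu_1 = \vec\nu_2$ and $0$ otherwise. In the nonvanishing case, the leading term of $I^{(2)}_{\vec\nu_1}$ in~\eqref{eq:2invN} is $\Omega_{\vec\nu_1} \otimes \tilde\Omega_{\vec\nu_1}$, whose $(\cdot,\cdot)_{\vec\nu_1}$-pairing equals $1$ by Remark~\ref{vermapair}; hence $I^{(2)}_{\vec\nu_1}$ is nonzero and spans the one-dimensional invariant space.

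For part~(b), apply the lemma with $W = \CalH_{\nu_2}^{\vec\mu}$ (resp.\ $W = \tilde\CalH_{\nu_2}^{\vec{\tilde\mu}}$), identifying the invariants with $\CalH_{\nu_2}^{\vec\mu}[\vec\nu_1 - \vec\nu_3]$ (resp.\ $\tilde\CalH_{\nu_2}^{\vec{\tilde\mu}}[\vec\nu_1 - \vec\nu_3]$). By Remark~\ref{rem:HW-weights}(b), every $\gl_N$-weight subspace of these HW-modules is one-dimensional, with weights lying in $-\bn + \Lambda_0$ (resp.\ $\tilde\bn + \Lambda_0$). The $\gl_N \to \ssl_N$ weight reduction $(w_1,\ldots,w_N) \mapsto (w_i - w_{i+1})_{i=1}^{N-1}$ places the $\ssl_N$-weights inside $-\vec\mu + \BZ^{N-1}$ (resp.\ $\vec{\tilde\mu} + \BZ^{N-1}$) in view of~\eqref{eq:mmun} (resp.~\eqref{eq:mmunt}), yielding simultaneously the at-most-one-dimensionality and the two stated vanishing criteria. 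For the ``in particular'' clause, substitute $\vec\mu = \vec\nu_3 - \vec\nu_1$ (resp.\ $\vec{\tilde\mu} = \vec\nu_1 - \vec\nu_3$), which makes the integrality condition automatic, and observe that the leading terms of $I^{(3)}$ and $\tilde I^{(3)}$ exhibited in~\eqref{eq:3inntt}--\eqref{eq:3innt} -- the tensors of $\Omega_{\vec\nu_1}$, the corresponding monomial in $\rz^a$'s (resp.\ $\tz_a$'s), and $\tilde\Omega_{\vec\nu_3}$ -- are nonzero, so these explicit invariants span. I anticipate no substantive obstacle: the entire representation-theoretic content is already packaged in Lemma~\ref{key property}, and the only remaining bookkeeping is the elementary translation between $\gl_N$- and $\ssl_N$-weight conventions.
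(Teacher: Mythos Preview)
Your proposal is correct and follows exactly the approach the paper takes: the paper simply introduces this corollary with the sentence ``Applying Lemma~\ref{key property} to the trivial and the twisted HW-modules of $\ssl_N$, we obtain'' and states the result without further argument. You have supplied precisely the intended details---the weight-space identification via Lemma~\ref{key property}, the one-dimensionality of the $\gl_N$-weight spaces of the HW-modules from Remark~\ref{rem:HW-weights}(b), and the nonvanishing of the explicit invariants from their leading terms.
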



\subsection{Our quartet}\label{ssec:quartet}

We are now finally ready to relate (\ref{eq:ups},~\ref{eq:phiups}) to the invariants in the completed tensor products 
of four $\ssl_N$-modules: the two Vermas and the two twisted HW-modules. 

Let us fix $\vec\nu, {\vec{\tilde\nu}}, {\vec\gamma} \in {\BC}^{N-1}$, and ${\mf}, {\tilde\mf} \in {\BC}$. 
Let us specify four $\ssl_N$-representations as follows:
\beq
  V_{1} = {\CalV}_{\vec\nu}\, , \ V_{2} = {\CalH}_{\mf}^{\vec\gamma - \vec\nu}\, , \ 
  V_{3}= {\tilde\CalH}_{{\tilde\mf}}^{\vec\gamma - \vec{\tilde\nu}}\, , \ V_{4} = {\tilde{\CalV}}_{\vec{\tilde\nu}}\, . 
\label{eq:4reps}
\eeq
We shall work with the completion
\begin{equation*}
  V_1 \hat{\otimes} V_2 \hat{\otimes} V_3 \hat{\otimes} V_4\, ,
\end{equation*}
so defined (cf.~\eqref{completed tensor product}) that it contains the power series expansion 
in $u^{(i)}_{k}, {\tilde u}_{(i)}^{k}, {\rz}^a, {\tz}_a$ of $\Upsilon$ given by \eqref{eq:ups}.

{}Let us now apply Lemma~\ref{key property} to the case $W=V_2\otimes V_3$. Noticing that
\begin{equation}
\label{eq:W-description}
  W\simeq \Big\{\, f \ \Big| \  
    f\in {\BC} \left[ \, ({\rz}^{1})^{\pm 1}, \ldots, ({\rz}^{N})^{\pm 1} , {\tz}_1^{\pm 1} , \ldots , {\tz}_{N}^{\pm 1} \, \Big]\, ,\,
    \deg_{\rz}(f)=\deg_{\tz}(f)=0\right\} ,
\end{equation}
with the $\ssl_N$-action~(\ref{eq:hwmod1},~\ref{eq:hwmod2}) twisted by the factors~\eqref{eq:omegas}, 
we get the following identification:
\begin{equation}
\label{weight space realization}
  \left( V_1 \hat{\otimes} V_2 \hat{\otimes} V_3 \hat{\otimes} V_4 \right)^{\ssl_N} \simeq
  W \left[ {\vec\nu}-{\vec{\tilde\nu}} \right]\simeq 
    \BC \left[ \eta_1^{\pm 1},\ldots, \eta_{N-1}^{\pm 1} \right] , 
\end{equation}
where the variables $\eta_i$'s are defined via:
\beq
\label{eq:eta-variables}
  \eta_i:=\frac{{\rz}^{i+1}{\tz}_{i+1}}{{\rz}^{i}{\tz}_{i}}\ \, , \qquad 1\leq i\leq N-1\, .
\eeq
The above vector space isomorphism 
 ${\BC}\left[ \eta_1^{\pm 1},\ldots, \eta_{N-1}^{\pm 1} \right] \iso 
  \left( V_1 \hat{\otimes} V_2 \hat{\otimes} V_3 \hat{\otimes} V_4 \right)^{\ssl_N}$
is constructive. Explicitly, given $\vec{r}=(r_{1},\ldots,r_{N-1})\in \BZ^{N-1}$, 
define the $\ssl_N$-weight $\vec{\delta}=(\delta_1,\ldots,\delta_{N-1})\in \BZ^{N-1}$ via 
$\delta_i=r_{i-1}-2r_i+r_{i+1}$ with $r_0=r_N=0$.
According to Lemma~\ref{key property}, the spaces of invariants 
 $\left( {\CalV}_{\vec\nu} {\hat\otimes} {\CalH}_{\mf}^{\vec\gamma-\vec\nu} 
  {\hat\otimes} {\tilde\CalV}_{\vec\gamma+\vec\delta} \right)^{\ssl_N}$
and 
 $\left( {\CalV}_{\vec\gamma+\vec\delta}\, {\hat\otimes} {\tilde{\CalH}}_{\tilde\mf}^{\vec\gamma-\vec{\tilde\nu}} 
  {\hat\otimes} {\tilde\CalV}_{\vec{\tilde\nu}} \right)^{\ssl_N}$  
are one-dimensional (for $\vec{r}=\vec{0}$, they are spanned by $I^{(3)}_{\vec{\nu},{\mf},\vec{\gamma}}$ and 
${\tilde I}^{(3)}_{\vec{\gamma},{\tilde \mf},\vec{\tilde\nu}}$). 
Equivalently, there are unique $\ssl_N$-module homomorphisms:
\beq
\begin{split}
   & \varphi_1\colon {\CalV}_{\vec\gamma+\vec\delta}\longrightarrow {\CalV}_{\vec\nu} {\hat\otimes} {\CalH}_{\mf}^{\vec\gamma-\vec\nu} \, ,\\
   & \varphi_2\colon {\tilde{\CalV}}_{\vec\gamma+\vec\delta} \longrightarrow 
   {\tilde{\CalH}}_{\tilde\mf}^{\vec\gamma-\vec{\tilde\nu}} {\hat\otimes} {\tilde\CalV}_{\vec{\tilde\nu}} \, ,
\end{split}
\eeq
such that 
\beq
  \left(\varphi_1(\Omega_{\vec{\gamma}+\vec\delta}), {\tilde \Omega}_{\vec\nu}\right)_{\vec\nu}=
  \prod_{a=1}^N {\left({\rz}^a\right)}^{r_{a-1}-r_a} \cdot \omega_{\bn}\, , \qquad
  \left( \Omega_{\vec{\tilde\nu}} , \varphi_2({\tilde\Omega}_{\vec{\gamma}+\vec\delta})\right)_{\vec{\tilde\nu}}=
  \prod_{a=1}^N {{\tz}_a}^{r_{a-1}-r_a} \cdot {\tilde\omega}_{\tilde\bn}\, ,
\eeq
cf.~(\ref{eq:omegas},~\ref{eq:omega included},~\ref{eq:mmun},~\ref{eq:tomega included},~\ref{eq:mmunt}), 
where we used Lemma~\ref{duality of Verma} and the pairing $(\cdot,\cdot)_{\vec\nu}, (\cdot,\cdot)_{\vec{\tilde\nu}}$ 
of Remark~\ref{vermapair} on the first and second components, respectively.
Hence, we get an $\ssl_N$-module homomorphism:
\beq
\label{eq:4-inv via 23-inv}
   \varphi:=\varphi_1\otimes \varphi_2\colon 
   {\CalV}_{\vec\gamma+\vec\delta}\, \hat{\otimes}{\tilde{\CalV}}_{\vec\gamma+\vec\delta} \longrightarrow 
   V_1 \hat{\otimes} V_2 \hat{\otimes} V_3 \hat{\otimes} V_4 \, .
\eeq
Invoking the $\ssl_N$-invariant 
  $I^{(2)}_{\vec\gamma+\vec\delta}\in \left( {\CalV}_{\vec\gamma+\vec\delta}\, \hat{\otimes} 
   {\tilde{\CalV}}_{\vec\gamma+\vec\delta} \right)^{\ssl_N}$,
we obtain the sought-after $\ssl_N$-invariant
\beq
  \varphi\left( I^{(2)}_{\vec\gamma+\vec\delta} \right)\in 
  \left( V_1 \hat{\otimes} V_2 \hat{\otimes} V_3 \hat{\otimes} V_4 \right)^{\ssl_N}\, ,
\eeq
which exactly corresponds to $\eta_1^{r_1}\eta_2^{r_2}\cdots \eta_{N-1}^{r_{N-1}}$ under 
the identification~\eqref{weight space realization}.

\begin{rem}
The realization \eqref{weight space realization} corresponds to the family (over $\qe$) of maps 
\begin{multline}
  {\Phi} = {\Upsilon} (U, {\tilde U}, {\rz}, {\tz}) \cdot 
  {\psi} \left( \, v_{1}  (U, {\tilde U}, {\rz}, {\tz}), \ldots , v_{N-1} (U, {\tilde U}, {\rz}, {\tz}); {\qe}\, \right)
  \mapsto \\
  {\bf\Psi}^{\rm inst} \left( {\eta}_1, {\eta}_2, \ldots , {\eta}_{N-1}, \frac{\qe}{{\eta}_{1}{\eta}_{2} \ldots {\eta}_{N-1}} \right)
\end{multline}
which consists, in detail, of restricting to ${\pi}_{i} \to  {\pi}_{i}^{0}$~\eqref{eq:pi0}, 
${\tilde\pi}^{i} \to {\tilde\pi}^{i}_{0}$~\eqref{eq:fixed form}, and dropping the factor 
\beq
  {\Upsilon} (U_0, {\tilde U}_{0}, {\rz}, {\tz} ) = 
  \prod_{a=1}^{N} {\tz}_{a}^{{\tilde n}_{a}} ({\rz}^{a})^{n_{a}} \sim 
  {\bf\Psi}^{\rm tree} \cdot \prod_{a=1}^{N}  \left( \frac{{\rz}^{a}}{{\tz}_{a}} \right)^{\delta\mu_{a-1}} \, .
\eeq
\end{rem}


\section{Knizhnik-Zamolodchikov equations}\label{sec KZ}


\subsection{KZ equations}\label{ssec KZ equations}

{}Let us recall the notion of Knizhnik-Zamolodchikov (KZ) equations \cite{KZ} associated with the following data:

\begin{enumerate}

\item[(a)]
$\g$ -- a semisimple Lie algebra, 

\item[(b)]
$t$ -- a non-degenerate $ad$-invariant bilinear form on $\g$, that is:
\begin{equation*}
  t([a,b],c) = t(a, [b,c]) \quad \mathrm{for\ any} \ a,b,c \in \g \, ,
\end{equation*}

\item[(c)] 
$V_1,\ldots,V_n$ -- representations of $\g$,

\item[(d)] 
$\kappa\in \BC^\times$ -- a nonzero constant.

\end{enumerate}

Define the Casimir tensor ${\hat C}\in \g\otimes \g$ and the Casimir element $\Cas\in U(\g)$ via: 
\begin{equation}
\label{eq:Casimir-tensor}
  {\hat C}\, := \sum_{A,B \in I} t^{AB} X_A\otimes X_B 
\end{equation}
and 
\begin{equation}
\label{eq:Casimir-elt}
  \Cas\, := \sum_{A,B\in I} t^{AB} X_A X_B  \, ,
\end{equation}
where $\{X_A\}_{A\in I}$ is a basis of $\g$, $\Vert t^{AB} \Vert$ is the matrix inverse to $\Vert t(X_{A}, X_{B}) \Vert$.

Define the configuration space $\Sigma_n\subset \BC^n$ via: 
\beq
  \Sigma_n: = \Big\{ (p_1,\ldots,p_n)\in \BC^n \, \Big|\, p_i\ne p_j\ \mathrm{for}\ i\ne j \Big\} \, .
\eeq

A function 
  $F\colon \Sigma_n\to V_1\otimes \dots \otimes V_n$ 
is said to satisfy the \emph{KZ equations}~\cite{KZ} if:
\begin{equation}\label{KZ original}
  \kappa\frac{dF}{dp_i}\, +\, \sum_{j\ne i}\frac{{\hat C}_{ij}\cdot F}{p_i-p_j} \, =\, 0 \, , \qquad i=1,\ldots,n \, ,
\end{equation}
where ${\hat C}_{ij}$ denotes\footnote{A more pedantic notation would be:
\begin{equation*}
  {\hat C}_{ij} \, = 
  \sum_{A,B \in I} \, t^{AB} \, {\bf 1}_{V_{1}} \otimes \dots \otimes T_{V_{i}}(X_{A}) \otimes 
  \dots \otimes T_{V_{j}} (X_{B}) \otimes \dots \otimes {\bf 1}_{V_{n}}\, .
\end{equation*}
} 
the action of $\hat C$~\eqref{eq:Casimir-tensor} on the $i$-th and $j$-th factors of $V_1\otimes \dots \otimes V_n$.

\begin{rem}
Note that the KZ equations essentially depend only on the $ad$-invariant form $\frac{t}{\kappa}$.
\end{rem}


\subsection{$\g$-invariance and $n=4$ case}\label{ssec n=4 setup}

{}A function $F\colon \Sigma_n\to V_1\otimes \dots \otimes V_n$ is called $\g$-\emph{invariant} if: 
\beq
  F({\rm p})\in (V_1\otimes\dots\otimes V_n)^\g
  \, , \qquad \forall\, {\rm p}=(p_1,\ldots,p_n)\in \Sigma_n\, .
\eeq
Let $n=4$. Recall the cross-ratio \eqref{eq:crr} of $4$ points,  which  can be thought of as a map:
\begin{equation*}
  \pi\colon \Sigma_4\longrightarrow \BC^\times\, , \qquad 
  {\rm p}=(p_1,p_2,p_3,p_4)\mapsto [p_1,p_2;p_3,p_4]:=\frac{(p_1-p_2)(p_3-p_4)}{(p_1-p_3)(p_2-p_4)}\, .
\end{equation*}
This map can be naturally extended to a map 
  $\bar{\pi}\colon \bar{\Sigma}_4 \to {\mathbb{CP}}^{1}$, 
where $\bar{\Sigma}_4\subset(\mathbb{CP}^1)^4$ is the locus of points with pairwise 
distinct coordinates. The map $\bar{\pi}$ is the quotient map for the natural free action 
of $H = SL(2,{\BC})$ on $\bar{\Sigma}_4$ (the diagonal action by the fractional linear transformations). 
In particular, for any ${\rm p}\in \Sigma_4$ the points ${\rm p}=(p_1,p_2,p_3,p_4)$ 
and $(0, {\qe} = [p_1,p_2;p_3,p_4], 1, \infty)$ of $\bar{\Sigma}_4$ lie in the same $H$-orbit.
Naturally the four KZ equations~\eqref{KZ original} on a $\g$-invariant function $F$ reduce to a single equation 
on a $(V_1\otimes V_2 \otimes V_3 \otimes V_4)^\g$-valued function of $\qe$: 

\begin{prop}\label{reduction}
Assume that the Casimir element $\Cas$~(\ref{eq:Casimir-elt}) acts on $V_i$ as a multiplication by ${\Delta}_{i}\in {\BC}$ 
for any $1\leq i\leq 4$. Choose constants $\{ {\sd}_{ij}\, \vert\, 1\leq i\ne j\leq 4\}$ 
so that $\sd_{ij}=\sd_{ji}$ and $\sum_{j\neq i} \sd_{ij} = \Delta_i$ for any $1\leq i\leq 4$.\footnote{Such $\{\sd_{ij}\}$ 
exist and are unique for an arbitrary choice of $\sd_{12}$ and $\sd_{13}$.}
Then, $F\colon \Sigma_4\to (V_1\otimes V_2\otimes V_3\otimes V_4)^\g$ satisfies all 
four KZ equations~(\ref{KZ original}) if and only if~\footnote{On any simply connected region in 
$\left( {\mathbb{CP}}^{1} \right)^{4} \backslash\, \{{\rm diagonals}\}$.}
\begin{equation}
\label{eq:reduction-to-q}
  F\left( p_1,p_2,p_3,p_4 \right) \, = \,
  \prod_{i<j}(p_i-p_j)^{\frac{\sd_{ij}}{\kappa}}\cdot \, \Phi\Big([p_1,p_2;p_3,p_4]\Big)
\end{equation}
with 
  $\Phi\colon {\BC}^\times\backslash\{1\}\to (V_1\otimes V_2\otimes V_3\otimes V_4)^\g$
satisfying the following equation:
\begin{equation}\label{single KZ}
  \kappa\frac{d\Phi}{d{\qe}} + \left(\frac{\sd_{23}}{{\qe}-1}+\frac{\sd_{12}}{\qe}\right){\Phi}\, +\,
  \left(\frac{{\hat C}_{23}}{{\qe}-1}+\frac{{\hat C}_{12}}{\qe}\right){\Phi}\, =\, 0\, .
\end{equation}
\end{prop}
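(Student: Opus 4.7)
The plan is to substitute the ansatz~\eqref{eq:reduction-to-q} directly into~\eqref{KZ original}, to exploit the $SL(2,\BC)$-invariance of the cross-ratio in order to gauge-fix $(p_1,p_3,p_4)=(0,1,\infty)$, and to show that the resulting single equation is precisely~\eqref{single KZ}. Two algebraic inputs drive the calculation. First, since $F$ is $\g$-invariant, $\sum_{k=1}^{4}T_{V_k}(\xi)\cdot F=0$ for every $\xi\in\g$, and contracting a pair of such relations with the form $t^{AB}$ yields
\[
  \sum_{j\neq i}{\hat C}_{ij}\cdot F \,=\, -\Delta_{i}\cdot F\,,\qquad i=1,\ldots,4\,,
\]
mirroring the hypothesis $\sum_{j\neq i}\sd_{ij}=\Delta_{i}$ on the numerical scalars $\sd_{ij}$. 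Second, the cross-ratio $\qe$ is $SL(2,\BC)$-invariant, while the scalar prefactor $G(p):=\prod_{i<j}(p_i-p_j)^{\sd_{ij}/\kappa}$ transforms by a definite character under the diagonal M\"obius action on $(p_1,\ldots,p_4)$, determined precisely by these Casimir-type identities.

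Upon substitution of $F=G(p)\,\Phi(\qe)$ and use of $\partial_{p_i}\log G=\tfrac{1}{\kappa}\sum_{j\neq i}\sd_{ij}/(p_i-p_j)$, the $i$-th KZ equation becomes
\[
  \sum_{j\neq i}\frac{\sd_{ij}+{\hat C}_{ij}}{p_{i}-p_{j}}\,\Phi(\qe) \,+\, \kappa\,\frac{\partial\qe}{\partial p_{i}}\,\Phi'(\qe) \,=\, 0\,.
\]
In the canonical gauge $(p_1,p_2,p_3,p_4)=(0,\qe,1,\infty)$, a direct differentiation of the explicit expression for the cross-ratio gives $\partial\qe/\partial p_2=1$, while the $j=4$ term in the sum vanishes in the limit $p_4\to\infty$. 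Thus the $i=2$ equation reduces to
\[
  \frac{\sd_{12}+{\hat C}_{12}}{\qe}\,\Phi + \frac{\sd_{23}+{\hat C}_{23}}{\qe-1}\,\Phi + \kappa\,\Phi'(\qe) \,=\, 0\,,
\]
which is precisely~\eqref{single KZ}.

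It remains to show that the other three KZ equations are equivalent consequences. Summing the four KZ equations weighted by $1$, $p_i$, and $p_i^2$, and using the Casimir identity above together with $\sum_{i<j}{\hat C}_{ij}\cdot F=-\tfrac{1}{2}\sum_i\Delta_i\cdot F$, one obtains the three global conformal constraints
\[
   \sum_{i}\partial_{p_i}F=0\,,\qquad \kappa\sum_{i}p_{i}\partial_{p_i}F=\tfrac{1}{2}\sum_{i}\Delta_{i}\cdot F\,,\qquad \kappa\sum_{i}p_{i}^{2}\partial_{p_i}F=\sum_{i}p_{i}\Delta_{i}\cdot F\,,
\]
which the ansatz~\eqref{eq:reduction-to-q} satisfies identically by translation-, dilation-, and special-conformal invariance of $\qe$ combined with $\sum_{j\neq i}\sd_{ij}=\Delta_i$. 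Since the $4\times 4$ linear system relating the four derivatives $\partial_{p_i}F$ to these three constraints and any single KZ equation is non-degenerate (a Vandermonde-type argument once $p_1,p_3,p_4$ are pairwise distinct), KZ$_2$ together with the three conformal identities implies the remaining KZ$_1$, KZ$_3$, KZ$_4$. The main obstacle is purely organizational: tracking how the $\sd_{ij}$-prefactors of $G$ interact with the Casimir identities so that, upon dividing by $G$, all $p_i$-dependent terms in the three auxiliary KZ equations cancel modulo the conformal constraints.
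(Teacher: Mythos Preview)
The paper itself supplies no argument beyond the single line ``The proof of this result is elementary,'' so there is nothing to compare against; your outline is the standard reduction and is essentially correct.

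One point deserves a sentence of tightening. You evaluate KZ$_2$ only on the slice $(p_1,p_2,p_3,p_4)=(0,\qe,1,\infty)$, but the $E_i$'s are not $SL(2,\BC)$-invariant, so ``gauge-fixing'' is not automatic. The cleanest completion is to observe that at a \emph{general} point, using $\partial_{p_2}\log\qe=\tfrac{1}{p_2-p_1}-\tfrac{1}{p_2-p_4}$ and $\partial_{p_2}\log(\qe-1)=\tfrac{1}{p_2-p_3}-\tfrac{1}{p_2-p_4}$, one has
\[
  \frac{E_2}{G}\;=\;\frac{\partial\qe}{\partial p_2}\Bigl[\kappa\Phi'+\tfrac{\sd_{12}+{\hat C}_{12}}{\qe}\Phi+\tfrac{\sd_{23}+{\hat C}_{23}}{\qe-1}\Phi\Bigr]\;+\;\frac{(\sd_{21}+\sd_{23}+\sd_{24})+({\hat C}_{21}+{\hat C}_{23}+{\hat C}_{24})}{p_2-p_4}\,\Phi\,,
\]
and the second term vanishes on $\g$-invariants by the Casimir identity you already stated together with $\sum_{j\neq 2}\sd_{2j}=\Delta_2$. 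Hence $E_2=0$ everywhere is equivalent to~\eqref{single KZ}, and your Vandermonde argument on the three conformal constraints then yields $E_1=E_3=E_4=0$ everywhere. With this one line added, the proof is complete.
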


The proof of this result is elementary.


\subsection{Our KZ setup}\label{ssec KZ-setup}

{} Let us now apply the above discussion to $\g=\ssl_N$ endowed with an $ad$-invariant bilinear form 
$t(a,b)=\mathrm{tr}_{{\BC}^{N}}(ab)$, and the $n=4$ modules $V_i\ (1\leq i\leq 4)$ as in~\eqref{eq:4reps}:
$$
  V_{1} = {\CalV}_{\vec\nu}\, , \ V_{2} = {\CalH}_{\mf}^{\vec\gamma - \vec\nu}\, , \ 
  V_{3}= {\tilde\CalH}_{{\tilde\mf}}^{\vec\gamma - \vec{\tilde\nu}}\, , \ V_{4} = {\tilde{\CalV}}_{\vec{\tilde\nu}}\, . 
$$
According to Lemma~\ref{key property} and the identification~\eqref{weight space realization}, we have:
$$
  \left( V_1 \hat\otimes V_2 \hat\otimes V_3 \hat\otimes V_4 \right)^{\ssl_N} \simeq \,
  \BC\left[ \eta_1^{\pm 1},\ldots,\eta_{N-1}^{\pm 1} \right] \, ,
$$
with $\eta_i$'s defined in~\eqref{eq:eta-variables}. Hence, functions $F$ and $\Phi$ of Proposition~\ref{reduction} 
can be thought of as:
\begin{equation}
\label{eq:FPhi via etas}
  F\colon \Sigma_4\longrightarrow {\BC} \left[ \eta_1^{\pm 1},\ldots,\eta_{N-1}^{\pm 1} \right]
    \, \quad \mathrm{and}\, \quad
  \Phi\colon {\BC}^\times\backslash\{1\}\longrightarrow {\BC} \left[ \eta_1^{\pm 1},\ldots,\eta_{N-1}^{\pm 1} \right]\, .
\end{equation}
Our next goal is to rewrite the equation~\eqref{single KZ} on $\Phi$ as a differential equation in ${\qe},\eta_1,\ldots,\eta_{N-1}$.

   
\subsection{The differential operator $\widehat{H}^{\KZ}$}\label{ssec KZ operator}

{}Choose the basis $\{X_A\}$ of $\g=\ssl_N$ as follows:
\begin{equation*}
  \{X_A\} = \{ {\bJ}_{a}^{b} \, | \, 1 \leq a \ne b \leq N \, \} \,  \sqcup \, \{ \, {\bh}_i\, | \, i = 1, \ldots , N-1 \, \} \, .
\end{equation*}
Then, the Casimir tensor~(\ref{eq:Casimir-tensor}) has the following form: 
\beq
  {\hat C}=\sum_{a\ne b} {\bJ}_{a}^{b}\otimes {\bJ}_{b}^{a} + 
  \sum_{i,j=1}^{N-1} C^{ij} {\bh}_i\otimes {\bh}_j \in \ssl_N\otimes \ssl_N \, ,
\eeq
where $\Vert C^{ij} \Vert$ is the matrix inverse to the Cartan matrix 
$\Vert (2\delta_{i}^{j}-\delta_{i}^{j+1}-\delta_i^{j-1} ) \Vert$ of $\ssl_N$. 
To simplify the calculations, it is convenient to consider a natural embedding 
$\iota\colon \ssl_N\hookrightarrow \gl_N$, so that: 
\beq
\label{eq:sl-to-gl simplification}
  (\iota\otimes\iota)\left( \sum_{i, j=1}^{N-1} C^{ij} {\bh}_i\otimes {\bh}_j \right) \, = \
  \sum_{a=1}^N {\bJ}_{a}^{a}\otimes {\bJ}_{a}^{a} - \frac{1}{N} {\CalC}_{1} \otimes {\CalC}_{1} \, ,
\eeq
where ${\CalC}_1=\sum_{a=1}^N {\bJ}_{a}^a\in \gl_N$ is the first Casimir operator~\eqref{eq:casimirs}. 
Similarly, the image of the Casimir element $\Cas$~\eqref{eq:Casimir-elt} under the induced embedding 
$\iota\colon U(\ssl_N)\hookrightarrow U(\gl_N)$ is given by:
\beq
  (\iota\otimes \iota)(\Cas)= {\CalC}_2 - \frac{{\CalC}_1^2}{N} \, .
\eeq
\medskip
Define 
\beq
  \widehat{H}^{\KZ} = \frac{{\hat C}_{12}}{\qe} + \frac{{\hat C}_{23}}{{\qe}-1} \, .
\label{eq:dkz}
\eeq
The operators 
\beq
\begin{aligned}
  & {\hat C}_{12} = 
    \sum_{a,b=1}^{N} T_{{\CalV}_{\vec\nu}} ({\bJ}_{a}^{b} ) \otimes T_{{\CalH}_{\mf}^{\vec\gamma - \vec\nu}} ({\bJ}_{b}^{a}) + 
    \frac{{\mf} c_{1} ({\CalV}_{\vec\nu})}{N}  \, , \\
  & {\hat C}_{23} = 
    \sum_{a,b = 1}^{N} T_{{\CalH}_{\mf}^{\vec\gamma - \vec\nu}} ({\bJ}_{a}^{b}) \otimes 
    T_{\tilde{\CalH}_{\tilde\mf}^{\vec\gamma - \vec{\tilde\nu}}} ({\bJ}_{b}^{a}) + 
    \frac{{\mf}{\tilde{\mf}}}{N} \end{aligned}
\label{eq:kzres}
\eeq
coincide with ${\hat h}^{\rm cft}_{0}, {\hat h}^{\rm cft}_{1}$ of~\eqref{eq:cft-operators}, respectively, which 
in turn coincide with ${\hat h}^{\rm bps}_{0}, {\hat h}^{\rm bps}_{1}$ of~(\ref{eqn:bps-connection}), according to 
Theorem~\ref{BPSCFT}. This concludes the proof of our main result: 
the vacuum expectation value $\langle\, {\CalS} \, \rangle$ of the surface defect obeys 
the Knizhnik-Zamolodchikov equation~\cite{KZ}, specifically the equation obeyed by the 
$\left( \widehat{\ssl}_{N} \right)_{k}$ current algebra conformal block
\beq
  {\Phi} =  \Big\langle \, 
    {\bf V}_{1} (0) {\bf V}_{2} ({\qe})
    {\bf V}_{3} (1) {\bf V}_{4} ({\infty}) \, 
  \Big\rangle^{\bf a}
\eeq
with the vertex operators at $0$ and $\infty$ corresponding to the generic lowest weight 
${\CalV}_{\vec\nu}$ and highest weight ${\tilde{\CalV}}_{\vec{\tilde\nu}}$ Verma modules, 
while the vertex operators at $\qe$ and~$1$ correspond to the \emph{twisted} HW-modules 
${\CalH}_{\mf}^{\vec \mu}$ and ${\tilde{\CalH}}_{\tilde{\mf}}^{\vec{\tilde\mu}}$.


\section{Conclusions and further directions}

In this paper, we established that the vacuum expectation value of the regular surface defect in $SU(N)$ gauge theory 
in four dimensions with ${\CalN}=2$ supersymmetry, with $2N$ fundamental hypermultiplets, obeys the analytical continuation 
of Knizhnik-Zamolodchikov equation for the four-point conformal block $\left\langle V_1 V_2 V_3 V_4 \right\rangle$ of 
the two-dimensional $\ssl_N$ current algebra at the level
\beq
  k = \frac{{\ve}_{2}}{{\ve}_{1}} - N \, .
\eeq
The surprising feature we discovered is the need to \emph{twist} the irreducible representations
corresponding to the \emph{middle} vertex operators $V_2$ and $V_3$. 

Our result has been anticipated for many years, see~\cite{NN2004}. In particular, in the specific limit 
$m_{i} \to \infty$, ${\qe} \to 0$, with 
\beq
  {\Lambda}^{2N} = {\qe} \prod_{f=1}^{2N} m_{f}
\eeq
the equation~\eqref{eqn:bps-connection} becomes the non-stationary version of the periodic Toda equation:
\beq
  {\kappa} {\Lambda} \frac{\partial}{\partial\Lambda} {\Psi} = 
  \left( \frac 12 \sum_{i=1}^{N} \frac{\partial^2}{\partial x_i^2}+ {\Lambda}^2 \sum_{i=1}^{N} e^{x_{i} -x_{i+1}} \right) {\Psi}\, ,
  \qquad x_{N+1} = x_{1} \, ,
\label{eq:pertoda}
\eeq
where
\beq
  {\qe}_{\omega} m_{\omega}^{+} m_{\omega}^{-} = {\Lambda}^2 e^{x_{\omega+1} - x_{\omega+2}} \, .
\eeq
It was shown in~\cite{Braverman:2004} that the equation~\eqref{eq:pertoda} is obeyed by the $J$-function of the affine flag variety,
which in~\cite{NN2004} was interpreted as the vev of the surface defect in the pure ${\CalN}=2$ super-Yang-Mills theory with 
$SU(N)$ gauge group. However, the method of \cite{Braverman:2004} does not generalize to the theories with matter. 
In \cite{NekBPSCFT} the equations, obeyed by the surface defects of certain quiver gauge theories, were derived. 

In the limit $\ve_1 \to 0$ and/or $\ve_2 \to 0$, the differential operator \eqref{eqn:bps-connection} becomes the equation 
describing certain Lagrangian submanifolds in the complex symplectic manifolds, which are related to the moduli spaces
\cite{Seiberg:1996nz} of vacua of 
the four-dimensional gauge theory we started with, compactified on a circle. These moduli spaces can be also identified
with the moduli space of solutions of some partial differential equations, describing monopoles and instantons
in some auxiliary gauge theory \cite{Cherkis:2000cj, Nekrasov:2003rj, Nekrasov:2012xe, NPS, NRS}. 

In this paper, we studied the simplest case of the asymptotically conformal ${\CalN}=2$ gauge theory, corresponding
to the $A_1$-type quiver. There exist various quiver generalizations, whose Seiberg-Witten geometry can be exactly computed \cite{Nekrasov:2012xe}. The orbifold surface defects of the $A_r$-generalizations conjecturally obey the KZ equations corresponding to the $r+3$-point conformal 
blocks of the $\widehat{\mathfrak{su}(N)}_{k}$ current algebra, with two Verma modules and $r+1$ twisted HW-modules. 
One can also study the intersecting surface defects. For example, in the companion paper \cite{JLN} a $5$-point conformal block corresponding to the infinite-dimensional modules ${\CalV}_{\vec\nu}, {\CalH}^{\vec\mu}_{\mf}, {\tilde\CalH}^{\vec{\tilde\mu}}_{\tilde\mf}, {\tilde\CalV}_{\vec{\tilde\nu}}$, and the $N$-dimensional standard representation is associated
with the intersecting surface defect of the orbifold type studied in this paper, and the orthogonal surface defect
corresponding to the $Q$-observable of gauge theory \cite{NPS, NekBPSCFT}

Perhaps the most interesting continuation of our work would be a translation of the connection between the conformal blocks of two-dimensional current algebra $\left({\widehat{\ssl_N}}\right)_{k}$ to the surface defect partition function of four-dimensional gauge theory that we firmly established, to the $A_{N-1}$ $(0,2)$-theory in six dimensions. 

For integral level $k$ and the weights $\vec\nu, \vec{\tilde\nu}, {\mf}, {\tilde\mf}$ the current algebra conformal blocks have a familiar Chern-Simons interpretation. It can be represented as the path integral in the $SU(N)$ gauge theory on a three-ball $B^3$ with the action 
\beq
\frac{k}{4\pi} \int_{B^3} {\rm Tr} \left( A \wedge d A + \frac{2}{3} A  \wedge A \wedge A \right)
\eeq
with the gauge fields having a curvature singularity along an embedded graph $\Gamma$, as in Fig.~\ref{fig:pic1}.
The edges of the graph are labelled by the conjugacy classes of the monodromy of connection around the small loop linking the edge. We need an extension, or an analytic continuation, to the
case of complex levels and weights. The paper \cite{Witten6} offers such a continuation for the Chern-Simons level. The analytic continuation of Chern-Simons theory in the representation parameters of Wilson and 't Hooft lines is not yet available, but our results strongly suggest it should be possible. 
\begin{figure}
    \centering
    \includegraphics[scale=0.4]{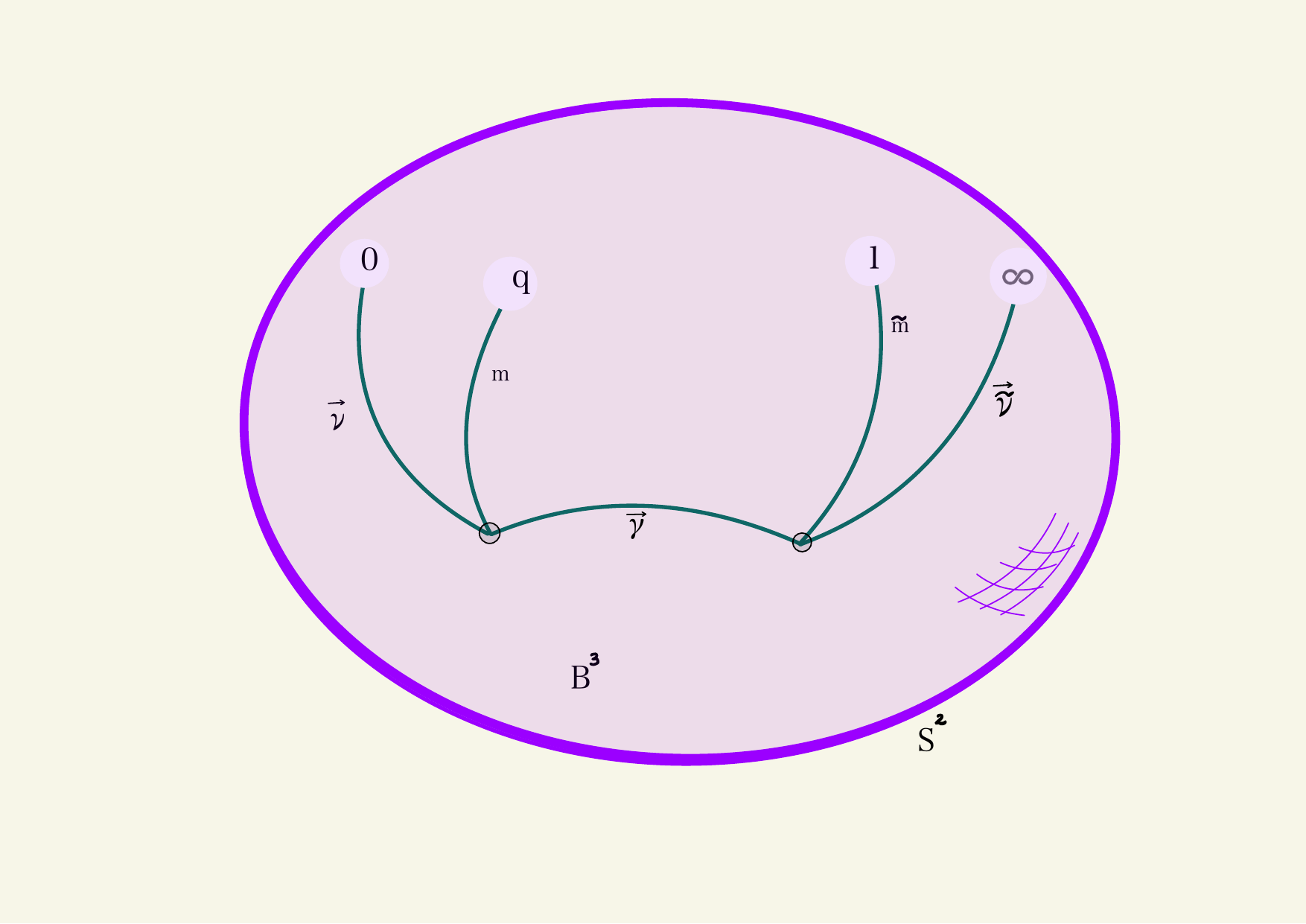}
    \caption{Wilson graph corresponding to the $4$-point conformal block}
    \label{fig:pic1}
\end{figure}
We are familiar with the Wilson line operators $W_{R}(C)$, associated with the representation of the gauge group $G$
and its representation $R$, 
\beq
W_{R}(C) = {\rm Tr}_{R} \, T_{R} \left( P {\rm exp} \oint_{C} A \right) .
\eeq
More generally, a tri-valent orientation graph $\Gamma$, with oriented edges $e$ labelled by representations $R_{e}$, 
with the understanding that the change of the orientation flips the representation $R_{\bar e} = R_{e}^{*}$, 
and vertices labelled by the invariants
\beq
I_{v} \in (R_{e_{1}} \otimes R_{e_{2}} \otimes R_{e_{3}})^{G} 
\eeq
with the edges $e_1, e_2, e_3$ coming out of the vertex $v$, corresponds to the Wilson graph observable
\beq
W_{R_{e}, I_{v}} ({\Gamma} ) = \prod_{l} {\rm Tr}_{R_{l}} \prod_{v} I_{v} \left( \bigotimes\limits_{e} T_{R_{e}} \left( P {\rm exp} \int_{e} A \right) \right)
\label{eq:wgraph}
\eeq
where $l$ labels the loops, i.e., the edges with coinciding ends.

In the case the graph has tails, i.e., $1$-valent vertices, which are placed at the boundary ${\partial} B$, the path integral takes values in the Hilbert space obtained by quantizing the moduli space of flat $G$-connections
on ${\Sigma}^2 = {\partial}B^3$ with singularities at the end-points, with fixed conjugacy classes of monodromies around 
those. In the case of $B^3, {\Sigma}^2 \approx S^2$ this Hilbert space is isomorphic to the space of invariants
in the tensor product of representations attached to the edges ending at the tails. For the graph $\Gamma$ on Fig.~\ref{fig:pic1} this would be
\beq
\left( R_{1} \otimes R_{2} \otimes R_{3} \otimes R_{4} \right)^{G} \, .
\label{eq:r1234}
\eeq
Having the invariants $I_{1} \in \left( R_{1} \otimes R_{2} \otimes R \right)^{G}$, $I_{2} \in 
\left( R^{*} \otimes R_{3} \otimes R_{4} \right)^{G}$ at the two internal vertices of $\Gamma$ identifies the 
conformal block with the channel of the tensor product decomposition \eqref{eq:r1234} corresponding the intermediate
representation $R \in R_{3} \otimes R_{4}$, $R^{*} \in R_{1} \otimes R_{2}$. 

All this, to a limited extent, generalizes to the infinite-dimensional $\mathfrak{g}$-representations, although the 
expression \eqref{eq:wgraph} does not literally make sense. Nevertheless,  the form
\begin{multline}
  {\Upsilon} = I^{(3)}_{{\vec\nu}, {\mf}, {\vec\gamma}} ({\tilde U}', {\rz}, U ) \cdot 
  I^{(2)}_{-\vec\gamma} (U'', {\tilde U}'')
  \cdot {\tilde I}^{(3)}_{{\vec\gamma}, {\tilde\mf}, {\vec{\tilde\nu}}} ({\tilde U}, {\tz}, U')  \, \Biggr\vert_{\rm diag}  \equiv \\
  \prod\limits_{a=1}^{N}\, 
  \left( \frac{\left( {\tilde Z} \wedge {\tilde\Pi}^{a-1} \right)  ({\Pi}_{a}') }{{\tilde\Pi}^{a}({\Pi}_{a}')} \right)^{{\tilde n }_{a}} 
  \left( \frac{{\tilde\Pi}^{'a} \left( {Z} \wedge {\Pi}_{a-1} \right)}{{\tilde\Pi}^{{'}a}({\Pi}_{a})} \right)^{n_{a}} \cdot \, 
  \prod\limits_{i=1}^{N-1} \left( \frac{{\tilde\Pi}^{i} \left( {\Pi}_{i}'\right) \cdot {\tilde\Pi}^{'i} \left( {\Pi}_{i}\right)}{{\tilde\Pi}^{''i} \left( {\Pi}_{i}''\right)} \right)^{\gamma_i} \, \Biggr\vert_{\substack{{\tilde U} = {\tilde U}'={\tilde U}''\\ U = U'=U''}}  
  \label{eq:ups2}
\end{multline}
of our basic invariant $\Upsilon$
\eqref{eq:ups}, and moreover,  
 the ${\qe} \to 0$ asymptotics of the surface defect partition function \eqref{renormalized partition function},  which can be analyzed \cite{Lee:2020hfu} rather explicitly, are suggestive of some sort of three-dimensional interpretation with the graph $\Gamma$, with some
 intermediate $\ssl_N$-module with the highest/lowest/middle weight ${\vec\gamma}$.

It does not seem to be possible to analytically continue \eqref{eq:wgraph} as a line operator in the analytically continued Chern-Simons theory, as in \cite{Witten6}. However, it might be possible to analytically continue the S-dual
't Hooft operator, as a surface defect in the topologically twisted ${\CalN}=4$ theory on a four-dimensional manifold
with corners, which locally looks like $B^3 \times I$. 

{}On the other hand, the surface defect in four dimensions
can be related \cite{NW} to boundary conditions in the two-dimensional sigma model valued in the moduli space
of vacua of the theory, compactified on a circle, which in the present case is believed to be the moduli space ${\CalM}_{N} \left( S^{2} \backslash 4\, {\rm pts} \, ; {\vec\nu}, {\mf}, {\tilde\mf}, {\vec{\tilde\nu}} \right)$
of $SU(N)$ Higgs pairs on a $4$-punctured sphere with the regular punctures at $0$ and $\infty$, and the
minimal punctures at $\qe$ and $1$, see Fig.~\ref{fig:pic2}. 
\begin{figure}
    \centering
    \includegraphics[scale=0.4]{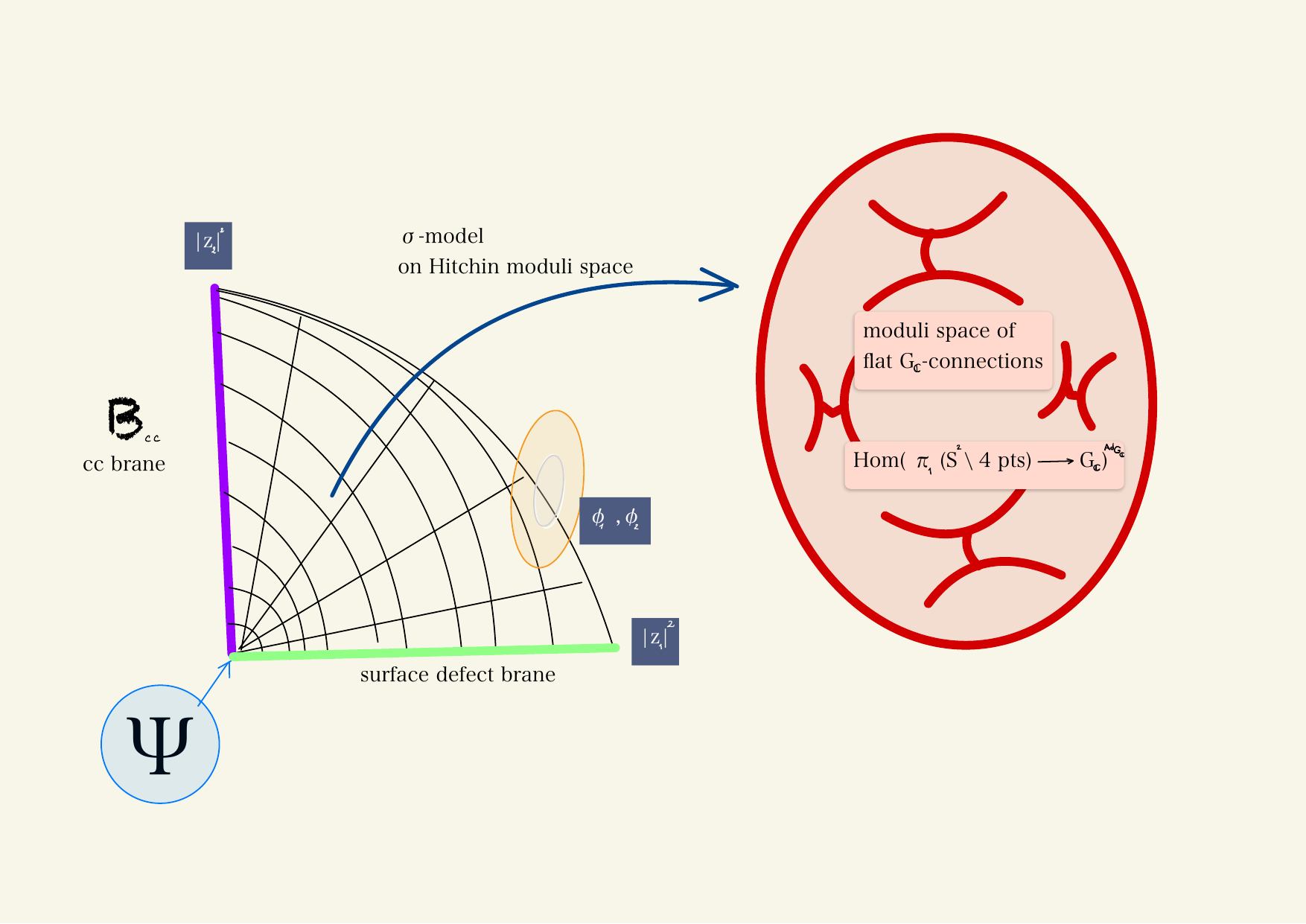}
    \caption{Four-dimensional gauge theory in two-dimensional presentation}
    \label{fig:pic2}
\end{figure}
The homotopy between these two representatives of a cohomology class of an intrinsic operator in the six-dimensional theory proceeds by viewing the two-dimensional sigma model, with the worldsheet $C$ as a long distance limit of the four-dimensional ${\CalN}=2$ $\Omega$-deformed theory
compactified on a two-torus $T^2$ as in \cite{NW}, which, in turn, is a limit of the $A_{N-1}$ $(0,2)$-theory 
compactified on $\left( S^{2} \backslash 4\, {\rm pts}  \right) \times T^2$, which, finally, can be reinterpreted, as
the ${\CalN}=4$ theory on $C \times \left( S^{2} \backslash 4\, {\rm pts}  \right)$. As in \cite{NW}, the canonical parameter \cite{KW} $\Psi$ (not to be confused with the vev of our surface defect) is identified with the ratio $\kappa$ of the $\Omega$-deformation parameters. With $C$ having the topology of the corner ${\BR}_{+}^{2}$, as in Fig.~\ref{fig:pic2}, the ${\CalN}=4$ theory on $C \times \left( S^{2} \backslash 4\, {\rm pts}  \right)$
looks very much like a gradient flow theory of the analytically continued Chern-Simons theory on ${\BR}_{+} \times \left( S^{2} \backslash 4\, {\rm pts}  \right)$, with certain boundary conditions. We plan to discuss this duality
in detail elsewhere.


\appendix
 

\section{Analyticity properties}\label{Appendix A0}

In this Appendix, we provide proofs of the regularity properties from Section~\ref{sec Setup}.
 
\begin{proof}[Proof of Proposition~\ref{regularity 1}]
By inspecting the right-hand side of~\eqref{eq:yobs}, we see that, for generic 
$\ba, \ve_1,\ve_2$, and for any $\overline{\lambda}\in \PP^N$, the rational functions
$\left( Y(x)\,\vert_{\overline{\lambda}} \right)^{\pm 1}$ have only simple poles in $x$.
Moreover, all the poles of $Y(x+{\ve})\,\vert_{\overline{\lambda}}$ and
$\left( Y(x)\,\vert_{\overline{\lambda}} \right)^{-1}$ belong to the set
\begin{equation}
\label{eq:pole set 1}
  \bigsqcup_{1\leq b\leq N} 
  \Big\{ a_{b} + i\cdot\ve_1 + j\cdot\ve_2\, \Big|\, i,j\in\BZ_{\geq 0}\, \Big\}\, .
\end{equation}

Hence, to prove the regularity of $\left\langle\, \CalX(x)\, \right\rangle_{\mu}$, 
it suffices to verify that it has no poles at the above locus~\eqref{eq:pole set 1}. 
Fix $1\leq b\leq N,\, i\geq 0,\, j\geq 0$, and set 
\beq
\label{eq:pole candidate}
  x_0:=a_{b}+i\cdot \ve_1+j\cdot \ve_2.
\eeq
The function $Y(x+\ve)\,\vert_{\overline{\lambda}}$ has a pole at $x=x_0$ iff 
  $\square=(i+1,j+1)\in {\partial}_{-}{\lambda}^{({b})}$,
while the function $\left(Y(x)\,\vert_{\overline{\lambda}}\right)^{-1}$ has a pole at $x=x_0$ iff
  ${\square}=(i+1,j+1)\in {\partial}_{+}{\lambda}^{({b})}$. 
Note that
\begin{equation}
\label{eq:lambda-bijection}
  \overline{\lambda}\mapsto 
  \overline{\lambda}':=\overline{\lambda}\backslash \square^{b}_{(i+1,j+1)}
\end{equation}
(where $\square^{b}_{(i+1,j+1)}$ denotes the $(i,j)$-th box in the $b$-th Young diagram)  
establishes a bijection between the loci of $\overline{\lambda}$ satisfying 
the first condition and the loci of $\overline{\lambda}'$ satisfying the second condition.
Finally, for any $\overline{\lambda}$ from the first locus, 
a straightforward computation shows that:
\begin{equation}
\label{equality of residues}
  \mu\,\vert_{\overline{\lambda}}\cdot \Res_{x=x_0}\,  Y(x+{\ve})\,\vert_{\overline{\lambda}}\, = \, 
  -\, {\qe}\, \cdot \mu\,\vert_{\overline{\lambda}'} \cdot 
  \Res_{x=x_0}\, \left(\frac{P(x)}{Y(x)\,\vert_{\overline{\lambda}'}}\right).
\end{equation}
This completes our proof of the proposition.
\end{proof}

This result admits the following multi-parameter generalization~\cite{NekBPSCFT}:

\begin{prop}\label{regularity 2}
For arbitrary parameters $\bnu=(\nu_1,\ldots,\nu_m)\in \BC^m$, define 
the $\BC(x)$-valued observable $\CalX(x;\bnu)\colon \PP^N\to \BC(x)$ via:
\begin{equation}
\label{multiparameter qq-character}
  \CalX(x;\bnu)\,\vert_{\overline{\lambda}}\ :=
  \sum_{I\sqcup J=\{1,\ldots,m\}}\ {\qe}^{|J|}\, \cdot 
  \prod_{i\in I}^{j\in J}\, R(\nu_i-\nu_j)\cdot \, 
  \prod_{i\in I} Y(x-\nu_i+\ve)\,\vert_{\overline{\lambda}}\cdot \, 
  \prod_{j\in J}\frac{P(x-\nu_j)}{Y(x-\nu_j)\,\vert_{\overline{\lambda}}}\, ,
\end{equation}
where $R(z)=\frac{(z-\ve_1)(z-\ve_2)}{z(z-\ve_1-\ve_2)}$.
Then, the average $\left\langle\, \CalX(x;\bnu)\, \right\rangle_\mu$ is a regular function of $x$.
\end{prop}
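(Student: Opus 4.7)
The plan is to mirror the proof of Proposition~\ref{regularity 1}: show that $\langle \CalX(x;\bnu)\rangle_\mu$ has no poles by pairing residues that cancel. For generic $\bnu,\ba,\bm,\ve_1,\ve_2$, inspection of~\eqref{eq:yobs} shows that each summand in~\eqref{multiparameter qq-character} is rational in $x$ with only simple poles, and the full set of possible pole locations is
$$\bigsqcup_{k=1}^{m}\bigsqcup_{b=1}^{N}\Big\{\nu_k + a_b + p\ve_1 + q\ve_2\, \Big|\, p,q\in\BZ_{\geq 0}\Big\}.$$
For generic $\bnu$, each such point $x_0$ determines $(k,b,p,q)$ uniquely. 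Fix $x_0 = \nu_k + a_b + c(\square)$ with $\square = (p+1,q+1)$. A summand indexed by $(I,J)$ contributes to the residue at $x_0$ for a given $\overline\lambda$ in exactly one of two mutually exclusive ways: Case A, when $k\in I$ and $\square\in\partial_-\lambda^{(b)}$ (the pole arising from $Y(x-\nu_k+\ve)|_{\overline\lambda}$); or Case B, when $k\in J$ and $\square\in\partial_+\lambda^{(b)}$ (the pole arising from $P(x-\nu_k)/Y(x-\nu_k)|_{\overline\lambda}$).

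I will then pair Case B configurations $\overline\lambda$ with Case A configurations $\overline\lambda'' := \overline\lambda \sqcup \square^{b}_{(p+1,q+1)}$ via the bijection of~\eqref{eq:lambda-bijection}, while simultaneously pairing partitions $(I,J)$ by flipping the membership of $k$: a partition $I = I' \cup \{k\},\, J = J'$ (contributing in Case A at $\overline\lambda''$) is paired with $I = I',\, J = J' \cup \{k\}$ (contributing in Case B at $\overline\lambda$). By the $\nu_k$-shifted form of~\eqref{equality of residues},
$$\mu|_{\overline\lambda''}\cdot \Res_{x=x_0} Y(x-\nu_k+\ve)|_{\overline\lambda''} = -\,\qe\cdot \mu|_{\overline\lambda}\cdot \Res_{x=x_0}\left.\frac{P(x-\nu_k)}{Y(x-\nu_k)}\right|_{\overline\lambda},$$
so the cancellation reduces to matching the remaining non-singular factors at $x_0$. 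Direct substitution in~\eqref{eq:yobs} (the box $\square$ contributes one extra rational factor to $Y$, which is evaluated at the shifted argument) yields, for $i\ne k$, the two key identities
$$\frac{Y(x-\nu_i+\ve)|_{\overline\lambda''}(x_0)}{Y(x-\nu_i+\ve)|_{\overline\lambda}(x_0)} = R(\nu_i - \nu_k),\qquad \frac{(P(x-\nu_i)/Y(x-\nu_i))|_{\overline\lambda''}(x_0)}{(P(x-\nu_i)/Y(x-\nu_i))|_{\overline\lambda}(x_0)} = \frac{1}{R(\nu_k - \nu_i)}.$$
Upon inserting these into the residue sum, the discrepancy between the $R$-factor products $\prod_{j\in J'} R(\nu_k - \nu_j)$ appearing in Case A at $\overline\lambda''$ and $\prod_{i\in I'} R(\nu_i - \nu_k)$ appearing in Case B at $\overline\lambda$ is exactly compensated, pair-by-pair.

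The main obstacle is the combinatorial bookkeeping of the $R$-factors: one must carefully track how the product $\prod_{i\in I,j\in J} R(\nu_i - \nu_j)$ changes when $k$ is moved between $I$ and $J$, and combine this with the two ratio identities to see the pair-by-pair cancellation. Once these ratio identities are established, the cancellation is a direct consequence of the structure of~\eqref{multiparameter qq-character}, whose $R$-factors were engineered precisely for this purpose; apart from the routine verifications, no input beyond Proposition~\ref{regularity 1} and the residue identity~\eqref{equality of residues} is needed.
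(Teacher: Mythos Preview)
Your proposal is correct and follows essentially the same approach as the paper's proof: both identify the same pole locus, set up the same bijection on pairs $\big((I,J),\overline\lambda\big)$ by simultaneously removing/adding the box $\square$ and moving the index $k$ between $I$ and $J$, and verify that the paired contributions to $\mu\cdot\Res_{x=x_0}\CalX(x;\bnu)$ cancel. Your write-up in fact supplies more detail than the paper, which records the pairwise cancellation as the outcome of ``a straightforward computation''; the two ratio identities you isolate are exactly the content of that computation.
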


As for $m=1$ and $\nu_1=0$, we have $\CalX(x;0)=\CalX(x)$, 
this result generalizes Proposition~\ref{regularity 1}.

\begin{proof}[Proof of Proposition~\ref{regularity 2}]
The proof is similar to the previous one. For generic 
$(\bnu, \ba, \ve_1,\ve_2)$, each summand of~\eqref{multiparameter qq-character} is a 
rational function in $x$ with simple poles, all belonging to the set
\begin{equation}
\label{eq:pole set 2}
  \bigsqcup_{1\leq b\leq N} \Big\{ a_{b}+\nu_r+i\cdot \ve_1+j\cdot \ve_2\, \Big| \,
  1\leq r\leq m\, ,\, i,j\in {\BZ}_{\geq 0}\Big\}\, .
\end{equation}
Moreover, for a fixed quadruple $(b,r,i,j)\in \{1,\ldots,N\}\times \{1,\ldots,m\}\times \BZ_{\geq 0}\times \BZ_{\geq 0}$ 
as in~\eqref{eq:pole set 2}, the $(I,J)$-th summand of 
$\CalX(x;\bnu)\,\vert_{\overline{\lambda}}$~\eqref{multiparameter qq-character} has a pole at
\beq
  x_0:=a_{b}+{\nu}_r+i\cdot \ve_1+j\cdot \ve_2
\eeq
iff either of the following two conditions hold:

\begin{itemize}
\item[(I)] 
$r\in I$ and $\square=(i+1,j+1)\in {\partial}_{-}\lambda^{(b)}$,

\item[(II)]
$r\in J$ and $\square=(i+1,j+1)\in {\partial}_{+} \lambda^{(b)}$.
\end{itemize}
Clearly, the map
\beq
  \Big\{(I,J),\overline{\lambda}\Big\}\mapsto
  \left\{ \Big(I':=I\backslash\{r\}, J':=J\sqcup\{r\}\Big),
      \overline{\lambda}':=\overline{\lambda}\backslash \square^{b}_{(i+1,j+1)}\right\}
\eeq
establishes a bijection between the loci of $\overline{\lambda}$ satisfying the first condition (I) and the loci 
of those satisfying the second condition (II), while a straightforward computation shows that: 
\begin{equation}
  \mu\,\vert_{\overline{\lambda}}\cdot \Res_{x=x_0}\, \CalX(x;\bnu)\,\vert_{\overline{\lambda}}\, = 
  \, -\, \mu\,\vert_{\overline{\lambda}'} \cdot 
  \Res_{x=x_0}\, \CalX(x;\bnu)\,\vert_{\overline{\lambda}'}\, .
\end{equation}
The regularity of $\left\langle\, \CalX(x;\bnu)\, \right\rangle_\mu$ follows.
\end{proof}

Finally, let us prove the analyticity in the orbifold/colored setup.
 
\begin{proof}[Proof of Proposition~\ref{regularity 3}]
It follows immediately from the proof of Proposition~\ref{regularity 1} presented above. 
The key observation is that, while each \emph{non-colored} residue of
$Y(x+\ve)\,\vert_{\overline{\lambda}}$ and $\frac{P(x)}{Y(x)\,\vert_{\overline{\lambda}'}}$ 
at $x=x_0$~\eqref{eq:pole candidate} is a product of elements from the lattice $\Lambda$~\eqref{lattice} 
and their inverses, the corresponding \emph{colored} residues of $Y_{{\omega}+1}(x+\ve)\,\vert_{\overline{\lambda}}$ 
and $\frac{P_{\omega}(x)}{Y_{\omega}(x)\,\vert_{\overline{\lambda}'}}$ at $x=x_0$ are zero 
unless $\mathfrak{S}_{x_0}={\omega}$, while in the latter case they are obtained from 
their \emph{non-colored} counterparts by disregarding all factors from $\Lambda$ 
with a nonzero $\BZ_N$-grading. Likewise, all elements of the lattice $\Lambda$ that appear in 
$\mu^{\rm orb}\,\vert_{\overline{\lambda}}$~\eqref{orbifold measure} are obtained from those that appear in 
$\mu\,\vert_{\overline{\lambda}}$~\eqref{eq:bulkmu} by disregarding all factors from $\Lambda$ 
with a nonzero $\BZ_N$-grading.

Therefore, for each pair $(\overline{\lambda},\overline{\lambda}')$ from the proof of Proposition~\ref{regularity 1},
see~\eqref{eq:lambda-bijection}, we get (cf.~\eqref{equality of residues}):
\begin{equation}
  \mu^{\rm orb}\,\vert_{\overline{\lambda}}\cdot 
  \Res_{x=x_0}\,  \CalX_{\omega}(x)\,\vert_{\overline{\lambda}} \, = \,
  -\, \mu^{\rm orb}\,\vert_{\overline{\lambda}'}\cdot 
  \Res_{x=x_0}\, \CalX_{\omega}(x)\,\vert_{\overline{\lambda}'} \, .
\end{equation}
The regularity of $\left\langle\, \CalX_{\omega}(x) \,\right\rangle_{\mu^{\rm orb}}$ follows.
\end{proof}


\section{Some technical computations}\label{sec:cftcalc}

The following equations are used in the proof of Theorem \ref{BPSCFT}:

\beq
\label{eq:tc1}
\begin{aligned}
 & {\tz}({\rz}) 
 \sum_{a=1}^{N} \left( \, \frac{\partial^2 v_i}{\partial {\rz}^a \partial {\tilde \rz}_a} \, , \,
     \frac{\partial v_i}{\partial {\rz}^a} \frac{\partial v_j}{\partial {\tilde \rz}_{a}} \, , \,
     \frac{\partial^2 {\rm log}{\Upsilon}}{\partial {\rz}^a \partial {\tilde \rz}_a} \, \right) \, = \, 
  \left( \, 1 - N v_i \, , \, v_{i} {\delta}_{i}^{j} - v_{i}v_{j} \, , \, 0 \, \right) \\
 & {\tz}({\rz}) \sum_{a=1}^{N} 
   \left( \, \frac{\partial {\rm log}{\Upsilon}}{\partial {\rz}^a} \frac{\partial v_i}{\partial {\tilde \rz}_{a}} \, , \,
          \frac{\partial {\rm log}{\Upsilon}}{\partial {\tilde \rz}_a} \frac{\partial v_i}{\partial {\rz}^{a}} \, 
          \right) \, = \, 
  \left( \, n_i - \left(\sum_{a=1}^N n_a\right) v_i \, , \, 
         {\tilde n}_i - \left(\sum_{a=1}^N {\tilde n}_a\right) v_i \, \right) \\  
 & {\tz}({\rz}) \sum_{a=1}^{N} 
   \frac{\partial {\rm log}{\Upsilon}}{\partial {\rz}^a} \frac{\partial {\rm log}{\Upsilon}}{\partial {\tz}_a} = 
   \sum_{a=1}^N \frac{n_a {\tilde n}_{a}}{v_a}
\end{aligned}
\eeq
and
\beq
\label{eq:tc2}
\begin{aligned}
  & \sum_{a,b = 1}^{N}  {\rz}^{b} J_{b}^{a} \left( \frac{\partial v_i}{\partial {\rz}^a} \right) = \, 
    v_{i} (  v_{i} + i - 2) + u_{i} (2 v_i - 1) \\
  & \sum_{a,b = 1}^{N}  {\rz}^{b} J_{b}^{a} \left( \frac{\partial {\rm log}{\Upsilon}}{\partial {\rz}^a} \right) = \, 
    \sum_{a=1}^{N} ( a- 1) { n }_{a} \\
  & \sum_{a,b = 1}^{N} {\rz}^{b} J_{b}^{a} ( {\rm log}{\Upsilon}) \frac{\partial v_i}{\partial {\rz}^a} =  
    v_{i} \left( \sum_{j=1}^{i-1} \left( {\gamma}_{j} - n_{j} \right) + 
                 \sum_{j=1}^{N-1} (n_{j} + {\tilde n}_{j} - {\gamma}_{j}) u_{j} \right) -\, {\tilde n}_{i} u_{i} \\
  & \sum_{a,b = 1}^{N} {\rz}^{b} J_{b}^{a} ( v_i ) \frac{\partial ({\rm log}{\Upsilon}) }{\partial {\rz}^a} =    
    \left( \sum_{a =1}^{i-1}  { n }_{a} \right) v_{i} - { n }_{i} u_{i}  \\
  & \sum_{a,b = 1}^{N} {\rz}^{b} J_{b}^{a} ( v_i ) \frac{\partial v_{j}}{\partial {\rz}^a} =    
    v_{i}v_{j} \left(  {\delta}_{j < i} + 2 u_{i} -  1 + v_{i} \right)  - u_{i} v_{i}  {\delta}_{i}^{j}  \\
  & \sum_{a,b = 1}^{N} {\rz}^{b} J_{b}^{a} ({\rm log}{\Upsilon}) \frac{\partial ({\rm log}{\Upsilon})}{\partial {\rz}^a} \ =
    \sum_{1\leq a\leq b\leq N} n_{a}n_{b} \ - \sum_{1\leq a\leq b\leq N-1} n_{a} \gamma_{b} \ - \
    \sum_{a=1}^N n_a {\tilde n}_{a} \frac{u_a}{v_a} 
\end{aligned}
\eeq
with $u_i$'s defined in~\eqref{eq:ucor} and satisfying the equality $v_{i+1}=u_i-u_{i+1}$ of~\emph{loc.cit}.



\begin{thebibliography} {XXX}

\bibitem{Alday:2009aq} 
  L.~Alday, D.~Gaiotto and Y.~Tachikawa,
  \emph{Liouville correlation functions from four-dimensional gauge theories},
  Lett.\ Math.\ Phys.\  {\bf 91} (2010), no.~2, 167--197,
  doi:10.1007/s11005-010-0369-5 
  [hep-th/0906.3219].
  
\bibitem{Alday:2010vg} 
  L.~Alday and Y.~Tachikawa,
  \emph{Affine $SL(2)$ conformal blocks from 4d gauge theories},
  Lett.\ Math.\ Phys.\  {\bf 94} (2010), no.~1, 87--114, 
  doi:10.1007/s11005-010-0422-4
  [hep-th/1005.4469].

\bibitem{Babujian:1993tm} 
  H.~M.~Babujian,
  \emph{Off-shell Bethe ansatz equation and $N$-point correlators in the $SU(2)$ WZNW theory},
  J.\ Phys.\ A {\bf 26} (1993), no.~21, 6981--6990, 
  doi:10.1088/0305-4470/26/23/037
  [hep-th/9307062].
  
\bibitem{BD} 
  A.~Beilinson and V.~Drinfeld, 
  \emph{Opers}, 
  preprint 1993 [math/0501398].

\bibitem{BPZ} 
  A.~Belavin, A.~Polyakov and A.~Zamolodchikov,
  \emph{Infinite conformal symmetry of critical fluctuations in two-dimensions},
  J.\ Statist.\ Phys.\  {\bf 34} (1984), no.~5-6, 763--774, 
  doi:10.1007/BF01009438,\\
  $\sim\sim\sim$, 
  \emph{Infinite conformal symmetry in two-dimensional quantum field theory},
  Nucl.\ Phys.\ B {\bf 241} (1984), no.~2, 333--380, 
  doi:10.1016/0550-3213(84)90052-X.
  
\bibitem{Belavin:1975fg} 
  A.~A.~Belavin, A.~M.~Polyakov, A.~S.~Schwartz and Y.~S.~Tyupkin,
 \emph{Pseudoparticle solutions of the Yang-Mills equations},
  Phys.\ Lett.\ {\bf 59B} (1975), no.~1, 85--87, 
  doi:10.1016/0370-2693(75)90163-X.  
  
\bibitem{Braverman:2004} 
  A.~Braverman, 
  \emph{Instanton counting via affine Lie algebras I: Equivariant $J$-functions of  
        (affine) flag manifolds and Whittaker vectors}, 
  Algebraic structures and moduli spaces, 113--132, CRM Proc.\ Lecture Notes, 38, 
  Amer.\ Math.\ Soc., Providence, RI, 2004 
  [math/0401409].
  
\bibitem{Bullimore:2014awa}
  M.~Bullimore, H.~Kim and P.~Koroteev,
  \emph{Defects and quantum Seiberg-Witten geometry},
  JHEP \textbf{05} (2015), 095,  
  doi:10.1007/JHEP05(2015)095
  [hep-th/1412.6081].
  
\bibitem{Cherkis:2000cj}
  S.~A.~Cherkis and A.~Kapustin,
  \emph{Nahm transform for periodic monopoles and ${\CalN}=2$ super Yang-Mills theory},
  Commun.\ Math.\ Phys.\ \textbf{218} (2001), 333--371, 
  doi:10.1007/PL00005558
  [hep-th/0006050].

\bibitem{Drukker:2009id} 
  N.~Drukker, J.~Gomis, T.~Okuda and J.~Teschner,
  \emph{Gauge theory loop operators and Liouville theory},
  JHEP {\bf 1002} (2010), Paper No.~057,
  doi:10.1007/JHEP02(2010)057
  [hep-th/0909.1105].

\bibitem{EN} 
  B.~Enriquez and N.~Nekrasov, 
  unpublished, 1992.
  
\bibitem{E1}
  P.~Etingof,
  {\em Quantum integrable systems and representations of Lie algebras},
  J.\ Math.\ Phys.\ {\bf 36} (1995), no.~6, 2636--2651  
  [hep-th/9311132].
  
\bibitem{Feigin:1994in} 
  B.~Feigin, E.~Frenkel and N.~Reshetikhin,
  \emph{Gaudin model, Bethe ansatz and critical level},
  Commun.\ Math.\ Phys.\  {\bf 166} (1994), no.~1, 27--62,  
  doi:10.1007/BF02099300
  [hep-th/9402022].
  
\bibitem{Felder}
  G.~Felder and M.~M{\"u}ller-Lennert, 
  \emph{Analyticity of Nekrasov partition functions}, 
  Commun.\ Math.\ Phys.\ {\bf 364} (2018), no.~2, 683--718,
  doi:10.1007/s00220-018-3270-1   
  [math/1709.05232].
  
\bibitem{ChainSaw} 
  M.~Finkelberg and L.~Rybnikov, 
  \emph{Quantization of Drinfeld Zastava in type A}, 
  J.\ Eur.\ Math.\ Soc.\ (JEMS) {\bf 16} (2014), no.~2, 235--271 
  [math/1009.0676]. 
  
\bibitem{FZ} 
  V.~A.~Fateev and A.~B.~Zamolodchikov, 
  \emph{Operator algebra and correlation functions in the two-dimensional Wess-Zumino}
  $SU(2) \times SU(2)$ \emph{Chiral Model}, 
  Sov.\ J.\ Nucl.\ Phys.\ {\bf 43} (1986), 657--664.
  
\bibitem{Frenkel:2015rda} 
  E.~Frenkel, S.~Gukov and J.~Teschner,
  \emph{Surface operators and separation of variables},
  JHEP {\bf 1601} (2016), no.~1, 179, 
  doi:10.1007/JHEP01(2016)179
  [hep-th/1506.07508].
  
\bibitem{Gaiotto:2009we} 
  D.~Gaiotto,
  \emph{N=2 dualities},
  JHEP {\bf 1208} (2012), Paper No.~034, 
  doi:10.1007/JHEP08(2012)034
  [hep-th/0904.2715].
  
\bibitem{AG}
  A.~Gerasimov, 
  \emph{In the case of the WZWN theory with the simple real Lie group $G$ it can be
        understood using the free field representations}, 
  unpublished remarks, 1991.
  
\bibitem{Gorsky:2017hro} 
  A.~Gorsky, B.~Le Floch, A.~Milekhin and N.~Sopenko,
  \emph{Surface defects and instanton-vortex interaction},
  Nucl.\ Phys.\ B {\bf 920} (2017), 122--156, 
  doi:10.1016/j.nuclphysb.2017.04.010
  [hep-th/1702.03330].
  
\bibitem{Gorsky:1995zq} 
  A.~Gorsky, I.~Krichever, A.~Marshakov, A.~Mironov and A.~Morozov,
  \emph{Integrability and Seiberg-Witten exact solution},
  Phys.\ Lett.\ B {\bf 355} (1995), 466--474, 
  doi:10.1016/0370-2693(95)00723-X
  [hep-th/9505035].

\bibitem{GukovWitten} 
  S.~Gukov and E.~Witten,
  \emph{Gauge Theory, ramification, and the geometric Langlands program},
  Current developments in mathematics, 2006, 35--180, Int.\ Press, Somerville, MA, 2008  
  [hep-th/0612073], \\
  $\sim\sim\sim$,   
  \emph{Rigid surface operators},
  Adv.\ Theor.\ Math.\ Phys.\  {\bf 14} (2010), no.~1, 87--177, 
  doi:10.4310/ATMP.2010.v14.n1.a3 
  [hep-th/0804.1561].  
  
\bibitem{Gukov:2008ve}
  S.~Gukov and E.~Witten,
  \emph{Branes and quantization},
  Adv.\ Theor.\ Math.\ Phys.\ {\bf 13} (2009), no.~5, 1445--1518, 
  doi:10.4310/ATMP.2009.v13.n5.a5
  [hep-th/0809.0305].
  
\bibitem{Haouzi:2019jzk}
  N.~Haouzi and C.~Kozçaz,
  \emph{Supersymmetric Wilson Loops, Instantons, and Deformed $W$-Algebras},
  preprint [hep-th/1907.03838].
  
\bibitem{Haouzi:2020yxy}
  N.~Haouzi and J.~Oh,
  \emph{On the quantization of Seiberg-Witten Geometry},
  JHEP {\bf 1} (2021), Paper No.~184, doi:10.1007/JHEP01(2021)184 
  [hep-th/2004.00654].
  
\bibitem{Jeong:2020uxz}
  S.~Jeong and N.~Nekrasov,
  \emph{Riemann-Hilbert correspondence and blown up surface defects}, 
  JHEP {\bf 12} (2020), Paper No.~6, doi:10.1007/JHEP12(2020)006 
  [hep-th/2007.03660].
  
\bibitem{JLN}
  S.~Jeong, N.~Lee and N.~Nekrasov,
  \emph{Intersecting defects in gauge theory, quantum spin chains, and Knizhnik-Zamolodchikov equation}, 
  JHEP {\bf 10} (2021), Paper No.~120, doi:10.1007/JHEP10(2021)120 
  [hep-th/2103.17186]. 
  
\bibitem{KT} 
  H.~Kanno and Y.~Tachikawa, 
  \emph{Instanton counting with a surface operator and the chain-saw quiver},
  JHEP (2011), Paper No.~119, doi:10.1007/JHEP06(2011)119
  [hep-th/1105.0357].

\bibitem{KW}
  A.~Kapustin and E.~Witten,
  \emph{Electric-Magnetic Duality And The Geometric Langlands Program},
  Commun.\ Num.\ Theor.\ Phys.\ \textbf{1} (2007), 1--236, 
  doi:10.4310/CNTP.2007.v1.n1.a1
  [hep-th/0604151].

\bibitem{Kimura:2015rgi}
  T.~Kimura and V.~Pestun,
  \emph{Quiver W-algebras},
  Lett.\ Math.\ Phys.\ \textbf{108} (2018), no.~6, 1351--1381, 
  doi:10.1007/s11005-018-1072-1
  [hep-th/1512.08533].

\bibitem{Klemm:1996bj} 
  A.~Klemm, W.~Lerche, P.~Mayr, C.~Vafa and N.~P.~Warner,
  \emph{Selfdual strings and N=2 supersymmetric field theory},
  Nucl.\ Phys.\ B {\bf 477} (1996), no.~3, 746--764, 
  doi:10.1016/0550-3213(96)00353-7
  [hep-th/9604034].
  
\bibitem{KZ} 
  V.~Knizhnik and A.~Zamolodchikov,
  \emph{Current algebra and Wess-Zumino model in two-dimensions},
  Nucl.\ Phys.\ B {\bf 247} (1984), no.~1, 83--103,
  doi:10.1016/0550-3213(84)90374-2.
  
\bibitem{Lee:2020hfu}
  N.~Lee and N.~Nekrasov,
  \emph{Quantum spin systems and supersymmetric gauge theories. Part I},
  JHEP \textbf{03} (2021), Paper No.~093, 
  doi:10.1007/JHEP03(2021)093
  [hep-th/2009.11199].
  
\bibitem{Losev:2003py} 
  A.~S.~Losev, A.~Marshakov and N.~A.~Nekrasov,
  \emph{Small instantons, little strings and free fermions},
  In, M.~Shifman (ed.) et al.: \emph{From fields to strings, Ian Kogan Memorial volume},
  vol.~1, 581--621 
  [hep-th/0302191].
  
\bibitem{Moore:1988qv} 
  G.~W.~Moore and N.~Seiberg,
  \emph{Classical and Quantum Conformal Field Theory},
  Commun.\ Math.\ Phys.\  {\bf 123} (1989), no.~2, 177--254,
  doi:10.1007/BF01238857.
  
\bibitem{Moore:1997pc} 
  G.~W.~Moore and E.~Witten,
  \emph{Integration over the $u$-plane in Donaldson theory},
  Adv.\ Theor.\ Math.\ Phys.\  {\bf 1} (1997), 298--387, 
  doi:10.4310/ATMP.1997.v1.n2.a7
  [hep-th/9709193].

\bibitem{NakALE} 
  H.~Nakajima, 
  \emph{Moduli spaces of anti-self-dual connections on ALE gravitational instantons},
  Invent.\ Math.\ {\bf 102} (1990), no.~2, 267--303.

\bibitem{NakHilb} 
  H.~Nakajima, 
  \emph{Lectures on Hilbert schemes of points on surfaces}, 
  University Lecture Series, vol.~18, 
  American Mathematical Society, Providence, RI, 1999.
  
\bibitem{Nekrasov:1995nq} 
  N.~Nekrasov,
  \emph{Holomorphic bundles and many body systems},
  Commun.\ Math.\ Phys.\  {\bf 180} (1996), no.~3, 587--603, 
  doi:10.1007/BF02099624
  [hep-th/9503157].

\bibitem{Nekrasov:2002qd} 
  N.~A.~Nekrasov,
  \emph{Seiberg-Witten prepotential from instanton counting},
  Adv.\ Theor.\ Math.\ Phys.\  {\bf 7} (2003), no.~5, 831--864, 
  doi:10.4310/ATMP.2003.v7.n5.a4
  [hep-th/0206161].
  
\bibitem{Nekrasov:2003rj} 
  N.~Nekrasov and A.~Okounkov,
  \emph{Seiberg-Witten theory and random partitions},
  Prog.\ Math.\  {\bf 244} (2006), 525--596, 
  doi:10.1007/0-8176-4467-9$\_$15
  [hep-th/0306238].
  
\bibitem{NekLisbon}
  N.~Nekrasov, 
  \emph{Localizing gauge theories}, 
  14th International Congress on Mathematical Physics 2003, Lisbon, 
  J.C.~Zambrini (Ed.), World Scientific (2005), 645--654.
  
\bibitem{NN2004}
  N.~Nekrasov, 
  \emph{On the BPS/CFT correspondence}, 
  Lecture at the University of Amsterdam string theory group seminar, Feb. 3, 2004, \\
 $\sim\sim\sim$, 
  \emph{2d CFT-type equations from 4d gauge theory}, 
  Lecture at the ``Langlands Program and Physics" conference, 
  IAS, Princeton, March 8-10, 2004.

\bibitem{Zth}
  $\sim\sim\sim$, 
  \emph{{\`A} la recherche de la $M$-theorie perdue. Z theory: Chasing $m/f$-theory},
  Comptes Rendus Physique {\bf 6} (2005), no.~2, 261--269 
  [hep-th/0412021].
  
\bibitem{NekBPSCFT}
  $\sim\sim\sim$, 
  \emph{Non-Perturbative Schwinger-Dyson equations: from BPS/CFT correspondence 
        to the novel symmetries of quantum field theory}, 
  Phys.-Usp.\ {\bf 57} (2014), 133--149, doi:10.1142/9789814616850$\_$0008, \\
  $\sim\sim\sim$, 
  \emph{BPS/CFT correspondence: non-perturbative Dyson-Schwinger equations and $qq$-characters}, 
  JHEP {\bf 1603} (2016), Paper No.~181, doi:10.1007/JHEP03(2016)181
  [hep-th/1512.05388],\\
  $\sim\sim\sim$, 
  \emph{BPS/CFT correspondence II: instantons at crossroads, moduli and compactness theorem}, 
  Adv.\ Theor.\ Math.\ Phys.\ {\bf 21} (2017), no.~2, 503--583,
  [hep-th/1608.07272], \\
  $\sim\sim\sim$, 
  \emph{BPS/CFT Correspondence III: Gauge Origami partition function and $qq$-characters}, 
  Commun.\ Math.\ Phys.\ {\bf 358} (2018), no.~3, 863--894, 
  [hep-th/1701.00189],\\
  $\sim\sim\sim$, 
  \emph{BPS/CFT correspondence IV: sigma models and defects in gauge theory}, 
  Lett.\ Math.\ Phys.\ {\bf 109} (2019), no.~3, 579--622, 
  [hep-th/1711.11011],\\
  $\sim\sim\sim$, 
  \emph{BPS/CFT correspondence V: BPZ and KZ equations from $qq$-characters}, 
  preprint [hep-th/1711.11582]. 

\bibitem{NikBlowup} 
  N.~Nekrasov, 
  \emph{Blowups in BPS/CFT correspondence, and Painlev{\'e} VI}, 
  preprint [hep-th/2007.03646].
  
  
\bibitem{Nekrasov:2012xe}
  N.~Nekrasov and V.~Pestun,
  \emph{Seiberg-Witten geometry of four dimensional ${\CalN}=2$ quiver gauge theories},
  preprint [hep-th/1211.2240].

\bibitem{NPS}
  N.~Nekrasov, V.~Pestun and S.~Shatashvili,
  \emph{Quantum geometry and quiver gauge theories},
  Commun.\ Math.\ Phys.\ \textbf{357} (2018), no.~2, 519--567, 
  doi:10.1007/s00220-017-3071-y
  [hep-th/1312.6689].
  
\bibitem{NRS}   
  N.~Nekrasov, A.~Rosly and S.~Shatashvili,
  \emph{Darboux coordinates, Yang-Yang functional, and gauge theory},
  Nucl.\ Phys.\ Proc.\ Suppl.\  {\bf 216} (2011), 69--93, 
  doi:10.1016/j.nuclphysbps.2011.04.150
  [hep-th/1103.3919].
 
\bibitem{NW}
  N.~Nekrasov and E.~Witten, 
  \emph{The omega deformation, branes, integrability, and Liouville theory},
  JHEP {\bf 1009} (2010), Paper No.~092, 
  doi:10.1007/JHEP09(2010)092
  [hep-th/1002.0888].  
  
\bibitem{OP}
  M.~Olshanetsky, A.~Perelomov, 
  \emph{Explicit solution of the Calogero model in the classical case and geodesic flows on symmetric spaces of zero curvature},  
  Lett.\ Nuovo Cimento (2) {\bf 16} (1976), no.~11, 333--339.
  
\bibitem{Pestun:2007rz} 
  V.~Pestun,
  \emph{Localization of gauge theory on a four-sphere and supersymmetric Wilson loops},
  Commun.\ Math.\ Phys.\  {\bf 313} (2012), no.~1, 71--129, 
  doi:10.1007/s00220-012-1485-0
  [hep-th/0712.2824].
  
\bibitem{PBW}
  H.~Poincar{\'e}, 
  \emph{Sur les groupes continus}, 
  Trans.~Cambr.~Philos.\ Soc.\ {\bf 18} (1900), 220--225,\\
  E.~Witt, 
  \emph{Treue Darstellung Liescher Ringe}, 
  J.~Reine Angew.\ Math.\ {\bf 177} (1937), 152--160,\\
  G.~D.~Birkhoff, 
  \emph{Representability of Lie algebras and Lie groups by matrices}, 
  Ann.\ of Math.\ {\bf 38} (1937), no.~2, 526--532.
  
\bibitem{Seiberg:1996nz}
  N.~Seiberg and E.~Witten,
  \emph{Gauge dynamics and compactification to three-dimensions},
  preprint [hep-th/9607163].
  
\bibitem{Vafa:1994tf} 
  C.~Vafa and E.~Witten,
  \emph{A strong coupling test of S-duality},
  Nucl.\ Phys.\ B {\bf 431} (1994), no.~1-2, 3--77, 
  doi:10.1016/0550-3213(94)90097-3 
  [hep-th/9408074]. 
  
\bibitem{Witten5} 
  E.~Witten, 
  \emph{Solutions of four-dimensional field theories via $M$-theory},
  Nucl.\ Phys.\ B {\bf 500} (1997), no.~1-3, 3--42, 
  doi:10.1016/S0550-3213(97)00416-1
  [hep-th/9703166].
  
\bibitem{Witten6} 
  E.~Witten, 
  \emph{Analytic continuation Of Chern-Simons theory}, 
  Chern-Simons gauge theory: 20 years after, 347--446, AMS/IP Stud.\ Adv.\ Math., 
  50, Amer.\ Math.\ Soc., Providence, RI, 2011 
  [hep-th/1001.2933].
  
\bibitem{Witten:2011zz}
  E.~Witten,
  \emph{Fivebranes and knots},
  Quantum Topol.\ {\bf 3} (2012), no.~1, 1--137, doi:10.4171/QT/26 
  [hep-th/1101.3216].

\bibitem{Wyllard:2009hg}
  N.~Wyllard,
  \emph{$A_{N-1}$ conformal Toda field theory correlation functions from conformal 
        $N = 2$ SU(N) quiver gauge theories},
  JHEP \textbf{11}, 002 (2009), 
  doi:10.1088/1126-6708/2009/11/002
  [hep-th/0907.2189].

\bibitem{ZW}
  A.~B.~Zamolodchikov, 
  \emph{Infinite extra symmetries in two-dimensional conformal quantum field theory},
  Theor.\ Math.\ Physics (in Russian), {\bf 65} (1985), no.~3, 347--359, 
  ISSN 0564-6162, MR 0829902.
  
\bibitem{Zamolodchikov:1995aa} 
  A.~B.~Zamolodchikov and A.~B.~Zamolodchikov,
  \emph{Conformal bootstrap in Liouville field theory},
  Nucl.\ Phys.\ B {\bf 477}, (1996), no.~2, 577--605, 
  doi:10.1016/0550-3213(96)00351-3 
  [hep-th/9506136].  
  
\end{thebibliography}
\end{document}